\newtheorem*{rep@theorem}{\rep@title}
\newcommand{\newreptheorem}[2]{%
\newenvironment{rep#1}[1]{%
 \def\rep@title{#2 \ref{##1}}%
 \begin{rep@theorem}}%
 {\end{rep@theorem}}}
\newtheorem{theorem}{Theorem}[section]
\newtheorem*{theorem*}{Theorem}
\newtheorem{lemma}[theorem]{Lemma}
\newtheorem{claim}[theorem]{Claim}
\newtheorem{corollary}[theorem]{Corollary}
\newtheorem{definition}[theorem]{Definition}
\theoremstyle{remark}
\newtheorem{remark}[theorem]{Remark}
\newcommand{\beq}{\begin{eqnarray}}
\newcommand{\eeq}{\end{eqnarray}}
\newcommand{\ketbra}[2]{\ket{#1}\!\bra{#2}}
\newcommand{\proj}[1]{\ketbra{#1}{#1}}
\newcommand{\Tr}{\mbox{\rm Tr}}
\newcommand{\Id}{\ensuremath{\mathop{\rm Id}\nolimits}}
\newcommand{\Es}[1]{\textsc{E}_{#1}}
\newcommand{\epr}{\mathrm{EPR}}
\newcommand{\EPRd}{\ket{\epr_d}}
\newcommand{\1}{^{-1}}
\newcommand{\dagg}{^{\dagger}}
\renewcommand{\th}{\textsuperscript{th}}
\newcommand{\reg}[1]{{\textsf{#1}}}
\newcommand{\Ext}{\textsc{Ext}}
\newcommand{\mH}{\mathcal{H}}
\newcommand{\C}{\ensuremath{\mathbb{C}}}
\newcommand{\R}{\ensuremath{\mathbb{R}}}
\newcommand{\Z}{\ensuremath{\mathbb{Z}}}
\newcommand{\eps}{\varepsilon}
\newcommand{\s}{\sigma}
\newcommand{\w}{\omega}
\DeclareMathOperator{\cor}{cor}
\newcommand{\setft}[1]{\mathrm{#1}}
\newcommand{\pos}[1]{\setft{Pos}\left(#1\right)}
\newcommand{\pattern}{\ensuremath{\mathcal{P}}}
\newcommand{\spec}{\ensuremath{\mathcal{S}}}
\newcommand{\grid}{\textsc{grid}}
\newcommand{\bx}{\ensuremath{\mathrm{Box}}}
\newcommand{\bad}{\textsc{bad}}
\newcommand{\GHZ}{\textsc{GHZ}}
\newcommand\Prob[1]{\Pr\left( #1 \right)}
\newcommand\floor[1]{\left\lfloor #1 \right\rfloor}
\newcommand{\m}[1]{\mathcal{#1}}
	\newcommand{\G}{\Gamma}
	\newcommand{\seq}{\subseteq}
	\renewcommand{\set}[1]{\left\{#1\right\}}
\newcommand{\paren}[1]{\left(#1\right)}
\newcommand{\abs}[1]{\left\vert#1\right\vert}
\newcommand{\GHZk}[1]{\GHZ^{#1}}
\newcommand{\game}{\mathcal{G}}
\newcommand{\ol}[1]{\overline{#1}}
\newcommand{\Hmin}{H_{\text{min}}}
\begin{document}

\title{Trading locality for time: certifiable randomness from low-depth circuits}
\author{
Matthew Coudron 
\thanks{
University of Waterloo, Canada.
\texttt{mcoudron@gmail.com}
}
\and 
Jalex Stark 
\thanks{
University of California Berkeley, USA.
\texttt{jalex@cs.berkeley.edu}
}
\and 
Thomas Vidick
\thanks{
California Institute of Technology, USA.
\texttt{vidick@cms.caltech.edu}
}
}

\maketitle

\begin{abstract}
The generation of certifiable randomness is the most fundamental information-theoretic task that meaningfully separates quantum devices from their classical counterparts. We propose a protocol for exponential certified randomness expansion using a single quantum device. The protocol calls for the device to implement a simple quantum circuit of constant depth on a 2D lattice of qubits. The output of the circuit can be verified classically in linear time, and is guaranteed to contain a polynomial number of certified random bits assuming that the device used to generate the output operated using a (classical or quantum) circuit of sub-logarithmic depth. This assumption contrasts with the locality assumption used for randomness certification based on Bell inequality violation  and more recent proposals for randomness certification based on computational assumptions. Furthermore, to demonstrate randomness generation it is sufficient for a device to sample from the ideal output distribution within constant statistical distance.

Our procedure is inspired by recent work of Bravyi et al. (Science 2018), who introduced a relational problem that can be solved by a constant-depth quantum circuit, but provably cannot be solved by any classical circuit of sub-logarithmic depth. We develop the discovery of Bravyi et al. into a framework for robust randomness expansion.  Our results leads to a new proposal for a demonstrated quantum advantage that has some advantages compared to existing proposals. First, our proposal does not rest on any complexity-theoretic conjectures, but relies on the physical assumption that the adversarial device being tested implements a circuit of sub-logarithmic depth. Second, success on our task can be easily verified in classical linear time. Finally, our task is more noise-tolerant than most other existing proposals that can only tolerate multiplicative error, or require additional conjectures from complexity theory; in contrast, we are able to allow a small constant additive error in total variation distance between the sampled and ideal distributions.  
\end{abstract}

\section{Introduction}
\label{sec:introduction}


A fundamental point of departure between quantum mechanics and classical theory is that the former is non-deterministic: quantum mechanics, through the Born rule, posits the existence of experiments that generate \emph{intrinsic randomness}. This observation leads to the simplest and most successful ``test of quantumness'' to have been designed and implemented: the Bell test~\cite{Bell:64a}. Far beyond its role as a test of the foundations of quantum mechanics, the Bell test has become a fundamental building block in quantum information, from protocols for quantum cryptography (e.g. device-independent quantum key distribution~\cite{Eke91,VV13prl}) to complexity theory (e.g. delegated quantum computation~\cite{ReichardtUV13nature}, multiprover interactive proof systems~\cite{CHTW04}) and much more~\cite{brunner2014bell}. 
Yet, while a loophole-free implementation of a Bell test has been demonstrated~\cite{hensen2015experimental,giustina2015significant,shalm2015strong} it remains a challenging experimental feat, which unfortunately leaves its promising applications wanting (here "loophole-free" refers to a stringent set of experimental standards which ensure that all required assumptions have been verified ``beyond reasonable doubt''). The increasingly powerful quantum devices that are being experimentally realized tend to be single-chip, and do not have the ability to implement loophole-free Bell tests. The task of devising convincing ``tests of quantumness'' for such devices is challenging. 

Until recently the only proposal for such tests was the design of so-called ``quantum supremacy experiments''~\cite{harrow2017quantum}, which specify classical sampling tasks that can in principle be implemented by a mid-scale quantum device, but cannot be simulated by any efficient classical randomized algorithm (under somewhat standard computational assumptions~\cite{aaronson2017complexity,harrow2017quantum}). These proposals share a number of well-recognized limitations. Firstly, while the sampling part of the process can be done efficiently on a quantum computer, verifying that the quantum computer is sampling from a hard distribution requires a computational effort which scales exponentially in the number of qubits. Secondly, their experimental realization is hindered by a generally poor tolerance to errors in the implementation, which is compounded by the necessity to implement circuits with relatively large (say, at least $\sqrt{N}$ for an $N\times N$ grid) depth. Combined with the resort to complexity-theoretic assumptions for which there is little guidance in terms of concrete parameter settings (see however~\cite{dalzell2018many}), this has led to an ongoing race in efficient simulations~\cite{chen201864,huang2018explicit,markov2018quantum}. Indeed, the proposals operate in a limited computational regime, requiring a machine with, say, at least 50 qubits (to prevent direct clasical simulation) but at most 70 qubits (so that verification can be performed in a reasonable amount of time) --- leaving open the question of what to do with a device with more than, say, 100 qubits. At a more conceptual level, the proposals are based on computational tasks that appear arbitrary (such as the implementation of a random quantum circuit from a certain class). In particular, they do not lead to any further characterization of the successful device, that could be used to e.g. build a secure delegation protocol or even simply certify a simple property such as the preparation of a specific quantum state or the implementation of a certain class of measurements. 

\medskip

We propose a different kind of experiment, or ``test of quantumness'', for large but noisy quantum devices, that is inspired from recent work of Bravyi et al. on the power of low-depth quantum circuits~\cite{bravyi2017quantum}.
Our test is applicable in a regime where the device has a large number of qubits, but may only have the ability to implement circuits of low (constant) depth, due e.g. to a limited gate fidelity. We argue that the test overcomes the main limitations outlined above: it generates useful outcomes (certifiably random bits), it is easily verifiable (in classical linear time), and it is robust to a small amount of error (it is sufficient to generate outputs within constant statistical distance from the ideal distribution\footnote{In fact, even less is needed; see the description of the protocol.}). 
The test does not require any assumption from complexity theory, but instead considers a novel physical assumption (introduced in~\cite{bravyi2017quantum}): that the device implements a circuit whose depth is at most a small constant times the logarithm of the number of qubits. Intuitively, this assumption trades off locality (as required by the Bell test) for time (as measured by circuit depth). It is particularly well-suited to quantum devices for which the number of qubits can be made quite large, but the gate fidelity remains low, limiting the depth of a circuit that can be implemented. Informally, we show the following.

\begin{theorem}\label{thm:main-informal}
Let $N$ be an arbitrary integer and $M=N^2$. There exists universl constants $c,d>0$, a distribution $\mathcal{D}$ on $\{0,1\}^{M}$, and an efficiently verifiable relation $\mathcal{R}\subseteq \{0,1\}^M \times \{0,1\}^M$ such that the followinog holds. Let $\m C$ be a (classical or quantum) circuit with gates of constant fan-in and depth $D \leq c \log M$ such that the output of the circuit satisfies the relation non-negligibly often, i.e.
\[\Pr_{x\sim \mathcal{D}}\big[(x,\mathcal{C}(x))\in \mathcal{R}\big]\,=\,\Omega\big(M^{-1/10}\big)\;.\]
Then $\m C$ achieves \emph{exponential randomness expansion}: a sample from $\mathcal{D}$ can be obtained using $O(\log^2 M)$ uniformly random bits and outputs of the circuit must have $\Omega(M^{d})$ bits of entropy.
\end{theorem}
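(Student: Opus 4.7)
My plan is to build on the relational problem of Bravyi, Gosset, and König~\cite{bravyi2017quantum}, whose solution by a constant-depth quantum circuit is the starting point for our construction. Their relation $\mathcal{R}_0$ is defined on a 2D grid of $M = N^2$ bits, arising from the Hidden Linear Function problem, and crucially admits a rich combinatorial structure: for each input $x$, the set of valid outputs $\{z : (x,z) \in \mathcal{R}_0\}$ is large and spread out geometrically across the grid. I would take the distribution $\mathcal{D}$ to be uniform over instances drawn from a polylogarithmically seeded pseudorandom family of such inputs, so sampling requires only $O(\log^2 M)$ uniform random bits; the relation $\mathcal{R}$ is $\mathcal{R}_0$ itself, verifiable in classical linear time by checking the local parity constraints defining the HLF instance.

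The heart of the proof is an entropy lower bound on any low-depth $\mathcal{C}$ that solves $\mathcal{R}$ with probability $\Omega(M^{-1/10})$. The main tool is a \emph{lightcone partitioning}: with constant fan-in $k$, each output wire of $\mathcal{C}$ depends on at most $k^D = M^{O(c)}$ input and ancillary/measurement registers. For $c$ chosen sufficiently small, I can partition the $N \times N$ output grid into $\Omega(M^{1 - O(c)})$ disjoint ``cells'' of constant geometric size so that the lightcones of distinct cells are pairwise disjoint in both the input register and the ancilla/measurement register. Conditioned on the randomness outside any cell's lightcone, that cell's output is independent of everything else in the circuit.

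Next, I would invoke a localized version of the Bravyi--Gosset--König obstruction: within each cell, the set of outputs that are consistent with a global solution contains at least two distinct completions, because the HLF relation admits multiple locally compatible answers. A deterministic low-depth map restricted to a cell cannot, on a random input, select the correct completion more than a bounded fraction of the time; hence the conditional distribution of the circuit's output on each cell must carry at least $\Omega(1)$ bits of min-entropy. Summing over the $\Omega(M^{1 - O(c)})$ cells and setting $d = 1 - O(c)$ yields $\Omega(M^{d})$ bits of min-entropy in the global output, as claimed.

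The main obstacle I anticipate is making the per-cell entropy bound simultaneously quantitative and robust. First, because the global success probability is only $M^{-1/10}$, I cannot assume that most cells are solved correctly; I instead need a Markov-style counting argument guaranteeing that on an $\Omega(M^{-1/10})$ fraction of cells the circuit is conditionally correct, and that even a modest fraction of ``good'' cells suffices to accumulate polynomially many random bits. Second, the quantum case requires a proper ``quantum lightcone'' lemma: although quantum amplitudes can spread, each output qubit's reduced state is still determined by a neighbourhood of size $M^{O(c)}$ of input and ancilla qubits, and this must be combined with a local, noise-tolerant version of the BGK lower bound to conclude that measurement outcomes inside each cell are genuinely unpredictable. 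Establishing this quantitative local hardness --- essentially a small-scale, robust Bravyi--Gosset--König statement --- is the central technical challenge, and it is what allows us to tolerate constant additive error in total variation distance rather than requiring multiplicative closeness to the ideal distribution.
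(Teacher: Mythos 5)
There is a genuine gap, and it sits exactly where you locate ``the central technical challenge'': your argument never actually certifies entropy, it only argues hardness. The step ``a deterministic low-depth map restricted to a cell cannot select the correct completion more than a bounded fraction of the time, hence the conditional distribution of the circuit's output on each cell must carry $\Omega(1)$ bits of min-entropy'' conflates two different statements. Inability of classical low-depth circuits to be correct gives soundness, not randomness: a circuit that is simply \emph{wrong} in a cell can still be deterministic. Entropy can only be extracted from the \emph{success} event, i.e.\ from the fact that any strategy (classical or quantum, possibly using pre-shared entanglement and holding a purification) that wins beyond the classical threshold must be intrinsically random. That requires a quantitative statement of the form of a min-tradeoff function, and it must hold conditioned on quantum side information $\reg{E}$ correlated with the circuit's initial state --- which also breaks your lightcone-partition independence claim: the theorem allows the circuit to act on an arbitrary auxiliary entangled state, so outputs in geometrically disjoint cells are \emph{not} conditionally independent, and a constant-depth quantum circuit can in any case distribute entanglement across the whole grid by parallel entanglement swapping. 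Finally, the global success probability being only $\Omega(M^{-1/10})$ means you are conditioning on a rare, highly non-product event; a per-cell Markov counting argument does not let you add up per-cell entropies under that conditioning without an i.i.d.-type assumption you do not have.

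The paper resolves all three issues by a different route: instead of proving a ``localized, robust BGK'' statement for the HLF relation, it plants $r$ explicit copies of the Mermin GHZ game into the grid (input locations joined by ``stars''), uses the lightcone/causal-pattern machinery to show that any sub-logarithmic-depth circuit on a typical planted input is \emph{exactly} simulated by $r$ groups of spacelike-separated players in a rotated, stretched GHZ game (Lemma~\ref{lem:circuit-soundness}), imports a known min-tradeoff function for GHZ conditioned on quantum side information (Lemma~\ref{lem:ghz-randomness}, Corollary~\ref{cor:min-ghz}), and then accumulates entropy across the $r$ planted games using the Entropy Accumulation Theorem framework (Lemma~\ref{lem:rand-stab}), which is precisely the tool that handles conditioning on a low-probability success event without independence. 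The $O(\log^2 M)$ input randomness comes from derandomizing the pattern distribution with a space-bounded PRG plus a biased (small test probability $p$) query distribution, not from pseudorandom HLF instances. Your sketch could in principle be completed, but only by supplying exactly the missing ingredients the paper substitutes for: a device-independent per-game randomness bound valid against quantum side information, and an accumulation argument replacing per-cell independence; also note the resulting entropy rate is a small polynomial $M^{d}$ (the paper takes $r = O(N^{2/7-2\eta})$ planted games), not $M^{1-O(c)}$ as you claim.
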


We refer to Theorem~\ref{thm:main-randomness} for a precise statement. In particular, the output entropy is quantified using the quantum conditional min-entropy, conditioned on the inputs to the circuit and quantum side information that may be correlated with the initial state of the circuit. 
 Note that the resulting test is far more robust than most existing proposals, that require the output distribution to be multiplicatively close to the target distribution. In contrast, in our case it is sufficient to hit a certain target (the relation $\mathcal{R}$, that itself is very permissive) an inverse polynomial fraction of the time!

Aside from the application to randomness expansion, Theorem~\ref{thm:main-informal} strengthens the main result of Bravyi et al.~\cite{bravyi2017quantum} in multiple ways. Bravyi et al. provide a relation such that for any classical circuit of sufficiently low depth, there \emph{exists} an input such that the circuit must return an output that satisfies the relation with probability bounded away from $1$. In contrast, we point out the existence of an efficiently sampleable distribution on inputs such that, for any classical low-depth circuit, we know that \emph{on average} over the choice of an input the circuit returns an output that satisfies the relation with at most small probability.  While this improvement follows using a simple extension of the arguments in~\cite{bravyi2017quantum}, it is key to the practical relevance of the scheme. In addition we make a further improvement and address the following question left open in~\cite{bravyi2017quantum}: how small can the maximum success probability of all low-depth classical circuits (i.e. the ``soundness") be made? 

\begin{theorem}[Exponential soundness]\label{cor:exp-bgk-expsoundess}
	Let $N$ be an arbitrary integer and $M=N^2$. There exists a distribution $\mathcal{D}$ on $\{0,1\}^{M}$ and an efficiently verifiable relation $\mathcal{R}\subseteq \{0,1\}^M \times \{0,1\}^M$ such that the following holds, for universal constants $c,c'>0$: 
	\vspace{-0.2cm}
	\begin{itemize} \itemsep0em
		\item (Completeness) There exists a depth-$3$ geometrically local (in 2D) quantum circuit such that for any input $x$ in the support of $\mathcal{D}$ the circuit samples a $y$ such that $(x,y)\in \mathcal{R}$ with probability $1$.
		\item (Soundness) For any classical circuit of depth $D\leq c\log N$, the probability that $(x,y)\in \mathcal{R}$, for $x\sim\mathcal{D}$ and $y$ the output of the circuit on input $x$, is $O(\exp(-N^{c'}))$.
	\end{itemize}
\end{theorem}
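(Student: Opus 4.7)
The plan is to amplify the constant-gap soundness underlying Theorem~\ref{thm:main-informal} into exponentially small soundness by running many independent copies of the base relation on well-separated sub-regions of the $N\times N$ grid, exploiting the bounded causal (``light-cone'') structure of shallow classical circuits. Concretely, partition the grid into $K$ disjoint $m\times m$ sub-grids placed with pairwise separation at least $r$; set $m=N^{\alpha}$ and $r=N^{\beta}$ for small constants $\alpha,\beta>0$, yielding $K=\Theta(N^{2-2\alpha})$ sub-grids. On each sub-grid put an independent instance of the base relation from~\cite{bravyi2017quantum} (e.g., the 2D Hidden Linear Function relation that underlies Theorem~\ref{thm:main-informal}), let $\m{D}$ be the product of the base input distributions across the $K$ sub-grids (the remaining input bits are fixed to a default value), and declare $\m{R}$ to hold iff every local relation is satisfied simultaneously.

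Completeness is immediate: run the base depth-$3$ 2D-local quantum circuit in parallel on each sub-grid, which gives a single depth-$3$ 2D-local quantum circuit satisfying all $K$ local relations with probability~$1$. For soundness, fix any classical circuit $\m{C}$ of constant fan-in and depth $D\leq c\log N$. Every output bit depends on at most $2^{D}$ input bits, all within grid-distance $2^{D}$ of it; choosing $c$ small enough that $2^{D+2}<r$ guarantees that the outputs located inside sub-grid $i$ depend only on input bits inside sub-grid $i$ and its immediate (disjoint) buffer. Since $\m{D}$ is a product distribution across sub-grids, the output tuples associated with distinct sub-grids are therefore mutually independent random variables.

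On each sub-grid the restriction of $\m{C}$ is itself a classical circuit of depth at most $D$, and for $D\leq c\log N\leq c'\log m$ (take $c<\alpha c'$) the base soundness theorem bounds the local success probability by $1-\delta$ for a universal constant $\delta>0$. Independence across sub-grids then yields
\[
\Pr_{x\sim\m{D}}\bigl[(x,\m{C}(x))\in\m{R}\bigr]\,\leq\,(1-\delta)^{K}\,\leq\,\exp(-\delta K)\,=\,\exp\bigl(-\Omega(N^{2-2\alpha})\bigr),
\]
which is $O(\exp(-N^{c'}))$ for any constant $c'<2-2\alpha$; setting $c=\min(\alpha c',\beta)/2$ gives the depth bound in the theorem.

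The main technical obstacle is to confirm that the base soundness gap $\delta$ is a genuinely \emph{universal} positive constant, independent of the sub-grid size~$m$. The Bravyi et al.\ analysis proceeds via a light-cone/Fourier argument on stabilizer outputs, and one must verify that their bound is of the form $1-\Omega(1)$ rather than $1-1/\poly(m)$, and that it extends from a single worst-case input to the \emph{average} input drawn from the base distribution. The average-case statement is exactly the content of Theorem~\ref{thm:main-informal}; the constant gap can either be read out of the existing proof or, if the native bound is weaker than constant, recovered by applying a local amplification step within each sub-grid (e.g., $O(\log m)$ nested copies combined by a shallow majority) before running the spatial parallel repetition above. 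This local amplification only perturbs $\alpha$ by a logarithmic factor and therefore does not affect the final exponent $c'$.
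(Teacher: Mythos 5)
There is a genuine gap, and it sits at the heart of your soundness argument: you assume the adversarial classical circuit is \emph{geometrically} local (``every output bit depends on at most $2^D$ input bits, all within grid-distance $2^D$ of it''). The theorem only constrains the adversary's depth and fan-in; its wires may connect arbitrarily distant grid locations, so an output written inside sub-grid $i$ may depend on input bits in any other sub-grid. (Only the honest quantum circuit is required to be 2D-local.) Once geometric locality is dropped, your key independence claim fails: the success events of the $K$ sub-grid instances are all functions of the full input, cross-wiring lets them share input bits, and conditioning on success in other instances skews the input distribution seen by instance $i$, so neither the product bound $(1-\delta)^K$ nor even a martingale-style bound $\Pr[E_i \mid E_1,\dots,E_{i-1}]\le 1-\delta$ follows from the per-instance average-case soundness. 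In the worst case the adversary could correlate its successes and failures across sub-grids, leaving you only with the single-instance bound $1-\delta$, which gives no exponential amplification. A fixed, publicly known partition into sub-grids is exactly what the adversary can exploit by wiring across it.

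This is precisely the difficulty the paper's machinery is built to handle: the game instances are \emph{planted at random locations} (the input-pattern distribution $\mathcal{D}^{(r)}$), and a lightcone/concentration analysis (the bounded-lightcone lemma plus the martingale argument showing that, with probability $1-\exp(-\Omega(r))$, a constant fraction of the planted pairs form an \spec-causal subpattern) shows that for any shallow circuit with arbitrary wiring, most planted instances end up causally isolated from one another. Only then does the reduction of Lemma~\ref{lem:circuit-soundness} convert the circuit into a strategy for genuinely non-communicating groups of players in the parallel-repeated GHZ game, where the classical value $3/4$ multiplies across the distinct groups, yielding $(3/4)^{\Omega(r)}$ with $r=\Theta(N^{\eta})$. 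Your spatial-parallel-repetition idea is the right intuition for the honest (local) side and would be fine against geometrically local adversaries, but to prove the stated theorem you need either to randomize the instance locations and prove a causality statement of this kind, or some other mechanism that defeats non-local wiring; as written the proposal proves a strictly weaker statement. (Your fallback discussion of whether the BGK gap is a universal constant and the majority-amplification patch is secondary by comparison, but note also that the worst-case-to-average-case step is itself part of what the random planting provides.)
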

	\vspace{-0.1cm}
Note that the improvement in soundness between our two results is enabled by the fact that in Theorem~\ref{cor:exp-bgk-expsoundess} it is no longer the case that it is possible to sample from $\mathcal{D}$ using poly-logarithmically many bits.     Arguably, good soundness guarantees are crucial to a successful experimental demonstration: due to the presence of noise the quantum device cannot be expected to succeed with probability arbitrarily close to $1$, so that the lower the performance of classical circuits, the lower the requirements on the quantum circuit as well.  

\paragraph{Discussion.}
We comment on the depth assumption that underlies our results, and their potential for a practical demonstration of a quantum advantage (a.k.a. ``quantum supremacy experiment''). The quantum circuit required for a successful implementation of our task is relatively straightforward to implement. It can be realized in three phases. A first, offline phase initializes EPR pairs (or three-qubit GHZ states) between nearest-neighbor qubits on a 2D grid. In a second phase, each qubit is provided an input, according to which either the qubit should be measured according to a single-qubit Pauli observable, or the qubit and one of its neighbors should be measured in the Bell basis. Finally, in the third phase the measurement outcomes are aggregated and verified using a simple classical linear-time computation. 

In order to demonstrate a quantum advantage, the crucial requirement is that the second phase should be implemented using a procedure that is ``certified'' to have low depth. Since this is a physical assumption, it can never be rigorously proven. Nevertheless, it is possible to imagine experiments under which the assumption would hold ``beyond reasonable doubt''. We describe two such experiments. 

In a first scenario, the verification of the depth constraint could be based on a calculation that takes into account state-of-the-art clock speeds. The fastest classical processors operate at speeds of order $1$GHz, so that for an integer $N$, a circuit of depth $d=\log(N)$ takes time of order $10^{-9}\log(N)$ seconds to implement. In contrast, current gate times for, say, ion-trap quantum computers are of order $100$ nanoseconds~\cite{schafer2018fast}, meaning that the quantum circuit realizing our task could be implemented in time roughly $10^{-7}$ seconds. To observe a quantum advantage it is thus necessary to ensure $\log(N) \gg 10^2$, leading to an impractical circuit size. However, a reasonable factor $10$ improvement in the gate time for quantum gates could enable a demonstration based on a grid of order $2^{10}\times 2^{10}$ qubits. Although far beyond current capabilities, the number is not beyond reach. Keeping in mind the extreme simplicity of the task to be implemented, it is not unreasonable to hope that such circuits may exist within 5-10 years.  

In the previous scenario we allowed both the classical and quantum procedures solving our task to do so in a highly localized, single-chip fashion. The distributed nature of the task lends itself well to another type of implementation, that would be more demanding for a classical adversarial behavior, and may thus lead to a more practical demonstration of quantum advantage. Consider a network of constant-qubit devices arranged in a $N\times N$ grid, such that devices may be separated by large (say, kilometric) distances. In the first, offline phase the devices use nearest-neighbor  quantum communication channels to distribute EPR pairs. In the second phase, each device receives a classical input, performs a simple local measurement, and returns a classical output (no communication is required). Our result implies that, to even approximately reproduce the output distribution implemented by this procedure, a classical network would need to operate in at least $\Omega(\log N)$ rounds, where in each round a device can communicate with a constant number of devices located at arbitrary locations in the network (the network need not be 2D: at each step, a device is allowed to broadcast arbitrarily but can only receive information from a constant number of devices, whose identity must be fixed ahead of time). Taking into account inevitable latency delays incurred in any such network, this second scenario suggests that our task may lead to an interesting test for a future quantum internet~\cite{wehner2018quantum}. 

Finally we comment on the fidelity requirement for the gates of a quantum circuit implementing our task. Even though the circuit is only of constant depth, it is important that, along a typical path of length $O(N)$ between two qubits in the $N\times N$ grid, none of the gates leads to an error. This means that per-gate fidelity is required to be of order $1-O(1/N)$. For $N$ of order $2^{10}$, as suggested in the first scenario described above, such fidelities are within reach. We also note that by changing the architecture of the circuit from a 2D grid to a 3D grid it may be possible to leverage existing protocols for entanglement distribution using noisy resources~\cite{raussendorf2005long}. Unfortunately, this comes with the drawback of a challenging 3D architecture for which there is no current implementation. 

\vspace{-0.2cm}

\paragraph{Proof idea.}
Our starting point is the key observation, made by Bravyi et al.~\cite{bravyi2017quantum}, that a sub-logarithmic depth circuit made of gates with constant fan-in has a form of implied locality, where the ``forward lightcone'' of most input vertices only includes a vanishing fraction of output vertices. In particular, two randomly chosen input locations are unlikely to have overlapping lightcones. If the input to the circuit is non-trivial in those two locations only, then the outputs in each input location's forward lightcone are obtained by a computation that depends on that input only. In other words, we have a reduction from classical, low-depth circuits to  two-party local computation that exactly preserves properties of the output. 
While the same lightcone argument holds true for a quantum circuit, the quantum circuit has the ability of distributing entanglement across any two locations in depth $2$, by executing a sequence of entanglement swapping procedures in parallel. Thus the same reduction maps a quantum, low-depth circuit to a two-party local computation, where the parties may perform their local computation on a shared entangled state. Since there are well-known separations between the kinds of distributions that can be generated by performing local operations on an entangled state, as opposed to no entanglement at all --- this is precisely the scope of Bell inequalities --- Bravyi et al. have obtained a separation between the power of low-depth classical and quantum circuits. 

We build on this argument in the following way. Our first contribution is to boost the argument in~\cite{bravyi2017quantum} from a worst-case to a ``high probability'' statement. Instead of showing that (i) for every classical circuit, there is some choice of input on which the classical circuit will fail, and (ii) there is a quantum circuit that succeeds on every input, we show that there exists a suitable distribution on inputs that is such that, (i) any classical circuit fails with high probability given an input from the distribution, and (ii) there is a quantum circuit that succeeds with high probability (in fact, probability $1$) on the distribution. Second, we observe that the construction in~\cite{bravyi2017quantum} imposes constraints not only on classical low-depth circuits, but also on quantum low-depth circuits; this observation enables the reduction to nonlocal games hinted at above. Finally, we amplify the argument to show how a polynomial number of Bell experiments can be simultaneously ``planted'' into the input to the circuit. This allows us to perform a reduction to a nonlocal game in which there is a large number of players divided into pairs which each perform their own distinct Bell experiment.  By adapting techniques from the area of randomness expansion from nonlocal games~\cite{arnon2018practical} we are then able to conclude that any sub-logarithmic-depth circuit, classical or quantum, that succeeds on our input distribution, must generate large amounts of entropy. Moreover, this guarantee holds even if the circuit only correctly computes a sufficiently large but constant fraction of outputs for the games. 

\paragraph{Related work.} Two recent works investigate the question of certified randomness generation outside of the traditional framework of Bell inequalities. In~\cite{brakerski2018certifiable} randomness is guaranteed based on the computational assumption that the device does not have sufficient power to break the security of post-quantum cryptography. The main advantages of this proposal are that the assumption is a standard cryptographic assumption, and that verification is very efficient. A drawback is the interactive nature of the protocol, where only a fraction of a bit of randomness is extracted in each round. In~\cite{aaronson18}, Aaronson announced a randomness certification proposal based on the Boson Sampling task. The main advantage of the proposal is that it can potentially be implemented on a device with fewer than $100$ qubits. Drawbacks are the difficulty of verification, that scales exponentially, and the resort to somewhat non-standard  complexity conjectures, for which there is little evidence of practical hardness (e.g. it may not be clear how to set parameters for the scheme so that an adversarial attack would require time $2^{80}$). In comparison, we would say that an advantage of our proposal is its simplicity to implement (on an axis different from Aaronson's: we require many more qubits, but a much simpler circuit, of constant depth and with classically controlled Clifford gates only), its robustness to errors, and its ease of verification. A possible drawback is the physical  assumption of bounded depth, that may or may not be reasonable depending on the scenario (in contrast to cryptographic or even complexity-theoretic assumptions, that operate at a higher level of generality).

Two other works obtained concurrently and independently from ours establish directly related, but strictly incomparable, results. In~\cite{gall2018average} Le Gall obtains an average-case hardness result that is very similar to our Theorem~\ref{cor:exp-bgk-expsoundess}, with a concrete constant $c' = 1/2$ that is likely better than the one that we achieve here. Le Gall's proof is based on an ingenious construction using the framework of graph states; although some aspects are similar in spirit to ours (such as the use of parallel repetition to amplify the soundness guarantees) the proof rests on rather different intuition. In independent work, Bene Watts et al.~\cite{BWKST18} extend the results of~\cite{bravyi2017quantum} to obtain a result analogous to our Theorem~\ref{cor:exp-bgk-expsoundess}, with a strengthened soundness property which holds even against so-called $AC^0$ circuits.  $AC^0$ circuits are still required to have constant depth but may contain AND and OR gates of arbitrary fan-in (instead of constant fan-in for~\cite{bravyi2017quantum} and our results). Their proof applies to the same relation as~\cite{bravyi2017quantum} but uses more involved techniques from classical complexity theory to obtain the strengthened lower bound. Neither of these results obtains an application to randomness expansion as in our Theorem~\ref{thm:main-informal}.

\paragraph{Acknowledgments.} 
The authors thank Adam Bouland for helpful discussions and members of the Caltech theory reading group (Matthew Weidner, Andrea Coladangelo, Jenish Mehta, Chinmay Nirkhe, Rohit Gurjar, Spencer Gordon) for posing some of the questions answered in this work. We thank Isaac Kim, Jean-Francois Le Gall, and Robin Kothari for useful discussions following the initial announcement of our results. 

Thomas Vidick is supported by NSF CAREER Grant CCF-1553477, AFOSR YIP award number FA9550-16-1-0495, MURI Grant FA9550-18-1-0161, a CIFAR Azrieli Global Scholar award, and the the Institute for Quantum Information and Matter, an NSF Physics Frontiers Center (NSF Grant PHY-1733907).
Jalex Stark is supported by NSF CAREER Grant CCF-1553477, ARO Grant W911NF-12-1-0541, and NSF Grant CCF-1410022.
Matthew Coudron is supported by Canada's NSERC and the Canadian Institute for Advanced Research (CIFAR), and through funding provided to 
IQC by the Government of Canada and the Province of Ontario.



\section{Preliminaries}
\label{sec:prelim}


\subsection{Notation}\label{sec:notation}

Finite-dimensional Hilbert spaces are designated using caligraphic letters, such as $\m H$. A \emph{register} $\reg{A}$, $\reg{B}$, $\reg{R}$, represents a physical subsystem, whose associated Hilbert space is denoted $\mH_{\reg{A}}$, $\mH_\reg{B}$, etc.
 We write $\Id_\reg{R}$ for the identity operator on $\mH_\reg{R}$.
A POVM $\{M^a\}$ on $\m H$ is a collection of positive semidefinite operators on $\m H$ such tht $\sum_a M^a = \Id$. For $X$ a linear operator on $\m H$, we write $\Tr(X)$ for the trace and $\|X\|_1 =\Tr\sqrt{X^\dagger X}$ for the Schatten-$1$ norm. 

For an integer $d\geq 1$ an \emph{observable over $\Z_d$} is a unitary operator $A$ such that $A^d=\Id$. For $\omega = e^{\frac{2i\pi}{d}}$ and taking addition modulo $d$ we write 
$$X = \sum_{i=0}^{d-1} \ket{i+1}\!\bra{i}\;\qquad\text{and}\qquad Z = \sum_{i=0}^{d-1} \omega^i \ket{i}\!\bra{i}\;$$
for the generalized qudit Pauli $X$ and $Z$ operators, which are observables acting on $\mH=\C^d$.
	Given an integer $d\geq 1$ and a tuple $s \in \Z_d^2$, we write $\sigma_s = X^{s_0}Z^{s_1}$ for a one-qudit Pauli acting on $\C^d$. 
	Given an integer $n\geq 1$ and a string $r \in (\Z_d^2)^n$, we write $\sigma_{r} = \otimes_i \sigma_{r_i}$ for an $n$-qudit Pauli acting on $(\C^d)^{\otimes n}$.

\subsection{Nonlocal games}

We consider two types of games: multiplayer nonlocal games, and circuit games. Circuit games are nonstandard, and we introduce them in Section~\ref{sec:circuit-games}. Nonlocal games are defined as follows.

\begin{definition}[Nonlocal game]\label{def:nl-game}
 Let $\ell\geq 1$ be an integer. An \emph{$\ell$-player nonlocal game} $\game$ consists of finite question and answer sets $X=X_1\times\cdots\times X_\ell$
  and $A=A_1\times\cdots\times A_\ell$ respectively, a distribution $\pi$ on $X$, and a family of coefficients $V(a_1,\ldots,a_\ell|x_1,\ldots,x_\ell)\in [0,1]$, for $(x_1,\ldots,x_\ell)\in X$ and $(a_1,\ldots,a_\ell)\in A$. We call an element $x\in X$ in the support of $\pi$ a \emph{query}, and for $i\in\{1,\ldots,\ell\}$ the $i$-th entry $x_i$ of $x$ a \emph{question} to the $i$-th player. We refer to the function $V(\cdot|\cdot)$ as the win condition for the game, and for any query $x$, to a tuple $a$ such that $V(a|x)=1$ as a valid (or winning) tuple of answers (to query $x$). When players return valid answers we say that they win the game.  
	\end{definition}
	
	\begin{definition}[Strategy]\label{def:meas-strategy} 
 Let $\ell\geq 1$ be an integer, and $\game$ an $\ell$-player nonlocal game. An \emph{$\ell$-player strategy} $\tau = (\rho, \{M_{x_i}\})$ for $\game$ consists of an $\ell$-partite
    density matrix $\rho \in \mathcal{H}_1\otimes \cdots \otimes\mathcal{H}_\ell$, and for each $i\in\{1,\ldots,\ell\}$ a collection of measurement
  operators $\{M^{a_i}_{x_i}\}_{a_i \in A_i}$ on $\mathcal{H}_i$ indexed by $x_i \in X_i$ and with outcomes $a_i\in A_i$.
\end{definition}
	
	\begin{definition}[Game value]
	Let $\game$ be an $\ell$-player nonlocal game, and  $\tau = (\rho, \{M_{x_i}^{a_i}\})$ a strategy for the players in $\game$. The \emph{value} of $\tau$ in $\game$ is
	\[ \omega_\tau^*(\game)\,=\, \sum_{x_1,\ldots,x_\ell} \pi(x_1,\ldots,x_\ell) \sum_{a_1,\ldots,a_\ell} V(a_1,\ldots,a_\ell| x_1,\ldots,x_\ell) \Tr\big( (M_{x_1}^{a_1} \otimes \cdots \otimes M_{x_\ell}^{a_\ell} )\,\rho\big) \;.\]
	A strategy $\tau$ is called \emph{perfect} if $\omega^*_\tau(\game)=1$. 
	The \emph{entangled value} (or simply \emph{value}) of $\game$, $\omega^*(\game)$, is defined as the supremum over all strategies $\tau$ of $\omega_\tau^*(\game)$.
\end{definition}

To compare strategies
we first introduce a notion of distance between measurements, with respect to an underlying state. (This is a standard definition in the area of self-testing.) 

\begin{definition}[State-dependent distance]
	Let $\rho$ be a density matrix in $\mH$ and let $M = \{M^a\}_a,N = \{N^a\}_a$ be two POVM on $\mH$ that have the same set of outcomes. The \emph{state-dependent distance} between $M$ and $N$ is 
	\begin{align}
		d_\rho(M,N) \,=\, \Big( \sum_a \Tr \big( (M^a - N^a)^2 \rho \big) \Big)^{1/2}\;.
	\end{align}
\end{definition}

\begin{definition}[Closeness of strategies]
\label{def:close}
	Let $\tau = (\rho,\{M_{x_i}^{a_i}\})$, $\tilde{\tau} = (\tilde{\rho},\{\tilde{M}_{x_i}^{a_i}\})$ be strategies for an $\ell$-player nonlocal game $\game$. We say that $\tau$ is \emph{$\eps$-close} to $\tilde{\tau}$ if and only if
 $ \| \rho - \tilde{\rho}\|_{1} \leq \eps$
		and for all $i\in\{1,\ldots,\ell\}$ it holds that $\Es{x} d_\rho(M_{x_i},\tilde{M}_{x_i}) \leq \eps$, where the expectation is over $x=(x_1,\ldots,x_\ell)$ drawn from $\pi$.
\end{definition}

\begin{definition}[Isometric strategies]
	Let $\tau = (\rho,\{M_{x_i}^{a_i}\})$ and $\tau' = (\rho',\{(M')_{x_i}^{a_i}\})$ be strategies for an $\ell$-player nonlocal game $\game$, and $\eps>0$. We say that $\tau$ is \emph{$\eps$-isometric} to $\tau'$ if and only if there exist isometries $V_i : \mathcal{H}_i \to \mathcal{H}_i'$ for each $i\in\{1,\ldots,\ell\}$ such that $\tau'$ is $\eps$-close to the strategy $\tilde{\tau} = (\tilde{\rho}, \{ \tilde{M}_{x_i}^{a_i} \})$, where 
$\tilde{\rho} = (V_1 \otimes \cdots \otimes V_\ell) \rho (V_1 \otimes \cdots \otimes V_\ell)^\dagger$
		and for all $i\in\{1,\ldots,\ell\}$, $x_i\in X_i$ and $a_i\in A_i$, $\tilde{M}_{x_i}^{a_i} = V_i  M_{x_i}^{a_i} V_i^\dagger$.
		\end{definition}

\begin{definition}\label{def:robust-rigid}
We say that a game $\game$ is \emph{robustly rigid} if the following holds. There is a continuous function $f:\R_+ \to \R_+$ such that $f(0)=0$ and a strategy $\tau$ for $\game$ such that for any $\delta\geq 0$, any strategy $\tau'$ with value at least $\omega_\tau^*(\game)-\delta$ is $f(\delta)$-isometric to $\tau$. We refer to $f$ as the robustness of the game. 
\end{definition}

Note that for a game to be robustly rigid it is necessary that there exists a unique strategy $\tau$ such that $\omega^*_\tau(\game)=\omega^*(\game)$, up to isometry.

\subsection{Circuits}

We refer to~\cite{nielsen2002quantum} for an introduction to the quantum circuit model. 
We consider layered circuits over an arbitrary gate set. The choice of a specific gate set may affect the depth of a circuit; for concreteness, the reader may consider the standard gate set $\{X,Z,H,T,CNOT\}$, where here $X,Z$ are the Pauli observables over $\C^2$, 
\[ H = \frac{1}{\sqrt{2}}\begin{pmatrix} 1 & 1 \\ 1 & -1 \end{pmatrix}\;,\qquad T = \begin{pmatrix} 1 & 0 \\ 0 & e^{i\pi/4} \end{pmatrix}\;,\]
and $CNOT$ is the controlled-NOT gate. In general, gates in the gate set used to specify the circuit may have arbitrary fan-out, but are restricted to fan-in at most $K$, where $K\geq 2$ is a parameter that is considered a constant (in contrast to the depth $D$ of the circuit, that is allowed to grow with the number of input wires to the circuit). Note that if $\m C$ is a quantum circuit, ``fan-in'' is the same as  locality, i.e. the number of qubits that a gate acts on nontrivially. In particular, for quantum circuits bounded fan-in automatically implies bounded fan-out.

It is convenient to generalize the usual notion of Boolean circuit to allow circuits that act on inputs taken from a larger domain, e.g. ${\m C}: \Sigma^n \to \Sigma^m$, where $\Sigma$ is a finite alphabet. Similar to the fan-in, whenever using the $O(\cdot)$ notation we consider the cardinality of $\Sigma$ a constant. A circuit of depth $D$ and fan-in $K$ over $\Sigma$ can be converted in a straightforward way in a circuit of depth $D$ and fan-in $K\cdot\lceil\log_2|\Sigma|\rceil$ over $\{0,1\}$. For the case of quantum circuits, allowing a non-Boolean $\Sigma$ amounts to considering a circuit that operates on $d$-dimensional qudits, for $d=|\Sigma|$, instead of $2$-dimensional qubits. 

\subsection{Entropies}

Given a bipartite density matrix $\rho_{\reg{AB}}$ we write $H(A|B)_\rho$, or simply $H(A|B)$ when $\rho_{\reg{AB}}$ is clear from context, for the conditional von Neumann entropy, $H(A|B)=H(AB)-H(B)$, with $H(X)_\sigma = - \Tr(\sigma\ln\sigma)$ for any density $\sigma$ on $\mH_\reg{X}$. 
We recall the definition of (smooth) min-entropy.

\begin{definition}[Min-entropy]\label{def:min-entropy}
Let $\rho_\reg{XE}$ be a density matrix on two registers $\reg{X}$ and $\reg{E}$, such that the register $\reg{X}$ is classical. The \emph{min-entropy} of $\reg{X}$ conditioned on $\reg{E}$ is defined as
  \begin{equation*}
    \Hmin({X|E})_\rho = \max \{\lambda \geq 0 :  \exists \sigma_\reg{E} \in \pos{\mathcal{H}_\reg{E}}, \Tr(\sigma_\reg{E} )\leq1, \, \mathrm{s.t.}\,\, 2^{-\lambda} \Id_\reg{X} \otimes \sigma_\reg{E} \geq \rho_\reg{XE}\}.
  \end{equation*}
When the state $\rho$ with respect to which the entropy is measured is clear from context we simply write $\Hmin({X|E})$ for $\Hmin({X|E})_\rho$.	
  For $\eps\geq 0$ the
  \emph{$\eps$-smooth min-entropy} of $X$ conditioned on $E$ is defined as
  \begin{equation*}
    \Hmin^\eps(X|E)_\rho = \max_{\sigma_\reg{XE} \in \mathcal{B}(
      \rho_\reg{XE},\eps) } \Hmin(X|E)_\sigma,
  \end{equation*}
	where $\mathcal{B}(
      \rho_\reg{XE},\eps) $ is the ball of radius $\eps$ around $\rho_\reg{XE}$, taken with respect to the purified distance.\footnote{The definition of the purified distance is not important for us, and we defer to~\cite{tomamichel2015quantum} for a precise definition.}
\end{definition}

The following theorem justifies the use of the smooth min-entropy as the appropriate notion of entropy for randomness extraction. 

\begin{theorem}[\cite{de2012trevisan}]\label{thm:qext}
For any integers $n,m$ and for any $\eps > 0$ there exists a $d = O(\log^2(n/\eps) \cdot \log{m})$ and an efficient classical procedure  $\Ext: \{0,1\}^n \times \{0,1\}^d \to \{0,1\}^m$ such that for any density matrix $\rho_{\reg{XE}} = \sum_x \proj{x}_{\reg{X}} \otimes \rho_{\reg{E}}^x$ such that the register $X$ is an $n$-bit classical register and $\Hmin(X|E)\geq 2m$, letting $\rho_\reg{ZYE} = 2^{-d} \sum_{x,y} \proj{\Ext(x,y)}_\reg{Z}\otimes \proj{y}_\reg{Y} \otimes \rho_\reg{E}^x$ it holds that 
\[ \big\| \rho_\reg{ZYE} - U_m \otimes U_d \otimes \rho_\reg{E} \big\|_1 \,\leq\,\eps\;,\]
where for an integer $\ell \geq 1$, $U_\ell = 2^{-\ell} \Id$ is the totally mixed state on $\ell$ qubits and $\rho_\reg{E} = \sum_x \rho_\reg{E}^x$. 
\end{theorem}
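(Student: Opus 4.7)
The plan is to follow the Trevisan-extractor paradigm and adapt it to the quantum setting via a reconstructive ``guessing'' argument, exactly as in the proof structure of~\cite{de2012trevisan}. At a high level, Trevisan's extractor is built from two ingredients: (i) a one-bit extractor $C:\{0,1\}^n\times\{0,1\}^t\to\{0,1\}$ with small seed length $t=O(\log(n/\eps))$, typically obtained by evaluating a good list-decodable binary code on a coordinate indexed by the seed; and (ii) a \emph{weak combinatorial design} $S_1,\ldots,S_m\subseteq[d]$ with $|S_i|=t$, whose pairwise intersections $|S_i\cap S_j|$ grow only logarithmically in $m$. Raz--Reingold--Vadhan designs give such a family with $d=O(t\cdot\log m)=O(\log(n/\eps)\log m)$. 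The extractor is then defined by
\[ \Ext(x,y)_i \,=\, C\bigl(x,\, y|_{S_i}\bigr)\;, \qquad i=1,\ldots,m\;. \]

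The core of the analysis is the reconstructive proof, which I would carry out as follows. Assume for contradiction that there is a side-information state $\rho_\reg{E}^x$ and a distinguisher with advantage $\eps$ for telling $\rho_\reg{ZYE}$ from $U_m\otimes U_d\otimes \rho_\reg{E}$. By a standard hybrid argument there is some bit position $i$ for which the $i$-th output bit can be predicted from the remaining $m-1$ bits, the seed $y$, and $\rho_\reg{E}^x$ with advantage $\eps/m$. The key step is to ``simulate'' the other $m-1$ output bits using only $x$-bits indexed by $\bigcup_{j\neq i}S_j\setminus S_i$, which by the weak-design property has size at most $(m-1)\cdot O(\log m)$. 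Conditioning on (i.e., hardwiring) these bits of $x$, the predictor becomes an adversary against the \emph{one-bit} extractor $C$ with side information consisting of $\rho_\reg{E}$ plus a short classical advice string. One then invokes a quantum-proof guarantee for $C$: because $C$ is derived from a list-decodable code, a quantum adversary with advantage $\eps/m$ on $C$ yields, via the quantum Goldreich-Levin / list-decoding argument, a short quantum predictor for $x$ itself, contradicting $\Hmin(X|E)\geq 2m$ when one chooses parameters so that $m + (m-1)\cdot O(\log m)$ is less than $2m$ minus the log of the inverse predicting advantage.

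The main obstacle, and the step that requires the most care, is the quantum reconstruction lemma for the one-bit extractor: one needs a quantum analogue of the classical list-decoding-based reduction, showing that a distinguishing adversary for $C$ with quantum side information can be converted into a short quantum state from which $x$ can be recovered with nontrivial probability. This is where the proof genuinely departs from the classical case; the argument of~\cite{de2012trevisan} handles it by combining the Alon--Goldreich--Hastad--Peralta code with a quantum variant of the reconstruction procedure, essentially measuring the side-information register to obtain a consistent advice string. Once this lemma is in hand, the remaining steps (choosing $t = O(\log(n/\eps))$ so that the single-bit error is $\eps/m$, applying the weak-design bound, summing seed lengths) are bookkeeping and yield the stated parameters $d=O(\log^2(n/\eps)\cdot\log m)$ and the $\eps$-closeness of $\rho_\reg{ZYE}$ to $U_m\otimes U_d\otimes\rho_\reg{E}$ in trace distance. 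Efficiency of $\Ext$ follows because both the code evaluation and the design are computable in $\poly(n,d)$ time.
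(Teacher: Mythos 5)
You should first note that the paper does not prove this statement at all: Theorem~\ref{thm:qext} is imported verbatim, with citation, from~\cite{de2012trevisan}, so there is no internal proof to compare against. Your proposal is essentially an outline of the proof given in that reference (Trevisan's construction from a one-bit extractor plus a weak design, analyzed via the reconstruction paradigm adapted to quantum side information), so the overall strategy is the right one for the cited result.

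There is, however, a concrete flaw in the way you carry out the key hybrid step. In Trevisan's extractor the design sets $S_1,\ldots,S_m\subseteq[d]$ index positions of the \emph{seed} $y$, not of the source $x$; the $i$-th output bit is $C(x,y|_{S_i})$. In the reconstruction argument one fixes (by averaging) the seed bits outside $S_i$, and then each earlier output bit $j\neq i$ becomes a function of $x$ and of the seed bits in $S_i\cap S_j$ only; the classical advice handed to the predictor is the truth table of that function, of size $2^{|S_i\cap S_j|}$, and the weak-design property is precisely the bound $\sum_{j\neq i}2^{|S_i\cap S_j|}=O(m)$, which is what makes the total advice length $O(m)$ and hence compatible with the hypothesis $\Hmin(X|E)\geq 2m$. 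Your version instead hardwires ``$x$-bits indexed by $\bigcup_{j\neq i}S_j\setminus S_i$'' of size $(m-1)\cdot O(\log m)$; besides conflating seed coordinates with source coordinates, the resulting bookkeeping $m+(m-1)\cdot O(\log m)<2m$ is false for large $m$, so the contradiction with the min-entropy bound does not close as written. Relatedly, your claimed seed length $d=O(t\log m)=O(\log(n/\eps)\log m)$ does not match the design parameters actually used in~\cite{de2012trevisan} to obtain the stated $d=O(\log^2(n/\eps)\cdot\log m)$. The remaining ingredient you flag --- a quantum-proof guarantee for the one-bit extractor and the conversion of a distinguisher with quantum side information into a guessing strategy for $x$ --- is indeed the genuinely quantum part of the argument, and in~\cite{de2012trevisan} it is handled by a quantum reconstruction/guessing argument (building on the fact that one-bit-output extractors remain secure against quantum side information), so acknowledging it as a lemma to be proved is fair, but the classical accounting around it must be repaired as above before the parameters in the theorem statement follow.
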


\section{Stabilizer games}
\label{sec:stabilizer-games}


In this section we introduce a restricted class of nonlocal games that we will be concerned with throughout the paper. We call the games \emph{stabilizer games}. They have the property that the game always has a perfect quantum strategy $\tau=(\rho,\{M_{x_i}\})$ that uses an entangled state $\rho = \proj{\psi}$ that is a graph state, on which the players make measurements that are specified by tensor products of Pauli observables. It is important for our results that there is a perfect strategy such that the entangled state can be prepared by a quantum circuit of low depth (in fact, constant depth) starting on a $\ket{0}$ state. It will also be convenient that the same perfect strategy only requires the measurement of Pauli operators, and that the win condition in the game is a linear function of the players' answers. 

We proceed with a formal definition. The games we consider have $\ell$ players. In the intended strategy for the players, each player $j\in\{1,\ldots,\ell\}$ holds $k_j$ qudits, measures $m_j$ commuting Pauli observables over $\Z_d$ (depending on its question), and reports the $m_j$ outcomes. 

\begin{definition}[Stabilizer game]\label{def:stab-game}
	An $(\ell,k,m)$ stabilizer game $\game = (X_i,\{w_x,b_x\})$ is an $\ell$-player nonlocal game defined from the following data. 
	\begin{itemize}
			\item a number of players $\ell$,
					\item a parameter $d$ for the dimension of the qudits (in the honest strategy),
		\item for $j\in\{1,\ldots,\ell\}$, a parameter $k_j$ for the number of qudits held by the $j$-th player (ibid),
		\item for $j\in\{1,\ldots,\ell\}$, a parameter $m_j$ for the number of simultaneous measurements made by the $j$-th player (ibid),
		\item for $j \in \{1,\ldots,\ell\}$, a set $X_j$, each element of which is identified with the label $x\in(\Z_d^2)^{k_j}$ of a $k_j$-qudit Pauli,
			\item a distribution $\pi$ on queries $x \in \prod_{j=1}^\ell X_j^{m_j}$, such that any $(x_1,\ldots,x_\ell)$ in the support of $\pi$ is such that for each $j$, $x_j$ designates an $m$-tuple of commuting $k_j$-qudit Pauli observables, 
			\item for each query $x$ in the support of $\pi$, a vector $ w_x \in \prod_{j=1}^\ell (\Z_d)^{ m_j}$ and a coefficient $b_x \in \Z_d$ that are used to specify the win condition in the game.
	\end{itemize}
	To play, the verifier samples a question $x_j \in X_j^m$ for each player. 
	Each player responds with a string $a_j \in \Z_d^{m_j}$.
	Let $x = (x_1,\ldots, x_{\ell})$ and $a = (a_1,\ldots, a_{\ell})$. The players win if 
	\begin{equation}\label{eq:stabilizer-win-condition}
		 w_x \cdot a \,=\, b_x\;,
	\end{equation}
	where the inner product is over vectors in $\Z_d^{\sum m_j}$. 
	Using the notation from Definition~\ref{def:nl-game}, $V(a|x) = 1$ if $w_x \cdot a = b_x$, and $0$ otherwise. 
\end{definition}


In a stabilizer game each player is tasked with reporting $m$ values in $\Z_d$. It is then natural to use a representation of strategies in terms of observables over $\Z_d$. We adapt Definition~\ref{def:meas-strategy} as follows. 

\begin{definition}
Let $\game = (X_i, \{w_x,b_x\})$ be a stabilizer game. A \emph{strategy} $\tau = (\rho,\{M_{x_j}\})$ for $\game$ is specified by an $\ell$-partite density matrix $\rho$ and for each $j\in\{1,\ldots,\ell\}$ and $x_j = (x_{j,1},\ldots,x_{j,m_j}) \in X_j^{m_j}$ a family of $m_j$-tuples of commuting observables $M_{x_j} = (M_{x_j,1},\ldots,M_{x_j,m_j})$ over $\Z_d$. 
\end{definition}

Note that in the definition, for $s\in\{1,\ldots,m_j\}$ the observable $M_{x_j,s}$ may depend on the whole $m_j$-tuple $x_j$, and not only on $x_{j,s}$. 

We introduce a notion of ``honest strategy'' in a stabilizer game.

\begin{definition}[Honest strategy]\label{def:honest-stabilizer}
Let $\game = (X_j, \{w_x,b_x\})$ be a stabilizer game. A \emph{honest strategy} in $\game$ is a strategy in which the state $\rho$ is an $(\sum_j k_j)$-qudit $\ell$-partite pure state $\ket{\psi}$ such that the $j$-th player has $k_j$ qubits, and the player's observables $M_{x_j} = (M_{x_j,1},\ldots,M_{x_j,m})$ associated with question $x_j = (x_{j,1},\ldots,x_{j,m_j})\in X_j^{m_j}$ are precisely the $m_j$ commuting Pauli observables specified by $x_j$. We say that the strategy has \emph{depth $d$} if the state $\ket{\psi}$ can be prepared by a quantum circuit of depth at most $d$ starting from the $\ket{0}$ state. 
\end{definition}

\subsection{Pauli observables}

Recall the notation $\sigma_r$, where $r\in(\Z_d^2)^k$, introduced in Section~\ref{sec:notation} to designate an arbitrary $k$-qudit Pauli observable. 

\begin{definition}[Correction value]
Let $q,r \in (\Z_d^2)^k$. The 
\emph{correction value} $\cor_r(q) \in \Z_d$ is defined such that
	\begin{equation}
		\w^{\cor_r(q)} \,=\, [\sigma_q,\sigma_{r}]\;,
	\end{equation}
	where the brackets denote the group commutator, $[P,Q]=PQP^{-1}Q^{-1}$. 
\end{definition}

The following lemma shows that the function $\cor$ can be computed locally. 

\begin{lemma}[$\cor$ can be computed locally]
	\label{lem:cor-computed-locally}
	For a string $s$, let $s|_i$ denote the string which is equal to $s_i$ in the $i\th$ position and $0$ everywhere else. Then
	\begin{equation}
		\sum_i \cor_{r|_i}(q|_i) = \cor_r(q)\;.
	\end{equation}
\end{lemma}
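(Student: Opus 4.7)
The plan is to exploit the tensor-product structure of multi-qudit Paulis and the fact that single-qudit Pauli commutators are central (i.e.\ scalar multiples of the identity), so that the group commutator $[\sigma_q,\sigma_r]$ factors cleanly over the $k$ tensor factors.

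First, I would unpack the definitions. Since $\sigma_q=\bigotimes_i \sigma_{q_i}$ and $\sigma_r=\bigotimes_i \sigma_{r_i}$ act as independent single-qudit Pauli operators on distinct tensor factors, a direct computation gives $\sigma_q\sigma_r = \bigotimes_i (\sigma_{q_i}\sigma_{r_i})$ and likewise $\sigma_r\sigma_q = \bigotimes_i (\sigma_{r_i}\sigma_{q_i})$. Combining these,
\[
[\sigma_q,\sigma_r] \,=\, \bigotimes_i [\sigma_{q_i},\sigma_{r_i}]\;.
\]
Now, for single-qudit Paulis one has $Z X = \omega XZ$ and more generally $\sigma_{q_i}\sigma_{r_i} = \omega^{c_i} \sigma_{r_i}\sigma_{q_i}$ for some integer $c_i\in\Z_d$ that depends only on $q_i,r_i$. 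Thus each factor $[\sigma_{q_i},\sigma_{r_i}]$ is a scalar multiple of the identity, and scalars can be pulled out of the tensor product, yielding $[\sigma_q,\sigma_r]=\omega^{\sum_i c_i}\,\Id$.

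Next, I would observe that the same computation, applied to $q|_i$ and $r|_i$ (which agree with $q,r$ on the $i\th$ coordinate and are zero elsewhere), gives $[\sigma_{q|_i},\sigma_{r|_i}] = \omega^{c_i}\,\Id$, because all other factors act as identity and contribute trivially. By definition of $\cor$, this means $\cor_{r|_i}(q|_i)=c_i$ and $\cor_r(q)=\sum_i c_i \pmod d$. Substituting gives
\[
\sum_i \cor_{r|_i}(q|_i) \,=\, \sum_i c_i \,=\, \cor_r(q)\;,
\]
as desired.

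There is no real obstacle here; the only point requiring a line of care is justifying that one may pull the scalar phases out of a tensor product — that is, that $\bigotimes_i (\omega^{c_i}\Id) = \omega^{\sum_i c_i}\Id$ — and that the identity is modulo $d$ (which follows from $\omega$ being a primitive $d$-th root of unity and both sides lying in $\Z_d$ by definition of $\cor$).
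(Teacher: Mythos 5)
Your proof is correct, and it takes a somewhat more direct route than the paper. You factor the group commutator over the tensor product, $[\sigma_q,\sigma_r]=\bigotimes_i[\sigma_{q_i},\sigma_{r_i}]$, note that each single-qudit factor is a central scalar $\omega^{c_i}\Id$, and read off both $\cor_r(q)=\sum_i c_i$ and $\cor_{r|_i}(q|_i)=c_i$ from the same computation. The paper instead splits the argument into two steps: first $\cor_{r|_i}(q|_i)=\cor_{r|_i}(q)$ (evaluating the commutator one tensor factor at a time), and second that $r\mapsto\cor_r(q)$ is an additive homomorphism, which it deduces from the abstract identity $[A,BC]=[A,B][A,C]$ (valid when $B$ commutes with $[A,C]$, as it does here since the commutator is a scalar); it then combines these with $r=\sum_i r|_i$. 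The mathematical content is the same in both cases — everything hinges on Pauli commutators being scalar multiples of the identity — but your version is more elementary and self-contained, bypassing the paper's auxiliary commutator lemma, while the paper's decomposition isolates the additivity of $\cor$ in $r$ as a separately stated, potentially reusable fact. The one point you flag as needing care, pulling the phases out of the tensor product and working modulo $d$, is indeed the only bookkeeping step, and your justification of it is adequate.
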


\begin{proof}
	First, notice that 
	\begin{equation}
	\label{eq:cor-computed-locally-1}
		\cor_{r|_i}(q|_i) = \cor_{r|_i}(q)\;.
	\end{equation}
	 To see this, recall that $\cor$ is computed as the phase of the group commutator of a $P_{r|_i}$ and $\s(q)$. We can evaluate this group commutator one tensor factor at a time. In all tensor factors other than $i$, the commutator will be trivial since the $r$ operator is identity. Therefore, the commutator does not change if we also set the $q$ operator to identity.

	Next, we need that for any fixed $q$, the map $r \mapsto \cor_r(q)$ is an additive homomorphism. In other words, 
	\begin{equation}
	\label{eq:cor-computed-locally-2}
		\cor_{r+r'}(q) = \cor_{r}(q)+\cor_{r'}(q).
	\end{equation}
	To see this, we apply Lemma \ref{lemma:commutators} with $A = \sigma_q$, $B = \sigma_r$, $C = \sigma_{r'}$.

	The lemma follows by combining Equations \eqref{eq:cor-computed-locally-1} and \eqref{eq:cor-computed-locally-2} with the observation that $r = \sum_i r|_i$. 
\end{proof}

\begin{lemma}[Commutators]\label{lemma:commutators}
	Suppose $B$ commutes with $[A,C]$. Then $[A,B][A,C] = [A,BC]$. 
\end{lemma}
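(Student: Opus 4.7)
The plan is to prove the identity by direct manipulation, expanding commutators according to the definition $[X,Y] = XYX^{-1}Y^{-1}$ and then using the commutation hypothesis to slide one factor past another at the right moment. Since we are working in a group (not necessarily abelian), everything reduces to finding a sequence of rewrites that turns $[A,BC]$ into $[A,B][A,C]$.

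First I would expand the right-hand side:
\[
[A,BC] \,=\, A(BC)A^{-1}(BC)^{-1} \,=\, ABCA^{-1}C^{-1}B^{-1}.
\]
The key step is the insertion of $A^{-1}A$ between $B$ and $C$ (equivalently, recognising that $CA^{-1}C^{-1} = A^{-1}\cdot ACA^{-1}C^{-1} = A^{-1}[A,C]$). This gives
\[
[A,BC] \,=\, ABA^{-1}\,[A,C]\,B^{-1}.
\]
At this point I would invoke the hypothesis: since $B$ commutes with $[A,C]$, so does $B^{-1}$ (commutation is preserved under inversion). Thus $[A,C]\,B^{-1} = B^{-1}\,[A,C]$, and we conclude
\[
[A,BC] \,=\, ABA^{-1}B^{-1}\,[A,C] \,=\, [A,B]\,[A,C],
\]
which is the desired identity.

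This proof is essentially a routine one-line algebraic calculation, so I do not expect any genuine obstacle; the only non-obvious move is the substitution $CA^{-1}C^{-1} = A^{-1}[A,C]$, which is what makes the commutation hypothesis applicable. I would make sure to state explicitly the small fact that $B^{-1}$ inherits commutation with $[A,C]$ from $B$, since this is where the hypothesis gets used.
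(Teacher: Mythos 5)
Your proof is correct and is essentially the same direct calculation as the paper's: both expand $[A,BC]=ABCA^{-1}C^{-1}B^{-1}$ and use the hypothesis once to slide $[A,C]$ past $B$ (the paper extracts $[A,B]$ first via $AB=[A,B]BA$ and then cancels $B[A,C]B^{-1}=[A,C]$, while you extract $[A,C]$ first via $CA^{-1}C^{-1}=A^{-1}[A,C]$ and commute it past $B^{-1}$ — an equivalent rearrangement). Your explicit remark that $B^{-1}$ inherits commutation with $[A,C]$ is a fine touch, though it amounts to the same use of the hypothesis as in the paper.
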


\begin{proof}
	Write $[A,BC]$ as $A(BC)A\1(BC)\1$. 
	Note that by definition, $AB = [A,B]BA$. Then we have
	\begin{align*}
		[A,BC]
		&= A(BC)A\1(BC)\1
		\\&= ABCA\1C\1B\1
		\\&= [A,B]BACA\1C\1B\1
		\\&= [A,B]B[A,C]B\1
		\\&= [A,B][A,C]\;,
	\end{align*}
	where the last line follows from commutation of $B$ and $[A,C]$. 
\end{proof}

\subsection{Rotated and stretched stabilizer games}
\label{sec:rotated-game}
In this section we define stretched stabilizer games which formalize the notion of distributing a stabilizer game out over long ``paths''.  One property of stretched games is that players on far ends of the paths have outputs which require correction according to a function of the outputs along the intermediate points in the paths. We introduce a notion of \emph{rotated stabilizer game} that captures this scenario by allowing the players to report an additional ``rotation string''.




\begin{definition}[Rotated stabilizer game]
	Given a stabilizer game $\game = (X_j,\{w_x,b_x\})$  
	the rotated stabilizer game associated with $\game$, $\game^R$, is defined as follows. 
	For each $j\in\{1,\ldots,\ell\}$ and question $x_j \in X_j^{m_j}$, the $j$-th player reports an answer $a_j \in \Z_d^{m_j}$ together with a rotation string $r_j \in (\Z_d^2)^{k_j}$. 
The win condition \eqref{eq:stabilizer-win-condition} is replaced by the condition
	\begin{equation}\label{eq:rotated-stabilizer-win-condition}
		 w_x \cdot (a - \cor_r(x)) = b_x\;,
	\end{equation}
where $r=(r_1,\ldots,r_\ell)$.
\end{definition}

Observe that if $r$ is the $0$ vector then for any $q$, $\cor_r(q)=0$, so the win condition for the rotated game $\game^R$ reduces to the win condition for $\game$. Therefore any strategy for $\game$ implies a strategy for $\game^R$ with the same success probability. More generally, it is possible to define a strategy in $\game^R$ by having the players conjugate their observables in $\game$ by an arbitrary Pauli observable (the same for all observables), and report as rotation string the string that specifies the observable used for conjugation. 

 Using Lemma~\ref{lem:cor-computed-locally} it follows that there is a reduction in the other direction as well. Given a strategy for $\game^R$, one obtains a strategy for $\game$ by replacing the answer $(a_i,r_i)$ from the $i$-th player in $\game^R$ by the answer 
\begin{equation}\label{eq:rot-to-normal}
(a_i - \cor_{r_i}(q_i))
\end{equation}
 in $\game$. 
The following lemma summarizes this observation in terms of rigidity of the rotated game. Recall the definition of a robustly rigid game in  Definition~\ref{def:robust-rigid}.

\begin{lemma}[Rotation preserves rigidity]
\label{lem:rotation-preserves-rigidty} 
	Suppose that a stabilizer game $\game$ is robustly rigid (see Definition~\ref{def:robust-rigid}). Let $\tau= (\proj{\psi},\{ M_{x_j}\})$ be a rigid strategy and $f$ the robustness. 
	Then the rotated stabilizer game $\game^R$ is rigid in the following sense. For any strategy $\tau=(\rho',\{M_{x_j}'\})$ that has value $w'=\omega^*_\tau(\game)$ in $\game^R$, there is a strategy in $\game$ that is a coarse-graining of $(\rho',\{M_{x_j}'\})$ according to~\eqref{eq:rot-to-normal},\footnote{Here by ``coarse-graining'' we mean the strategy that is implied by requiring each player to compute the update~\eqref{eq:rot-to-normal} locally; Lemma~\ref{lem:cor-computed-locally} shows that this can always be done.} and that has value $w'$ in $\game$. In particular, up to local isometries the state $\rho'$ is within distance $f(1-w')$ of $\proj{\psi}$.
	\end{lemma}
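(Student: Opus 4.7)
The plan is to reduce the rotated game back to the original stabilizer game by an explicit local coarse-graining of the answers, show that this reduction preserves the game value on the nose, and then quote the robust rigidity hypothesis on the $\game$-side to read off the conclusion about $\rho'$. Concretely, given a strategy $\tau' = (\rho', \{M'_{x_j}\})$ for $\game^R$ whose $j$-th player outputs a pair $(a_j, r_j)$, I define a new strategy $\tau''$ for $\game$ on the same state $\rho'$ whose POVM elements are
\[ \tilde M_{x_j}^{\tilde a_j} \,=\, \sum_{(a_j,r_j)\,:\,a_j - \cor_{r_j}(x_j)=\tilde a_j}\; M'^{(a_j,r_j)}_{x_j}\;, \]
i.e.\ the $j$-th player simulates $\tau'$, reads off $(a_j,r_j)$, and reports $\tilde a_j$ as prescribed by \eqref{eq:rot-to-normal}.

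The first thing I would verify is that $\tau''$ is a genuinely local strategy, i.e.\ that each player can compute their own correction without communicating. This is where Lemma~\ref{lem:cor-computed-locally} enters: although the verifier's rule in \eqref{eq:rotated-stabilizer-win-condition} is phrased in terms of a global correction $\cor_r(x)$ with $r=(r_1,\ldots,r_\ell)$, the lemma (together with the fact that each $x_{j,s}$ is supported only on player $j$'s qudits, so $x_{j,s}|_i=0$ outside player $j$'s indices) collapses $\cor_r(x_{j,s})$ to $\cor_{r_j}(x_{j,s})$. Hence player $j$ can compute their corrected answer from $(a_j,r_j,x_j)$ alone, and $\tau''$ is well-defined.

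The second step is to match the values. For any query $x$ and any ``target'' corrected answer $\tilde a$, the equivalence
\[ V_\game(\tilde a\mid x)=1 \quad\Longleftrightarrow\quad w_x\cdot\tilde a = b_x \quad\Longleftrightarrow\quad V_{\game^R}(a,r\mid x)=1 \text{ for all } (a,r) \text{ with } a-\cor_r(x)=\tilde a \]
holds by the very definitions of the two win conditions. Plugging the expression for $\tilde M_{x_j}^{\tilde a_j}$ into the value of $\tau''$ in $\game$ and swapping the order of summation over $\tilde a$ and over $(a,r)$, the value telescopes precisely onto the value of $\tau'$ in $\game^R$, giving $\omega_{\tau''}^*(\game)=w'$.

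The final step is to invoke the hypothesis. Since the canonical rigid strategy $(\proj\psi,\{M_{x_j}\})$ in a stabilizer game is perfect (value $1$), the robustness of $\game$ applied with $\delta=1-w'$ produces local isometries $V_1,\ldots,V_\ell$ under which $\tau''$ is $f(1-w')$-isometric to $(\proj\psi,\{M_{x_j}\})$. Because $\tau''$ and $\tau'$ share the same state $\rho'$, the state-part of Definition~\ref{def:close} applied to the isometric image of $\tau''$ is exactly the claimed bound on $\rho'$, completing the proof. The only mild obstacle is the bookkeeping in step one, verifying that the global correction decomposes player-wise so that the coarse-graining is local; beyond that the argument is a short reduction plus an invocation of the hypothesis.
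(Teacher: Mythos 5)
Your proposal is correct and follows essentially the same route as the paper, which states the lemma as a summary of exactly this observation: coarse-grain each player's answer via \eqref{eq:rot-to-normal}, use Lemma~\ref{lem:cor-computed-locally} to see the correction is locally computable, note the value is preserved, and then invoke the robust rigidity of $\game$ with $\delta = 1-w'$ (using that the honest stabilizer strategy is perfect). Your write-up simply makes explicit the POVM coarse-graining and the value bookkeeping that the paper leaves implicit.
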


We introduce a notion of ``stretched'' rotated game, that will be useful when we relate circuit games to stablizer games. 

\begin{definition}[Stretched stabilizer game]\label{def:stretched-game}
	Let $\game = (X_j,\{w_x,b_x\})$ be a stabilizer game, and $\Gamma = (\Gamma_1,\ldots,\Gamma_\ell)$ an $\ell$-tuple of finite sets, such that for $j\in\{1,\ldots,\ell\}$, $\Gamma_j$ has $k_j$ designated elements $(u_{j,1},\ldots,u_{j,k_j})$. Each element of $\Gamma_j$ is used to index one out of $|\Gamma_j|$ qudits that are supposed to be held by the $j$-th player. To $\game$ and $\Gamma$ we associate a ``stretched'' game $\game_\Gamma^S$ as follows. In $\game_\Gamma^S$ the parameter $k_j$ is replaced by $k'_j=|\Gamma_j|$. For any $k_j$-qudit Pauli observable asked to player $j$ in $\game$, there is a $k'_j$-qubit Pauli observable in $\game_\Gamma^S$ such that the observable acts as the identity on the additional $(k'_j-k_j)$ qubits.  The win condition in $\game_\Gamma^S$ is the same as in $\game$. 
	\end{definition}

Given a stabilizer game $\game$ and sets $\Gamma=(\Gamma_1,\ldots,\Gamma_\ell)$, we write $\game_\Gamma^{S,R} = (\game_{\Gamma}^S)^R$ for the rotated stretched stabilizer game associated with $\game$ and $\Gamma$.

\subsection{Repeated games}
\label{sec:repeated-game}

For an integer $r\geq 1$ we consider the game that is obtained by repeating a stabilizer game $r$ times in parallel, with $r$ independent sets of $\ell$ players (that may share a joint entangled state).

\begin{definition}\label{def:n-rotated}
Let $\game$ be an $(\ell,k,m)$  stabilizer game, and $r\geq 1$ an integer. The \emph{$r$-fold repetition of $\game$} is the $(r\ell,k)$ stabilizer game $\game_r$ that is obtained by executing $\game$ independently in parallel with $r$ groups of $\ell$ players. More formally, the input distribution $\pi_r$ in $\game_r$ is the direct product of $r$ copies of the input distribution $\pi$ in $\game$, and the win condition in $\game_r$ is the AND of the win conditions in each copy of $\game$. 
\end{definition}

For purposes of randomness expansion, in Section~\ref{sec:randomness-generation} we consider repeated games for which the input distribution $\pi_r$ is not exactly the direct product of $r$ copies of $\pi$, but a derandomized version of it. Similarly, to achieve better robustness, instead of the AND of the winning conditions we may consider a win condition that is satisfied as soon as sufficiently many of the win conditions for the subgames are satisfied. These modifications are explained in Section~\ref{sec:randomness-from-stabilizer}. 

\subsection{The Magic Square game}

For concreteness we give two examples of stabilizer games, the Memin-Peres Magic Square game~\cite{mermin1990simple} and the Mermin GHZ game~\cite{mermin1990extreme}. The former is given for illustration; the latter will be used towards randomness expansion in Section~\ref{sec:randomness-generation}.

\begin{definition}[Magic Square game]
\label{def:magic-square}
Consider the following $3\times 3$ matrix, where each entry is labeled by a two-qubit Pauli observable:
\begin{equation}
  \label{eq:ms}
  \begin{bmatrix}
    xi & ix & xx \\
    iz & zi & zz \\
    xz & zx & yy
  \end{bmatrix}\;.
\end{equation}
The Magic Square game is a $(2,2,2)$ stabilizer game over $2$-dimensional qubits defined as follows. The sets $X_1 = X_2$ each contain $6$ pairs of two qubit-Pauli observables, the first two pairs indicated in each of the rows and columns of~\eqref{eq:ms}. The distribution $\pi$ is uniform on pairs of entries associated with the same row or column. For any query $x=(x_1,x_2)$ each player reports two bits associated with the two observables it was asked about. We can associate a third bit to the third observable in the corresponding row or column by taking the parity of the first two bits, except for the case of the third column, where we take the parity plus $1$. The constraint $w_x\cdot a = b$ expresses the constraint that, whenever the questions $x_1$, $x_2$ are associated with a row and column that intersect in an entry of the square, the outcomes associated with the intersection should match. 
\end{definition}

\begin{definition}[Honest strategy in the Magic Square game]
In the honest strategy, the two players share two EPR pairs. Upon reception of a question that indicates two commuting two-qubit Pauli observables, the player measures both observables on her qubits and reports the two outcomes. 
\end{definition}

The following robustness result is shown in~\cite{wu2016device}.

\begin{theorem}\label{thm:ms-robust}
The Magic Square game is robustly rigid, with respect to the honest strategy and with robustness $f(\delta)=O(\sqrt{\delta})$. 
\end{theorem}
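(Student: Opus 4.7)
The plan is to follow the standard self-testing template for pseudo-telepathic binary linear system games, specialized to the Magic Square. Let $\tau'=(\rho',\{M_{x_1}\},\{N_{x_2}\})$ be a strategy with $\omega^*_{\tau'}(\game)\geq 1-\delta$. For each of A's six questions (three rows and three columns) the measurement produces two $\pm 1$ outcomes, and extending by the parity rule yields three observables per query. Aggregating over the two queries that touch a given square entry gives a family $\{A_{ij}\}_{(i,j)\in\{1,2,3\}^2}$ of $\pm 1$ observables on $\mH_A$; do the same to obtain $\{B_{ij}\}$ on $\mH_B$. The win condition then translates to three families of approximate algebraic identities on $\rho'$, each with state-dependent error $O(\sqrt{\delta})$: (i) intersection consistency, $A_{ij}\otimes \Id \approx \Id\otimes B_{ij}$; (ii) row parity, $A_{i1}A_{i2}A_{i3}\approx \Id$ for each $i$; and (iii) column parity, $A_{1j}A_{2j}A_{3j}\approx \Id$ for $j=1,2$ together with $A_{13}A_{23}A_{33}\approx -\Id$. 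The $\sqrt{\delta}$ loss here is the usual Cauchy--Schwarz conversion between a $\delta$-sized game bias and a state-dependent-norm estimate.

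Next I would designate approximate single-qubit Paulis for A by $X^{(1)}_A:=A_{11}$, $X^{(2)}_A:=A_{12}$, $Z^{(1)}_A:=A_{22}$, $Z^{(2)}_A:=A_{21}$, and analogously on B's side. The first two row relations identify $A_{13}$ and $A_{23}$ approximately with $X^{(1)}_AX^{(2)}_A$ and $Z^{(1)}_AZ^{(2)}_A$, while the third row relation identifies $A_{33}$ with $A_{31}A_{32}$. Substituting these identifications into the anomalous third-column relation $A_{13}A_{23}A_{33}\approx -\Id$, and using the remaining row/column relations to permute the factors, produces the crucial approximate anticommutation $X^{(k)}_AZ^{(k)}_A \approx -Z^{(k)}_AX^{(k)}_A$ on $\rho'$ for each logical qubit $k=1,2$. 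Cross-qubit pairs (for instance $X^{(1)}_A$ and $Z^{(2)}_A$) are seen to approximately commute by an analogous argument that uses only $+\Id$-valued products. This gives, on each of A's (and by symmetry B's) two logical qubits, a pair of approximately anticommuting order-two observables with $O(\sqrt{\delta})$ state-dependent error.

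Having such Pauli-like observables together with the intersection consistency estimate, I would apply the standard swap isometry (Mayers--Yao / McKague--Yang--Scarani style) qubit-by-qubit to build local isometries $V_A=V_A^{(1)}\otimes V_A^{(2)}$ and $V_B=V_B^{(1)}\otimes V_B^{(2)}$ under which $(V_A\otimes V_B)\rho'(V_A\otimes V_B)^\dagger$ is $O(\sqrt{\delta})$-close in trace distance to $\proj{\epr}^{\otimes 2}\otimes \rho_{\text{junk}}$, with each $A_{ij}$ and $B_{ij}$ mapped to its prescribed two-qubit Pauli on the EPR register. Converting these measurement estimates into the state-dependent-norm bound required by Definition~\ref{def:close} gives the isometric closeness of Definition~\ref{def:robust-rigid} with $f(\delta)=O(\sqrt{\delta})$. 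The main obstacle is the error tracking at the anticommutation step: each algebraic identity holds only approximately, and one chains several of them. The $O(\sqrt{\delta})$ bound is preserved because every $A_{ij}$ is unitary, so left multiplication is an isometry in state-dependent distance; errors therefore accumulate additively via the triangle inequality rather than multiplicatively, and the swap-isometry construction propagates this linearly to the final state- and measurement-distance bounds.
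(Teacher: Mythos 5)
The paper does not actually prove this theorem itself---it imports it directly from the cited reference \cite{wu2016device}---and your outline follows essentially the same route as that reference: extracting $\pm 1$ observables $A_{ij}$, $B_{ij}$ for the nine entries, deriving approximate anticommutation of the logical Pauli pairs from the row/column parity relations plus cross-player consistency, and applying the swap isometry with additive, $O(\sqrt{\delta})$ state-dependent error tracking. Your sketch is consistent with that proof; the one point to treat carefully when fleshing it out is that each square entry is measured under two different queries, so each $A_{ij}$ must be fixed from a single query and related to its counterpart from the other query via consistency with the other player's observables, which is exactly what your intersection-consistency step is for.
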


Next we recall the Mermin GHZ game.

\begin{definition}[GHZ game]\label{def:ghz}
The game $\GHZ$ is a $(3,1,1)$ stabilizer game over $2$-dimensional qubits defined as follows. The sets $X_1 = X_2 = X_3 = \{0,1\}$. The distribution $\pi$ is uniform over the set $\{(0,0,0),(0,1,1),(0,1,1),(1,0,1)\}$. For all queries $x$ the vector $w_x = (1,1,1)$. For $x=(0,0,0)$, $b_x=0$, and for all other $x$, $b_x=1$. 
\end{definition}

It is well-known that there is a honest strategy based on making Pauli measurements on a GHZ state $\ket{\psi_\GHZ} = \frac{1}{\sqrt{2}}(\ket{000}+\ket{111})$ (which can be prepared in depth $3$) and that succeeds with probability $1$ in the game $\GHZ$.

\section{Lightcone arguments for low-depth circuits}
\label{sec:lightcone-circuit}


Let $N\geq 1$ be an integer. We write $\grid_N$ for the set $\{1,\ldots,N\}^2$, that we often identify with the ``grid graph'' of degree $4$, which is the graph on this vertex set with an edge between $(i,j)$ and $(i\pm 1,j\pm 1)$, with addition taken modulo $N$. (As a matter of notation we often identify a graph with its vertex set.)
 
For an integer $0 \leq L \leq N$  and $u\in \grid_N$ we write $\bx_L(u)$, or $\bx(u)$ when $L$ is implicit, for the set $\bx_L(u) = \{u\} + \{-L,\ldots,L\}^2\subseteq\grid_N$ (with addition again taken $\bmod N$). In other words, $\bx_L(u)$ is the closed ball of radius $L$ around $u$ in the $L_\infty$ metric.

\subsection{Lightcones}
\label{sec:lightcones}

Recall that the circuits that we consider are defined over an arbitrary gate set with bounded fan-in $K$.
Given a circuit $\m C$, we introduce the natural notion of a \emph{circuit graph}, with vertices at the gates and edges along the wires. 

\begin{definition}
Let $\m C$ be a circuit. The \emph{circuit graph} associated with $\m C$ is a directed graph on vertex set $V = \mathcal{I}\cup \mathcal{U} \cup \mathcal{O}$. Here $\mathcal{I}$ contains one vertex for each input wire, $\mathcal{O}$ contains one vertex for each output wire, and $\mathcal{U}$ contains one vertex for each gate. There is an edge from $u$ to $v$ if the output of $u$ is an input of $v$. In particular, all vertices of $\mathcal{I}$ are sources (have indegree $0$) and all vertices of $\mathcal{O}$ are sinks (have outdegree $0$). We call vertices in $\mathcal{I}$ \emph{input vertices} and vertices in $\mathcal{O}$ \emph{output vertices}. 
\end{definition}

We typically consider circuits that are spatially local on a 2D grid, in which case we identify the input and output sets of the graph with a grid, i.e.\ $\mathcal{I}=\mathcal{O}=\grid_N$ for some integer $N\geq 1$. Note that the circuit graph of a circuit with fan-in $K$ has in-degree bounded by $K$, but has no a priori bound on the out-degree.  

\begin{definition}
Let ${\m C}$ be a circuit. For a vertex $v$ in the circuit graph define its \emph{backward lightcone} $L_b(v)$ as the set of input vertices $u$ for which there exists a path in the circuit graph from $u$ to $v$. For an input vertex $u$ define the \emph{forward lightcone of $u$}, $L_f(u)$, as the set of output vertices $v$ such that $u\in L_b(v)$.
\end{definition}

The following lemma is established in Section 4.2 of \cite{bravyi2017quantum} during the proof of their Theorem 2. We include the short proof for completeness.

\begin{lemma}[\cite{bravyi2017quantum}]\label{lem:lightcones}
 Let $\m C$ be  a circuit that has depth $D$ and maximum fan-in $K$. Then the following hold:
\begin{itemize}
\item 	All backward lightcones are small. That is, for every vertex $v$ of the circuit graph, $\abs{L_b(v)} \leq K^D$.
\item 	Most forward lightcones are small. That is, for any $\mu\in(0,1)$,
	\begin{equation}
		\Pr_{v}[L_f(v) \geq \mu\1K^D] \leq \mu\;,
	\end{equation}
	where the probability is taken over the choice of a uniformly random input vertex $v\in \mathcal{I}$.
\end{itemize}
\end{lemma}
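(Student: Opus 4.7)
The plan is to prove the two bullet points by direct counting, with Markov's inequality doing the work for the second.

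For the first bullet, I would argue by induction on the depth $D$. Consider any vertex $v$ in the circuit graph. Any input vertex $u \in L_b(v)$ lies at the start of a directed path to $v$ of length at most $D$. Reading such a path backwards from $v$, at each step I branch into at most $K$ predecessors by the fan-in bound. Hence the number of directed paths into $v$ of length at most $D$ is at most $1 + K + K^2 + \cdots + K^D$, and in particular the set of distinct starting vertices, which contains $L_b(v)$, has size at most $K^D$ (using that input vertices only sit at the top of the graph, so each contributes at least one terminal path of length $\leq D$). A cleaner way to phrase it: define $B_t(v)$ to be the set of vertices that reach $v$ via a directed path of length at most $t$; then $|B_0(v)| = 1$ and $|B_{t+1}(v)| \le K\,|B_t(v)|$ by the fan-in bound, so $|B_D(v)| \le K^D$, and $L_b(v) \seq B_D(v) \cap \m I$.

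For the second bullet, the key observation is the double-counting identity
\begin{equation*}
\sum_{u \in \m I} |L_f(u)| \;=\; \sum_{u \in \m I}\sum_{w \in \m O} \mathbb 1[u \in L_b(w)] \;=\; \sum_{w \in \m O} |L_b(w)| \;\leq\; |\m O|\cdot K^D,
\end{equation*}
where the last inequality uses the first bullet applied to each output vertex $w$. In the setting of interest we have $\m I = \m O = \grid_N$, so $|\m O| = |\m I|$, and dividing by $|\m I|$ gives $\Es{v}[|L_f(v)|] \leq K^D$ when $v$ is chosen uniformly in $\m I$. Markov's inequality then yields
\begin{equation*}
\Prob{|L_f(v)| \geq \mu^{-1} K^D} \;\leq\; \frac{\Es{v}[|L_f(v)|]}{\mu^{-1} K^D} \;\leq\; \mu,
\end{equation*}
which is the desired bound.

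There is no real obstacle here; the only thing to be a little careful about is the identification of $\m I$ with $\m O$ (implicit in the lemma's phrasing, and clean in the grid setting we use throughout), so that the averaging over input vertices can be compared to the bound on backward lightcones indexed by output vertices. If one wanted to state the lemma for general circuits, the right-hand side would carry a factor $|\m O|/|\m I|$, but for our applications both sets are identified with $\grid_N$ so this factor is $1$.
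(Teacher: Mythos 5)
Your proof is correct and follows essentially the same route as the paper: bound backward lightcones by counting (maximal) backward paths using the fan-in bound, then double-count the edges of the graph $\{(u,w): u\in L_b(w)\}$ to bound $\Es{v}\abs{L_f(v)}$ and apply Markov's inequality, with the same implicit identification $\abs{\m I}=\abs{\m O}$ that the paper also makes. The only quibble is that your ``cleaner'' recursion should read $\abs{B_{t+1}(v)}\leq 1+K\abs{B_t(v)}$, but your terminal-path (leaf-counting) argument already gives the stated $K^D$ bound, at the same level of precision as the paper's own proof.
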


\begin{proof}
Every path in the circuit graph has length at most $D$. Each vertex has indegree at most $K$. Then for any fixed vertex $v$, there are at most $K^D$ distinct paths through the circuit graph ending at $v$. Therefore, $\abs{L_b(v)} \leq K^D$.

Now consider the directed graph with an edge from $u$ to $v$ if $u \in L_b(v)$. The in-degree of vertex $v$ is equal to $\abs{L_b(v)}$ while its out-degree is $\abs{L_f(v)}$. Each vertex has an in-degree of at most $K^D$, so there are at most $nK^D$ edges in the graph, where $n$ is the number of output wires for the circuit. Fix $\mu\in (0,1)$. By Markov's inequality, at most $\mu n$ vertices may have out-degree at least $\frac1\mu K^D$.
\end{proof}

\subsection{Input patterns}
\label{sec:patterns}

We introduce a method to ``plant'' queries to the players in a stabilizer game into the input to a circuit. The main definition we need is of an \emph{input pattern}, that specifies locations for each players' question, as well as paths between these locations. These paths, or ``stars'', will be useful in the design of a quantum circuit that implements the players' strategy as a low-depth quantum circuit; this is explained in Section~\ref{sec:game-completeness}.

\begin{figure}[htb!]
\centering%
\hfill
\includegraphics[scale=0.8, angle = 0]{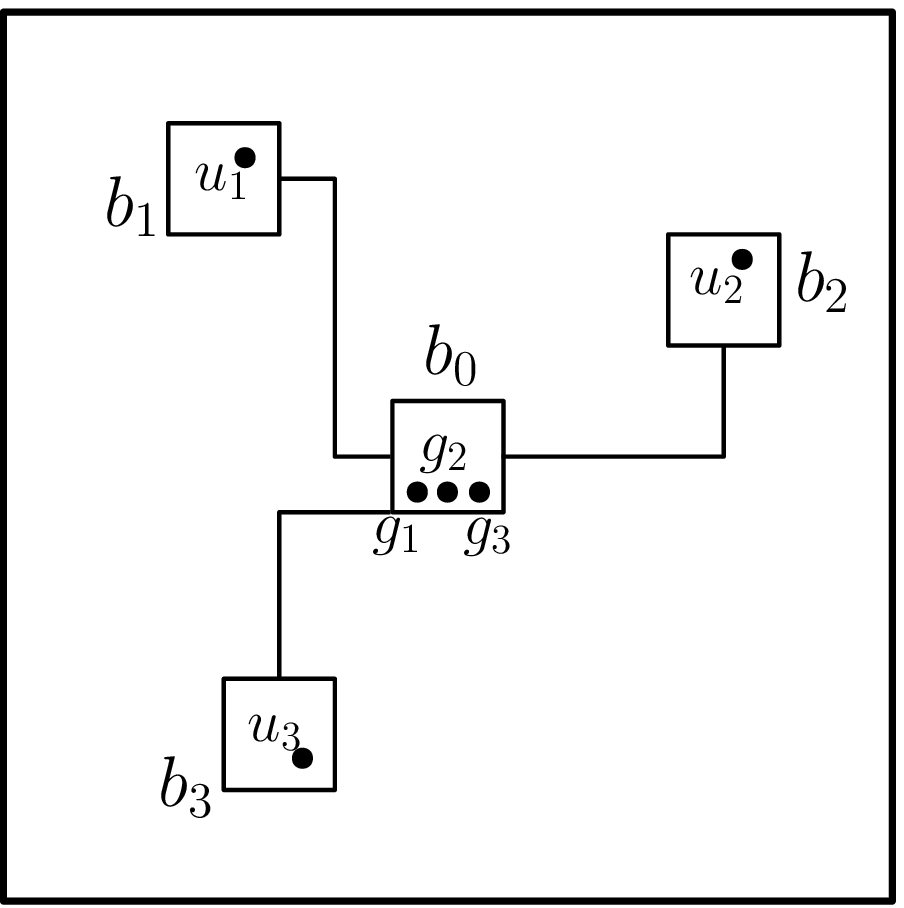}
\hfill
\includegraphics[scale=0.8, angle = 0]{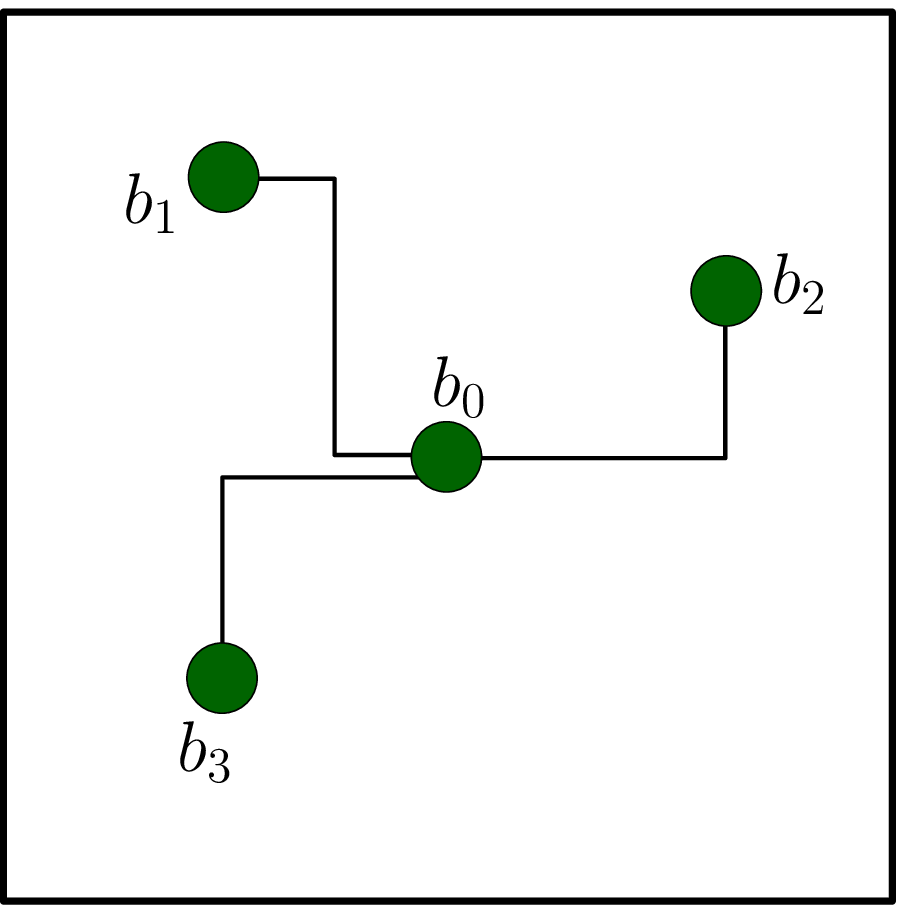}
\hfill
\caption{A star centered at box $b_0$, connecting $\ell=3$ boxes $b_1$, $b_2$, $b_3$ in $\grid_N$ (grid edges are not shown on the picture). The paths may be extended inside each box to connect the vertices $u_1,u_2,u_3$ to $g_1$, $g_2$, $g_3$ in an arbitrary way. Such connections will be used in Section \ref{sec:game-completeness} to define low-depth measurements which distribute a three-qubit state at sites $g_1,g_2,g_3$ among the qubits $u_1,u_2,u_3$.
On the right, we show the contraction of the star to a star graph. The paths are contracted to single edges (shown by thick lines) and the boxes are contracted to single vertices (shown by filled-in circles).}
\label{fig:star}
\end{figure}

\begin{definition}[Star]
	See Figure \ref{fig:star}.
	We say that a subset of $\grid_N$ is a \emph{box} if it is equal to $\bx_L(u)$ for some integer $L$ and vertex $u$. A \emph{star} $\Gamma$ is a collection of disjoint boxes together with a collection of disjoint paths such that
	\begin{itemize}
	\item each path has its endpoints on the boundaries of boxes, and
	\item contracting each box to a single vertex, and each path to a single edge, results in a a star graph, i.e. a graph that has $\ell$ vertices of degree one, one vertex of degree $\ell$, and no other vertices.
	\end{itemize}
We use the term \emph{central box} to refer to the unique box which contains one endpoint of every path. If $b_0$ is the central box, we may say that the star $\Gamma$ is \emph{centered at $b_0$}. By abuse of notation, we often write $\G$ to refer to the set of vertices contained in the paths and boxes of $\G$.
\end{definition}

The following definition captures exactly the amount of information that we need to remember about a given circuit $\m C$ in order to talk about the spread of correlations within $\m C$ --- we will forget everything about the circuit except some information about its lightcones. 

\begin{definition}[Input pattern]\label{def:pattern}
Let $\game$ be an $(\ell,k,m)$ stabilizer game. Let $1\leq r \leq N$ be integer. 
An \emph{input pattern} associated with $(\game,N,r)$ is a tuple $\pattern = \{(u^{(i)},\Gamma^{(i)})\}_{1\leq i \leq r}$ such that 
\begin{itemize}
	\item each $u^{(i)}= (u^{(i)}_1,\ldots,u^{(i)}_\ell)$ is an $\ell$-tuple of vertices of $\grid_N$, which we refer to as \emph{input locations},
	\item each $\G^{(i)}$ is a star,
	\item the vertices of $u^{(i)}$ are contained in distinct noncentral boxes of $\G^{(i)}$. For a vertex $u$, we write $\bx(u)$ for the box that contains $u$. 
\end{itemize}
\end{definition}

\begin{definition}[Circuit specification]
A \emph{circuit specification} $\spec$ on $\grid_N$ is a triple $\spec=(L_f,\bad_{in},\bad_{out})$ such that for all $u\in \grid_N$,  $L_f(u) \subseteq \grid_N$ is a set called the \emph{forward lightcone} associated with $u$, and $\bad_{in},\bad_{out} \subseteq \grid_N$ are sets called the \emph{bad input set} and \emph{bad output set} respectively. 
\end{definition}

\begin{definition}
For integer $B,R_{in},R_{out}$ we say that a circuit specification $\spec = (L_f,\bad_{in},\bad_{out})$ on $\grid_N$ is \emph{$(B,R_{in},R_{out})$-bounded} if the following hold: $|\bad_{in}|\leq R_{in}$, $|\bad_{out}|\leq R_{out}$, and for every $u\in \grid_N\backslash \bad_{in}$ it holds that $|L_f(u)|\leq B$. 
\end{definition}


Given a fixed circuit specification, the following definition captures the conditions that are required for an input pattern so that the circuit game associated with that input pattern can be reduced to a  nonlocal game (the reduction is explained in Section~\ref{sec:circuit-games}).

The intuition to keep in mind for the definition is as follows:
each player in the nonlocal game receives her input from one of the input locations and puts her output along the paths of the star. Each player also puts some outputs inside their box of the star. In order for it to be possible to implement the strategy locally, we must have the outputs of each player be causally independent of the inputs of the other players. We ensure this by checking that the forward lightcone of one player's input misses the locations of the other players' outputs.

\begin{definition}[Causality-respecting patterns]\label{def:good-pattern}
Let $\spec = (L_f,\bad_{in},\bad_{out})$ be a circuit specification.
Let $\pattern=\{(u^{(i)}, \G^{(i)})_i\}$ be an input pattern. We say that a pair $(u^{(i)},\G^{(i)})$ is \emph{individually-\spec-causal} with respect to $\pattern$ if the following hold:\footnote{Recall that we identify a star $\G$ with the union of the vertex sets of its paths and boxes.}
\begin{enumerate}[label=(\alph*)]
	\item 
	\label{item:miss-own-lightcone}
	For each $k$, the forward lightcone of $u_k^{(i)}$ misses $\G^{(i)}$, except possibly near $u_k^{(i)}$. More precisely,
	 $L_f(u_k^{(i)}) \cap \G^{(i)} \seq \bx(u_k^{(i)})$.
	\item 
	\label{item:miss-other-lightcone}
	For all $(u^{(j)},\G^{(j)})\in \pattern$ (with $j \neq i$) and for all $k$, the forward lightcone of $u^{(j)}_k$ misses $\G^{(i)}$ entirely, i.e.\ $L_f(u^{(j)}_k)\cap \G^{(i)}=\emptyset$.
	\item \label{item:miss-bad-out}
	$\G^{(i)}$ misses $\bad_{out}$, i.e. $\G^{(i)}\cap \bad_{out} = \emptyset$. 
\end{enumerate}
\noindent
Furthermore, we say that a pair $(u,\Gamma)$ is \emph{$\spec$-valid} if the following conditions hold. 
\begin{enumerate}[label=(\alph*),resume]
	\item \label{item:miss-bad-in}
	Every input location lies outside of $\bad_{in}$, i.e. $u_k^{(i)}\cap \bad_{in} = \emptyset$ for all $k,i$.
\end{enumerate}
\noindent
We say that an input pattern $\pattern$ is \emph{$\spec$-causal} if every $(u^{(i)},\G^{(i)})\in \pattern$ is individually-\spec-causal and \spec-valid with respect to \pattern..

\end{definition}


Finally, we introduce a distribution on input patterns so that for any  circuit specification \spec\ that is $(B,R_{in},R_{out})$-bounded for sufficiently small parameters $B$, $R_{in}$, and $R_{out}$, a sample from the distribution  gives an \spec-causal pattern with high probability (see Section~\ref{sec:single-input-patterns} and Section~\ref{sec:arbitrary-input-patterns}).

\begin{definition}[Random input patterns]\label{def:sample-dist}
Let $L,N\geq 1$, $\ell\geq 1$, and $1\leq r \leq N$ be integer such that $3L\sqrt{\ell+1} \leq M=\lfloor N/\sqrt{r}\rfloor$. Divide $\grid_N$ in $r$ disjoint squares $S^{(1)},\ldots,S^{(r)}$ of side length $M$ each.\footnote{It does not matter where these squares are located, as long as they do not overlap.} Partition each square into $T = \lfloor \frac{M}{2L+1} \rfloor^2$ boxes of side length $(2L+1)$, in an arbitrary way. For each possible choice of $(\ell+1)$ distinct boxes $b_0,b_1,\ldots,b_\ell$ within a square, fix a collection $\textsc{stars}(b_0,\ldots,b_\ell)$ of $L/\ell$ stars such that
\begin{itemize}
	\item  each star has $b_0$ as its central box and $b_1,\ldots,b_\ell$ as its other boxes,
	\item the total length of the paths in any star is at most $2\ell M$, and
	\item the paths of the distinct stars are vertex-disjoint.
\end{itemize}

Consider the following distribution $\mathcal{D}^{(r)}(N,L)$ on input patterns on $\grid_N$. 
For each $i\in \{1,\ldots,r\}$ select $b^{(i)}_0,\ldots,b^{(i)}_\ell$ uniformly at random among the $T$ boxes that partition the $i$-th square $S^{(i)}$, for $j\in\{1,\ldots,\ell\}$, a vertex $u^{(i)}_j$ uniformly at random within the $j$-th selected box.
Finally, select a star $\Gamma^{(i)} \in \textsc{stars}(b^{(i)}_0,b^{(i)}_1,\ldots,b^{(i)}_\ell)$ uniformly at random.
Return the input pattern $\pattern = \{(u^{(i)},\Gamma^{(i)})\}_{1\leq i \leq r}$. 
\end{definition}

\subsection{Single-input patterns}
\label{sec:single-input-patterns}
We'd like to show that patterns in the support of $\m D^{(r)}$ are ``very nearly'' \spec-causal for most \spec\ in the sense that removing only an exponentially small fraction of inputs yields an \spec-causal pattern. To warm up, we argue that for any $(B,R_{in},R_{out})$-bounded circuit specification $\spec$, an input pattern sampled from the distribution $\mathcal{D}^{(1)}$ introduced in Definition~\ref{def:sample-dist} is \spec-causal with high probability. 
We  use this single-input analysis later to show that in a many-input pattern, most of the inputs are individually-\spec-causal. 

In this subsection only, we use $M$ instead of $N$ to denote the grid size. We do this because the distribution $\m D^{(r)}(N,L)$ can be (informally) thought of as the direct product of $r$ copies of $\m D^{(1)}(M,L)$, and the former is of greater interest to us. 


\begin{lemma}\label{lem:good-r1}
Let $M\geq 1$, $1\leq B,L\leq M/4$ and $0\leq R_{in},R_{out} \leq M^2$ be integer. Let $\spec= (L_f,\bad_{in},\bad_{out})$ be a circuit specification for $\grid_M$ that is $(B,R_{in},R_{out})$-bounded. Let $\pattern = \set{(u,\Gamma)}$ be drawn from the distribution $\m D^{(1)}$ introduced in Definition \ref{def:sample-dist}. Then the probability that $(u,\Gamma)$ is not individually-$\spec$-causal with respect to \pattern\ is $O(L^2(R_{out}+B)/M^2+(R_{out}+B)/L)$.
Moreover, the probability that \pattern\ is not $\spec$-valid is $O(R_{in}/M^2)$.  
Overall, the probability that \pattern\ is not \spec-causal is at most
\begin{equation}
O\left(L^2(R_{out}+B)/M^2 + (R_{out} + B)/L + R_{in}/M^2\right),
\end{equation}
where the $O$ notation hides  factors polynomial in $\ell$. 
\end{lemma}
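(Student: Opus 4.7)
The plan is to reduce each of the four causality conditions to a bound on $\Pr[w\in\G]$ for a fixed vertex $w\in \grid_M$, and to establish this bound by decomposing $\G$ into its boxes and its paths. I will first argue the key lemma: for every $w\in \grid_M$,
\[
\Pr[w\in\G]\,\leq\,\Pr[w\in\text{boxes of }\G] + \Pr[w\in\text{paths of }\G] \,=\, O\!\left(L^2/M^2\right) + O(1/L).
\]
The first term comes from the fact that the box $b_w$ containing $w$ is one of the $\ell+1$ boxes chosen among $T=\Theta(M^2/L^2)$, so the probability it is picked is $(\ell+1)/T=O(L^2/M^2)$. For the second term I will use the vertex-disjointness clause in Definition~\ref{def:sample-dist}: for any fixed choice of boxes $(b_0,\ldots,b_\ell)$, at most one of the $L/\ell$ stars in $\textsc{stars}(b_0,\ldots,b_\ell)$ can have $w$ on one of its paths, so conditional on the boxes the probability of hitting $w$ is at most $\ell/L$.

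With this key bound in hand, I will handle the three conditions \ref{item:miss-own-lightcone}, \ref{item:miss-bad-out}, and \ref{item:miss-bad-in} (note \ref{item:miss-other-lightcone} is vacuous for $r=1$). For condition \ref{item:miss-bad-out}, a union bound over $\bad_{out}$ gives
\[
\Pr[\G\cap\bad_{out}\neq\emptyset] \,\leq\, R_{out}\cdot\big(O(L^2/M^2)+O(1/L)\big).
\]
For condition \ref{item:miss-own-lightcone}, I will condition on $u_k$ and its box $b_k$. Provided $u_k\notin\bad_{in}$, the boundedness hypothesis gives $|L_f(u_k)|\leq B$. Since the other boxes and the star are chosen independently of the position of $u_k$ inside $b_k$ (and the key lemma bound continues to hold when we further condition on the fixed box $b_k$ for $u_k$), a union bound over the at most $B$ vertices of $L_f(u_k)\setminus\bx(u_k)$ yields $O\!\big(B(L^2/M^2+1/L)\big)$ per $k$; summing over $k\in\{1,\ldots,\ell\}$ absorbs $\ell$ into the constant. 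Adding the \ref{item:miss-own-lightcone} and \ref{item:miss-bad-out} contributions gives the first displayed bound.

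Finally, for $\spec$-validity, I will observe that the marginal of $u_k$ is uniform over the union of the $T$ boxes partitioning (most of) $\grid_M$, which has $T(2L+1)^2=\Theta(M^2)$ vertices. Hence $\Pr[u_k\in\bad_{in}]\leq R_{in}/\Theta(M^2)=O(R_{in}/M^2)$, and a union bound over $k$ gives the \ref{item:miss-bad-in} bound. Summing the \ref{item:miss-own-lightcone}, \ref{item:miss-bad-out}, and \ref{item:miss-bad-in} contributions yields the overall bound in the lemma.

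The main technical point I expect to require some care is the path-hitting bound $\Pr[w\in\text{paths of }\G]\leq\ell/L$: one must extract exactly the right use of the vertex-disjointness of the $L/\ell$ stars inside $\textsc{stars}(b_0,\ldots,b_\ell)$, so that conditioning on the box collection reduces the event to picking one of $L/\ell$ disjoint candidates, of which at most one contains $w$. A minor bookkeeping issue is that for condition \ref{item:miss-own-lightcone} the vertex $u_k$ simultaneously plays the role of an input location and sits inside one of the chosen boxes of $\G$; I will address this by first conditioning on the box $b_k$ and the exact position of $u_k$, then applying the $\Pr[w\in\G\mid b_k]$ bound to each $w\in L_f(u_k)\setminus \bx(u_k)$, and only at the end averaging over $u_k$.
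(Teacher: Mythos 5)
Your proposal is correct and takes essentially the same approach as the paper: the same counting of $T=\Theta(M^2/L^2)$ partition boxes for the $O(L^2/M^2)$ box-hitting probability, the same use of the vertex-disjointness of the $L/\ell$ candidate stars for the $O(\ell/L)$ path-hitting probability, and the same uniform-marginal argument for the $\bad_{in}$ term, combined by union bounds over $\bad_{out}$ and the lightcones $L_f(u_k)$. Packaging these as a single per-vertex bound $\Pr[w\in\Gamma]=O(L^2/M^2+1/L)$ before union bounding is only a reordering of the paper's union bounds, not a different argument.
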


\begin{proof}
We check all conditions in Definition~\ref{def:good-pattern}.
Since $\pattern$ contains only one (input, star) pair, condition \ref{item:miss-other-lightcone} (which restricts the interactions between pairs) is satisfied automatically.
 
Now we check conditions \ref{item:miss-own-lightcone} and \ref{item:miss-bad-out}. Call a box bad if it intersects $\bad_{out}$. Under $\mathcal{D}^{(1)}$ there are $\lfloor M/(2L+1) \rfloor^2 \geq 1/4 (M/2L)^2$ possible box locations. By a union bound, the probability that any box is bad is at most  $16L^2R_{out}/M^2 =  O(L^2R_{out}/M^2)$.
There are at least $L/\ell $ possible choices for the paths of $\Gamma$. Since all such choices are disjoint, again by a union bound the probability that $\Gamma \cap X \neq \emptyset$ for some subset $X$ is at most $\ell\abs X / L$. 
Letting $X = \bad_{out} \cup \bigcup_i L_f(u_i)$, so that $\abs X \leq R_{out} + \ell B$, we see that the probability of violating condition \ref{item:miss-own-lightcone} or condition \ref{item:miss-bad-out} is at most $O\left(\frac{\ell R_{out} + \ell^2 B}{L}\right)$.
 
Similarly, since the $u_j$ are chosen independently, for any $u\neq u'\in\{u_0,u_1,\ldots,u_\ell\}$ the probability that $L_f(u)\cap \bx(u') \neq \emptyset$ is $16L^2B/M^2 =  O(L^2B/M^2)$.

Finally we check condition \ref{item:miss-bad-in}. Any $u_j$ is chosen independently among $ (2L+1)^2 \geq M^2/8 = \Omega(M^2)$ possibilities, so the probability that $u_j\in \bad_{in}$ is at most $8R_{in}/M^2 =  O(R_{in}/M^2)$; we conclude by the union bound, and absorb the parameter $\ell$ in the $O(\cdot)$. 
\end{proof}

\subsection{Arbitrary-input patterns}
\label{sec:arbitrary-input-patterns}

We extend the argument from the previous section to the case where the input pattern contains more than one input. 

\begin{lemma}[Random input patterns are usually causal]\label{lem:good-rgeneral}
Let $N\geq 1$, $1\leq r \leq N$ and $1\leq B,R_{in},L\leq N/4$ be integer. Let $\spec= (L_f,\bad_{in},\emptyset)$ be a circuit specification for $\grid_N$ that is $(B,R_{in},0)$-bounded. Then the probability that an input pattern $\pattern$ chosen according to $\mathcal{D}^{(r)}(N,L)$ (as defined in Definition \ref{def:sample-dist}) is not $\spec$-causal is at most $O(r^2B(r(L^2+R_{in})/N^2 + 1/L ))$.

\end{lemma}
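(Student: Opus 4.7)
The plan is to reduce to the single-input case established in Lemma~\ref{lem:good-r1} by exploiting the product structure of the distribution $\m D^{(r)}(N,L)$. By construction, the $r$ squares $S^{(1)},\ldots,S^{(r)}$ are disjoint and the draws $(u^{(i)},\G^{(i)})$ are mutually independent; the marginal distribution of each pair is exactly the single-input distribution $\m D^{(1)}(M,L)$ on its square $S^{(i)}$, where $M = \lfloor N/\sqrt r\rfloor$. It therefore suffices to bound, for each $i$, the probability that $(u^{(i)},\G^{(i)})$ fails to be individually-$\spec$-causal or fails to be $\spec$-valid with respect to $\pattern$, and then take a union bound over $i\in\{1,\ldots,r\}$.

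The only condition in Definition~\ref{def:good-pattern} that is not handled directly by the single-input lemma is condition~\ref{item:miss-other-lightcone}, since it refers to other pairs $(u^{(j)},\G^{(j)})$ with $j\neq i$. To handle it, I condition on the choices made in all squares $S^{(j)}$ with $j\neq i$. Given this conditioning, the constraint that $\G^{(i)}$ must avoid $\bigcup_{j\neq i,\,k}L_f(u^{(j)}_k)$ becomes a constraint that $\G^{(i)}$ must avoid a fixed subset of $\grid_N$ of size at most $(r-1)\ell B$. I absorb this set into an augmented circuit specification $\spec_i = (L_f|_{S^{(i)}},\,\bad_{in}|_{S^{(i)}},\,\bigcup_{j\neq i,\,k}L_f(u^{(j)}_k)\cap S^{(i)})$ on the sub-grid $S^{(i)}$. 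This augmented spec is $(B, R_{in}, O(rB))$-bounded (using $\bad_{out}=\emptyset$ originally), and $(u^{(i)},\G^{(i)})$ drawn from $\m D^{(1)}(M,L)$ is individually-$\spec$-causal and $\spec$-valid with respect to $\pattern$ if and only if it is $\spec_i$-causal. Applying Lemma~\ref{lem:good-r1} with these augmented parameters gives a conditional failure probability of
\[
O\!\left(\frac{L^2 rB}{M^2} + \frac{rB}{L} + \frac{R_{in}}{M^2}\right).
\]

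Taking a union bound over $i\in\{1,\ldots,r\}$ and substituting $M^2 = \Theta(N^2/r)$ yields overall failure probability
\[
O\!\left(\frac{r^3 L^2 B}{N^2} + \frac{r^2 B}{L} + \frac{r R_{in}}{N^2}\right),
\]
which, using $B\geq 1$ to absorb the last summand into $r^3 B R_{in}/N^2$, matches the target bound $O\!\big(r^2 B(r(L^2+R_{in})/N^2 + 1/L)\big)$. The main bookkeeping obstacle is the augmentation of $\bad_{out}$ to capture condition~\ref{item:miss-other-lightcone} without double-counting, and verifying that the size of the augmented set is $O(rB)$ uniformly in the conditioning; once this is done the rest is a direct application of Lemma~\ref{lem:good-r1} and a union bound.
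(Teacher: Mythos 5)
Your overall route is the same as the paper's: absorb the other pairs' forward lightcones into an augmented specification with a nonempty $\bad_{out}$, apply the single-input Lemma~\ref{lem:good-r1} on each square $S^{(i)}$, and union bound over $i$. The arithmetic at the end is also fine (your $rR_{in}/N^2$ should be $r^2R_{in}/N^2$ after the union bound, but either way it is dominated by the $r^3BR_{in}/N^2$ term since $B\geq 1$).

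The genuine gap is exactly the point you deferred as ``bookkeeping'': the claim that the augmented set $\bigcup_{j\neq i,k}L_f(u^{(j)}_k)$ has size at most $(r-1)\ell B$ \emph{uniformly in the conditioning} is false. The specification is only $(B,R_{in},0)$-bounded, which bounds $|L_f(u)|$ by $B$ only for $u\notin\bad_{in}$; if some other square's input location $u^{(j)}_k$ happens to land in $\bad_{in}$, its forward lightcone can be arbitrarily large (up to $N^2$), so your augmented spec $\spec_i$ need not be $(B,R_{in},O(rB))$-bounded for that conditioning, and the conditional application of Lemma~\ref{lem:good-r1} breaks down. The fix is the extra step the paper takes: for each $i$, split off the event that some pair $(u^{(j)},\G^{(j)})$, $j\neq i$, is not $\spec$-valid. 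By the second bound of Lemma~\ref{lem:good-r1} and a union bound this event has probability $O(r\cdot R_{in}/M^2)=O(r^2R_{in}/N^2)$; on its complement all the relevant lightcones are bounded by $B$, so the augmented set does have size at most $\ell r B$ and your estimate applies, while the exceptional event contributes an extra $O(r^3R_{in}/N^2)$ after summing over $i$, which is absorbed into the target bound $O(r^2B(r(L^2+R_{in})/N^2+1/L))$ because $B\geq 1$. With that splitting made explicit, your argument coincides with the paper's proof.
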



\begin{proof}
Let $\pattern = \{(u^{(i)},\Gamma^{(i)})\}$ be an input pattern chosen according to $\mathcal{D}^{(r)}(M,L)$. For $i \in\{ 1,\ldots,r\}$ we let $\pattern^{(i)}$ be the single-pair pattern $\{(u^{(i)},\Gamma^{(i)})\}$. 
Let $X_i$ be the indicator variable that the pair $(u^{(i)},\Gamma^{(i)})$ is not individually-$\spec$-causal with respect to \pattern. Let $Y_i$ be the indicator that $(u^{(i)},\Gamma^{(i)})$ is not  $\spec$-valid. To see that \pattern\ is \spec-causal, it suffices to check that each input is \spec-valid and individually-\spec-causal with respect to \pattern. This is true if and only if $\sum_i X_i=0$ and $\sum_i Y_i =0$. We first bound the latter event. 

\begin{claim}
It holds that
\begin{equation}\label{eq:xdepy-2}
\Pr\Big(\sum_j Y_j \neq 0  \Big)   \,=\, O\big(r^2R_{in}/N^2\big)\;.
\end{equation}
\end{claim}

\begin{proof}
Applying the second bound in Lemma~\ref{lem:good-r1} and using that $M=\lfloor N/\sqrt{r}\rfloor$ it follows that for any $i\in\{1,\ldots,r\}$,
\begin{equation*}
\Prob{Y_i \neq 0} 
\,=\,\Pr\big(\pattern^{(i)} \text{ is not } \spec\text{-valid}\big)
\,\leq\, 8R_{in}/M^2  
\,=\, O\big(rR_{in}/N^2\big)\;.
\end{equation*}
The claim follows by a union bound over the $r$ patterns $\pattern^{(i)}$.
\end{proof}

Next we turn to the $X_i$. 

\begin{claim}
\begin{equation}
\Pr\Big(\sum_i X_i \neq 0\Big)\,=\, O\big(r^2B(rL^2/N^2 + 1/L)\big) + \sum_i \Pr\Big(\sum_{j \neq i} Y_j \neq 0\Big)\;.\label{eq:xdepy}
\end{equation}
\end{claim}

\begin{proof}
For any $i\in\{1,\ldots,r\}$ let  
$$\bad_{out}^{(i)} = \bigcup_{k\neq i} \Big(\cup_j L_f(u_j^{(k)})\Big)\;,$$
and define a specification $\spec^{(i)}= (L_f,\bad_{in},\bad_{out}^{(i)})$. With these definitions, it follows that
\begin{align}
	\Pr(X_i = 0) & = \Pr \big(\pattern^{(i)} \text{ is } \text{individually-}\spec\text{-causal}\big) \nonumber \\
	&\geq  \Pr \big(\pattern^{(i)} \text{ is } \text{individually-}\spec^{(i)}\text{-causal}\big) \;.\label{eq:probineq}
\end{align}
Indeed condition \ref{item:miss-bad-out} of being  individually-$\spec^{(i)}$-causal (see Definition~\ref{def:good-pattern}) implies all conditions of being individually-$\spec$-causal for  $\spec= (L_f,\bad_{in},\emptyset)$. 

In the event that $\pattern^{(j)}$ is $\spec$-valid for all $j\neq i$ (that is, when $\sum_{j \neq i} Y_j = 0$) we know that $L_f(u_j^{(k)}) \leq B$ for each $(j,k) \in \{1,\ldots,\ell\}\times \{1,\ldots,r\}$, and thus that  $|\bad_{out}^{(i)}| \leq \ell rB = O(rB)$. Using that the marginal distribution of a single pair $(u^{(i)},\Gamma^{(i)})$ from $\pattern$ is equal to $\mathcal{D}^{(1)}(M,L)$ (when seen as a distribution on the square $S^{(i)}$ associated with $(u^{(i)},\Gamma^{(i)})$), it follows from the bound in Lemma~\ref{lem:good-r1} that 
for any $i\in\{1,\ldots,r\}$,
\begin{equation}
\Pr \Big(X_i \neq 0  \Big| \sum_{j \neq i} Y_j = 0  \Big)  \,=\, O\big(rB(rL^2/N^2 + 1/L)\big)\;.\label{eq:xdepy-1}
\end{equation}

Applying the union bound, 
\begin{align}
\Pr\Big(\sum_i X_i \neq 0\Big)
&\leq   \sum_i\, \Prob{X_i \neq 0}\notag\\
&\leq \sum_i \Pr\Big(X_i \neq 0\Big| \sum_{j \neq i} Y_j = 0\Big) + \Pr\Big(\sum_{j \neq i} Y_j \neq 0\Big)\notag\\
&\leq O\big(r^2B(rL^2/N^2 + 1/L)\big) + \sum_i \Pr\Big(\sum_{j \neq i} Y_j \neq 0\Big)\;,\notag
\end{align}
where the last line follows from~\eqref{eq:xdepy-1}. 
\end{proof}

To conclude the proof of the lemma we write
\begin{align*}
	\Pr\big(\text{\pattern\ is not \spec-causal}\big) 
	&= \Pr\Big(\sum_i X_i+Y_i \neq 0\Big)\\
	&\leq \Pr\Big(\sum_i X_i \neq 0\Big)  + \Pr\Big(\sum_j Y_j \neq 0  \Big) \\
	& \leq O\big(r^2B(rL^2/N^2 + 1/L)\big) + \sum_i \Pr\Big(\sum_{j \neq i} Y_j \neq 0\Big) + O\big(r^2R_{in}/N^2\big) \\
	&\leq  O\big(r^2B(rL^2/N^2 + 1/L)\big) + O\big(r^3R_{in}/N^2\big) + O\big(r^2R_{in}/N^2\big) \\
	&= O\left(r^2B(r(L^2+R_{in})/N^2 + 1/L )\right)\;,
\end{align*}
where the third line uses~\eqref{eq:xdepy} and~\eqref{eq:xdepy-2} and the fourth uses~\eqref{eq:xdepy-2}. 
\end{proof}

The previous lemma shows that a random input pattern \pattern\ is \spec-causal with high probability. In this case we can define a game from \pattern\ so that in the game, a shallow circuit with specification \spec\ can be simulated by a set of spacelike-separated players. This simulation is perfect when \pattern\ is exactly \spec-causal. More generally, a weaker simulation argument still applies if a small constant fraction of inputs in \pattern\ are not \spec-causal.
The next lemma shows that this condition can be guaranteed to hold with much higher probability, exponentially close to $1$ rather than inverse-polynomially close. This bound will be used in the proof of Theorem~\ref{cor:exp-bgk-expsoundess}.




\begin{lemma}[Random input patterns are mostly causal with high probability]\label{lem:good-rgeneral2}
	Let $N\geq 1$, $1\leq r \leq N$ and $1\leq B,R_{in},L\leq N/4$ be integer. Let $\spec= (L_f,\bad_{in},\emptyset)$ be a circuit specification for $\grid_N$ that is $(B,R_{in},0)$-bounded. Consider an input pattern $\pattern$ chosen according to $\mathcal{D}^{(r)}(N,L)$.  Let
	\begin{align*}
	\pattern_{VAL} &= \big\{(u, \Gamma) \in \pattern | (u, \Gamma) \text{ is \spec-valid} \big\}\;,\\
	\pattern_{CAUS} &= \big\{(u, \Gamma) \in \pattern | (u, \Gamma) \text{ is individually-\spec-causal with respect to $\pattern_{VAL}$} \big\}\;.
	\end{align*}
	Then there exists universal constants $C,C'>0$ such that if $p = C' rB(rL^2/N^2 + \ell/L)$ then for any $t>0$, 
	\begin{align}
		\Pr\big(\abs{\pattern_{CAUS}} \geq r(1-p)-2t\big) &\geq 1-2\exp \paren{-t^2/8r}\;,
		\label{eq:good-rgeneral-a}
		\\
		\Pr\big(\abs{\pattern_{VAL}} \geq r (1 - C rR_{in}/N^2) - t\big) &\geq 1 - 2\exp \left ( -2t^2/r\right)\;.
		\label{eq:good-rgeneral-b}
	\end{align}
\end{lemma}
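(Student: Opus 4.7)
The plan is to prove the two concentration inequalities separately, leveraging the fact that in $\mathcal{D}^{(r)}(N,L)$ the $r$ pairs $(u^{(i)},\Gamma^{(i)})$ are drawn independently across $i$, from disjoint squares $S^{(i)}$ of side length $M = \lfloor N/\sqrt{r}\rfloor$ (so $M^2 = \Theta(N^2/r)$). For \eqref{eq:good-rgeneral-b}, let $V_i = \mathbf{1}[(u^{(i)},\Gamma^{(i)}) \text{ is }\spec\text{-valid}]$. The validity bound of Lemma~\ref{lem:good-r1}, applied to the restriction of $\spec$ to $S^{(i)}$, gives $\Pr[V_i=0] = O(R_{in}/M^2) = O(rR_{in}/N^2)$, and the $V_i$ are independent Bernoullis. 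Hoeffding's inequality then yields the desired bound directly, since $|\pattern_{VAL}| = \sum_i V_i$.

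For \eqref{eq:good-rgeneral-a} the difficulty is that the event $\lnot C_i$ (pair $i$ not individually-$\spec$-causal with respect to $\pattern_{VAL}$) depends on the random set $\pattern_{VAL}$, hence on \emph{all} pairs, so the $\lnot C_i$ are not independent. I would decompose $\mathbf{1}[\lnot C_i] \leq \mathbf{1}[A_i] + \mathbf{1}[B_i]$, where $A_i$ is the failure of condition~(a) of Definition~\ref{def:good-pattern} for pair~$i$ (a local event depending only on $(u^{(i)},\Gamma^{(i)})$) and $B_i$ is the failure of condition~(b) against $\pattern_{VAL}$; condition~(c) is vacuous because $\bad_{out}=\emptyset$. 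The $A_i$ are independent across $i$, and Lemma~\ref{lem:good-r1} with $R_{out}=0$ bounds $\mathbb{E}[A_i] = O(rL^2B/N^2 + B/L)$, so $\mathbb{E}[\sum_i A_i] \leq rp/2$ provided the constant $C'$ defining $p$ is taken sufficiently large. Hoeffding's inequality on the independent $A_i$ produces a tail of the form $\exp(-\Omega(t^2/r))$.

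The main obstacle is handling $\sum_i B_i$, which I would address through a two-stage revelation of randomness. First, condition on all input locations $(u^{(j)})_j$ and central boxes $(b^{(j)}_0)_j$ (the non-central boxes are determined by the $u^{(j)}_k$). This fixes $\pattern_{VAL}$ and every lightcone $L_f(u^{(j)}_k)$ for $j\in\pattern_{VAL}$. Conditional on this data, by the construction of $\mathcal{D}^{(r)}$ the stars $\Gamma^{(i)}$ remain independent, each uniform over the pre-fixed family $\textsc{stars}(b^{(i)}_0,\ldots,b^{(i)}_\ell)$ of $L/\ell$ pairwise vertex-disjoint stars. Setting $O_i := \bigcup_{j \in \pattern_{VAL},\,j\neq i,\,k} L_f(u^{(j)}_k)$, the event $B_i$ becomes $\Gamma^{(i)} \cap O_i \neq \emptyset$, and vertex-disjointness of the family yields the pointwise bound $\Pr[B_i\mid\cdot] \leq \ell\,|O_i \cap S^{(i)}|/L$. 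Summing, since the $S^{(i)}$ are disjoint and each $j\in\pattern_{VAL}$ contributes at most $\ell B$ vertices to $O := \bigcup_i O_i$, we obtain $\sum_i |O_i \cap S^{(i)}| \leq |O| \leq \ell r B$, hence $\sum_i \Pr[B_i\mid\cdot] \leq \ell^2 r B/L \leq rp/2$ (again for $C'$ sufficiently large).

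Conditional on the first stage, the $B_i$ are independent Bernoullis, so Hoeffding gives $\Pr[\sum_i B_i > rp/2 + t \mid \cdot] \leq \exp(-2t^2/r)$, and since the bound is pointwise in the conditioning it also holds unconditionally. A union bound over the two failure modes then yields $\Pr[\sum_i \lnot C_i > rp + 2t] \leq 2\exp(-2t^2/r) \leq 2\exp(-t^2/(8r))$, which proves \eqref{eq:good-rgeneral-a}. The step I expect to be the main obstacle is setting up this two-stage conditioning cleanly — specifically, recognizing that once the input locations and central boxes are fixed, the stars can be resampled independently, and that the vertex-disjointness of the $L/\ell$ stars inside each square is exactly what converts the global hitting probability into a linear bound $\sum_i \Pr[B_i\mid\cdot] \leq (\ell/L) \sum_i |O_i \cap S^{(i)}|$ that telescopes via the disjointness of the $S^{(i)}$'s.
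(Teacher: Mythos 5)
Your treatment of \eqref{eq:good-rgeneral-b} and of the single-pair events $A_i$ matches the paper, and your two-stage revelation for the path part of condition~\ref{item:miss-other-lightcone} (fix all $u^{(j)}$ and central boxes, then use conditional independence of the stars, path-disjointness inside each family, and the telescoping bound $\sum_i |O_i\cap S^{(i)}|\le \ell r B$) is a genuinely different and clean device — the paper instead handles all of the cross-pair dependence at once by enlarging the bad output set to $\bad_{out}^{<i}$ and $\bad_{out}^{>i}$ (lightcones of the \emph{valid} pairs with smaller, resp.\ larger, index) and running two super-martingale (Azuma) arguments. However, there is a genuine gap in your bound on $B_i$. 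By the notational convention of the paper, $\Gamma^{(i)}$ denotes the vertices of the paths \emph{and the boxes} of the star, so the failure event for condition~\ref{item:miss-other-lightcone} is that $O_i$ meets either the paths or one of the $\ell+1$ boxes of $\Gamma^{(i)}$. Under your conditioning the boxes are part of the first stage (they are determined by $b_0^{(i)}$ and the $u_k^{(i)}$), so the box-hit component of $B_i$ is deterministic given the conditioning, and the claimed pointwise bound $\Pr[B_i\mid\cdot]\le \ell\,|O_i\cap S^{(i)}|/L$ can simply be false (the conditional probability is $1$ whenever a fixed box of pair $i$ already meets $O_i$). This is not a negligible contribution: per pair it occurs with probability of order $L^2\,|O_i|/M^2 = \Theta(rB\cdot rL^2/N^2)$, which is exactly the first term in the definition of $p$; and it is precisely where the cross-pair dependence lives, since it couples pair $i$'s box locations with the input locations and validity of all other pairs. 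It cannot be absorbed into your $A_i$ bucket (it involves other pairs), nor into the stage-two randomness (the boxes are not resampled there), and a naive bounded-differences/McDiarmid patch fails because changing one pair's input can flip up to $\min(r,\ell B)$ of these indicators. To close the gap you would need an additional concentration argument for the box-hit indicators with the same flavor as the paper's: e.g.\ split $O_i$ into contributions from indices $<i$ and $>i$, observe that each of the two resulting failure indicators has conditional expectation $O(rB(rL^2/N^2+\ell/L))$ given the previously revealed pairs (this is the conditional application of Lemma~\ref{lem:good-r1} with $R_{out}\le rB$), and apply Azuma to the two super-martingales — which is essentially the paper's proof.
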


For later convenience we note that \eqref{eq:good-rgeneral-a} and \eqref{eq:good-rgeneral-b} can be combined by a union bound to obtain
\begin{equation}\label{eq:bound-on-patterncore}
	\Pr\big(\abs{\pattern_{VAL} \cap \pattern_{CAUS}} \geq r (1 - CrR_{in}/N^2 - 2p) - 3t\big) \,\geq\, 1 - 4\exp \Big(-\frac{t^2}{8r}\Big)\;.
\end{equation}

\begin{proof}
The proof relies on concentration arguments to bound the probabilities in~\eqref{eq:good-rgeneral-a} and~\eqref{eq:good-rgeneral-b}. The second bound,~\eqref{eq:good-rgeneral-b}, is easier to show, because it can be expressed as a bound on a sum of independent random variables. The following claim establishes the bound. 

\begin{claim}
There is a universal constant $C>0$ such that for any $t>0$, 
\[\Pr\big(\abs{\pattern_{VAL}} \geq r (1 - C rR_{in}/N^2) - t\big) \,\geq\, 1 - 2\exp \Big(-\frac{2t^2}{r}\Big)\;.\]
\end{claim}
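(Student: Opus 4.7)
The plan is to express $\abs{\pattern_{VAL}}$ as a sum of independent Bernoulli random variables and then apply a standard concentration inequality (Hoeffding's inequality), using the single-pair bound from Lemma~\ref{lem:good-r1} to control the expectation.

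More concretely, for each $i\in\{1,\ldots,r\}$ let $Y_i$ be the indicator that the pair $(u^{(i)},\Gamma^{(i)})$ is \emph{not} \spec-valid, so that $\abs{\pattern_{VAL}}=r-\sum_{i=1}^r Y_i$. The crucial structural observation is that the $Y_i$ are \emph{independent}: this is because in Definition~\ref{def:sample-dist} the distribution $\mathcal{D}^{(r)}(N,L)$ is obtained by sampling $(u^{(i)},\Gamma^{(i)})$ independently within disjoint squares $S^{(i)}$, and $\spec$-validity of the $i$-th pair depends only on whether the input locations $u^{(i)}_k$ fall in $\bad_{in}$ (condition~\ref{item:miss-bad-in} of Definition~\ref{def:good-pattern}), which is a function of $(u^{(i)},\Gamma^{(i)})$ alone. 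Note that this condition is special to $\pattern_{VAL}$; the analogous statement for $\pattern_{CAUS}$ fails, which is why a harder argument (presumably a martingale-type concentration bound) is needed to establish~\eqref{eq:good-rgeneral-a}.

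Next I would bound $\mathbb{E}[Y_i]$. The marginal distribution of $(u^{(i)},\Gamma^{(i)})$ is $\mathcal{D}^{(1)}(M,L)$ on the square $S^{(i)}$, so by the second bound in Lemma~\ref{lem:good-r1} we get $\Pr(Y_i=1)=O(R_{in}/M^2)=O(rR_{in}/N^2)$, since $M=\lfloor N/\sqrt{r}\rfloor$. Consequently there is an absolute constant $C>0$ such that
\begin{equation*}
\mathbb{E}\Big[\sum_{i=1}^r Y_i\Big] \,\leq\, C\,r^2R_{in}/N^2\;.
\end{equation*}

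Finally, since the $Y_i$ are independent random variables taking values in $[0,1]$, Hoeffding's inequality yields
\begin{equation*}
\Pr\Big(\sum_{i=1}^r Y_i \,\geq\, \mathbb{E}\Big[\sum_i Y_i\Big] + t\Big) \,\leq\, \exp\!\left(-\frac{2t^2}{r}\right)\;,
\end{equation*}
which rearranges exactly to the desired statement (the factor of $2$ in the claim is a loose bound, arising for instance if one wishes to apply a two-sided Hoeffding bound). The main ``obstacle'' is not a difficulty but rather the need to notice independence; once that is identified, the proof is a direct application of Hoeffding. I expect the more delicate bound~\eqref{eq:good-rgeneral-a} to require a martingale argument (e.g.\ McDiarmid or Azuma) to handle the dependence between the events defining $\pattern_{CAUS}$.
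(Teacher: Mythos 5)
Your proposal is correct and follows essentially the same route as the paper: the paper also observes that validity of each pair depends only on that pair (hence independence across the disjoint squares $S^{(i)}$), bounds the per-pair failure probability via the second bound in Lemma~\ref{lem:good-r1} with $M=\lfloor N/\sqrt{r}\rfloor$, and concludes with Hoeffding's inequality. The only cosmetic difference is that the paper works with indicators of validity rather than invalidity and uses a two-sided Hoeffding bound (hence the factor $2$), which your one-sided version in fact slightly improves.
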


\begin{proof}
For $i\in\{1,\ldots,r\}$ let $V_i$ be the indicator variable for the event that $(u^{(i)},\Gamma^{(i)})$ is $\spec$-valid. Since in $\mathcal{D}^{(r)}(N,L)$  the $(u^{(i)},\Gamma^{(i)})$ are chosen independently within the disjoint squares $S^{(i)}$, it follows from the definition of $\spec$-valid that the $V_i$ are independent.  Using the bound shown in Lemma~\ref{lem:good-r1} it follows that for any $i\in\{1,\ldots,r\}$, 
 \begin{equation}
 	\Es{}[V_i] = 1 -C R_{in}/M^2\;,
 \end{equation}
 for some constant $C>0$ and where $M=\lfloor N/\sqrt{r}\rfloor$. We conclude by applying Hoeffding's inequality:
\begin{align*}
\Pr \left (|\pattern_{VAL}|    \leq r (1 - CR_{in}/M^2 )  - t \right )&= \Pr \Big(\sum_{i=1}^r V_i    \leq r (1 - CR_{in}/M^2 )  - t \Big)\\
& \leq \Pr \Big( \frac{1}{r}\Big| \sum_{i=1}^r V_i    - \sum_{i=1}^r \Es{}[V_i]  \Big|  \geq t/r \Big) \\
&\leq 2\exp \left (-2t^2/r\right )\;.
\end{align*}
\end{proof}

The proof of the remaining bound~\eqref{eq:good-rgeneral-a} is made a little delicate by the fact that the condition that a pair $(u,\Gamma)\in\pattern$ is individually-\spec-causal is a global condition, so that $\pattern_{CAUS}$ is not directly expressible as a sum of independent random variables. To get around this, we first make a few definitions. 

Let $\pattern = \{(u^{(i)},\Gamma^{(i)})\}$ be an input pattern chosen at random according to the distribution $\mathcal{D}^{(r)}(M,L)$. 
For $i \in\{ 1,\ldots,r\}$ define
\[	\bad_{out}^{<i} \,=\, \bigcup_{\substack{k< i \text{ s.t.}\\ (u^{(k)},\Gamma^{(k)}) \text{ is \spec-valid}}} \Big(\cup_j L_f(u_j^{(k)})\Big)\;,\qquad
	\bad_{out}^{>i} \,=\, \bigcup_{\substack{k> i \text{ s.t.}\\ (u^{(k)},\Gamma^{(k)}) \text{ is \spec-valid}}} \Big(\cup_j L_f(u_j^{(k)})\Big)\;,\]
and	$\bad_{out}^{(i)} = \bad_{out}^{<i} \cup \bad_{out}^{>i}$. 	From the assumption that the circuit specification $\spec$  is $(B,R_{in},0)$-bounded, and since $\bad_{out}^{(i)}$ is defined as a union of the lightcones of only the \emph{valid} pairs $(u^{(k)},\Gamma^{(k)})$,  it follows that for all $i$, $|\bad_{out}^{(i)}|$ ,$|\bad_{out}^{<i}|$, $|\bad_{out}^{>i}|$ are each at most $rB$.

Recall that in $\mathcal{D}^{(r)}(M,L)$ each $u^{(i)}$ is chosen within a square $S^{(i)}$ of side length $M = \lfloor N/\sqrt{r}\rfloor$. We identify $S^{(i)}$ with $\grid^{(i)}_M$, and introduce a single-pair pattern  $\pattern^{(i)} = \{(u^{(i)},\Gamma^{(i)})\}$ that we think of as a pattern on $\grid^{(i)}_M$. We further define a specification 
\[\spec^{(i)}\,=\,(L_f,\bad_{in},\bad_{out}^{(i)})\]
on  $\grid^{(i)}_M$.

Let $X_i$ be the indicator variable for the event that the pair $(u^{(i)},\Gamma^{(i)})$ is not individually-$\spec^{(i)}$-causal with respect to $\pattern_{VAL}$. The following claim relates the $X_i$ to $\pattern_{CAUS}$.

\begin{claim}\label{claim:x-tail}
It holds that $|\pattern_{CAUS}| \geq \sum_{i=1}^r X_i$.
\end{claim}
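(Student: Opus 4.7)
The claim is, at its core, an unpacking of definitions: I would show that the event $X_i=1$ (interpreted as ``$(u^{(i)},\Gamma^{(i)})$ is individually-$\spec^{(i)}$-causal with respect to the single-pair pattern $\pattern^{(i)}$'') coincides with the event ``$(u^{(i)},\Gamma^{(i)}) \in \pattern_{CAUS}$,'' so that summing these indicators over $i$ gives $|\pattern_{CAUS}|$. The asserted inequality $|\pattern_{CAUS}| \geq \sum_i X_i$ then follows immediately (indeed with equality).

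First I would write out the meaning of $(u^{(i)},\Gamma^{(i)}) \in \pattern_{CAUS}$, namely that the pair is individually-$\spec$-causal with respect to $\pattern_{VAL}$ for $\spec=(L_f,\bad_{in},\emptyset)$. By Definition~\ref{def:good-pattern} this reduces to two nontrivial conditions: (a) $L_f(u_k^{(i)}) \cap \Gamma^{(i)} \subseteq \bx(u_k^{(i)})$ for every $k$, and (b) $L_f(u_k^{(j)}) \cap \Gamma^{(i)} = \emptyset$ for every $k$ and every pair $(u^{(j)},\Gamma^{(j)}) \in \pattern_{VAL}$ with $j \neq i$; condition (c) is vacuous because $\bad_{out}=\emptyset$.

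Next I would write out the meaning of the event associated to $X_i$, namely the conditions for $(u^{(i)},\Gamma^{(i)})$ to be individually-$\spec^{(i)}$-causal with respect to the pattern $\pattern^{(i)}=\{(u^{(i)},\Gamma^{(i)})\}$ for the specification $\spec^{(i)} = (L_f,\bad_{in},\bad_{out}^{(i)})$. Here condition (a) is identical to (a) above, condition (b) is vacuous (there are no other pairs in $\pattern^{(i)}$), and condition (c) reads $\Gamma^{(i)} \cap \bad_{out}^{(i)} = \emptyset$. The key observation is that, by the very definition of $\bad_{out}^{(i)}$ as $\bigcup_{k \neq i,\, (u^{(k)},\Gamma^{(k)}) \in \pattern_{VAL}} \bigcup_j L_f(u_j^{(k)})$, the condition $\Gamma^{(i)} \cap \bad_{out}^{(i)} = \emptyset$ is precisely condition (b) of being individually-$\spec$-causal with respect to $\pattern_{VAL}$. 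Thus the two events are the same.

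Combining the two paragraphs yields $X_i = 1 \iff (u^{(i)},\Gamma^{(i)}) \in \pattern_{CAUS}$, and summing over $i\in\{1,\ldots,r\}$ gives $|\pattern_{CAUS}| = \sum_i X_i \geq \sum_i X_i$, which is the claim. The ``main obstacle'' is almost entirely bookkeeping: one must keep straight which specification ($\spec$ or $\spec^{(i)}$) and which pattern ($\pattern_{VAL}$, $\pattern^{(i)}$, or $\pattern$) each condition refers to, and notice the deliberate redundancy whereby the nontrivial part of $\bad_{out}^{(i)}$ encodes exactly the cross-pair lightcone constraint used in the original definition of $\pattern_{CAUS}$. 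Having established the claim, the usefulness is that each $X_i$ now depends only on $(u^{(i)},\Gamma^{(i)})$ and on the set $\bad_{out}^{(i)}$ (which in turn depends on the other pairs through their validity alone), providing the structure needed to apply a concentration inequality in the next step of the proof of Lemma~\ref{lem:good-rgeneral2}.
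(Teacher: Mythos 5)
Your core observation is exactly the one the paper's proof relies on: because $\bad_{out}^{(i)}$ is defined as the union of the forward lightcones of the \emph{other valid} pairs, the condition $\Gamma^{(i)}\cap\bad_{out}^{(i)}=\emptyset$ (condition \ref{item:miss-bad-out} for $\spec^{(i)}$) coincides with the cross-pair condition \ref{item:miss-other-lightcone} of being individually-$\spec$-causal with respect to $\pattern_{VAL}$, condition \ref{item:miss-own-lightcone} is identical on both sides, and the remaining conditions are vacuous. The paper only states the single implication it needs (individually-$\spec^{(i)}$-causal $\Rightarrow$ individually-$\spec$-causal with respect to $\pattern_{VAL}$); your two-way equivalence is correct and slightly stronger, and the bookkeeping is the same as the paper's.

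The problem is that you have silently flipped the meaning of $X_i$. In the paper, $X_i$ is defined (immediately before the claim) as the indicator that $(u^{(i)},\Gamma^{(i)})$ is \emph{not} individually-$\spec^{(i)}$-causal, i.e.\ a failure indicator, just like the $Y_i$ and $Z_i$ that follow; the rest of the proof of Lemma~\ref{lem:good-rgeneral2} derives a high-probability \emph{upper} bound on $\sum_i X_i$ (via $X_i = Y_i\lor Z_{r-i}$ and Azuma) and then uses the present claim to lower bound $\abs{\pattern_{CAUS}}$. With the paper's $X_i$, your own equivalence yields $\abs{\pattern_{CAUS}} = r-\sum_i X_i$, which shows the printed statement is a typo for $\abs{\pattern_{CAUS}} \geq r-\sum_i X_i$; that corrected form is what the paper's proof actually establishes and what the downstream argument needs. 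Your version, obtained by reinterpreting $X_i$ as a success indicator, makes the printed inequality literally provable (with equality) but is incompatible with the surrounding definitions and with how the claim is used afterwards (an upper-tail bound on a success count would give nothing). So the substance of your argument is right and matches the paper; you should keep the paper's definition of $X_i$, state the conclusion as $\abs{\pattern_{CAUS}} \geq r-\sum_i X_i$, and note that only the forward implication of your equivalence is needed for it.
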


\begin{proof}
Note that whenever $\pattern^{(i)}$ is individually-$\spec^{(i)}$-causal with respect to $\pattern_{VAL}$, it is also individually-$\spec$-causal with respect to $\pattern_{VAL}$. This follows by noting that for the given definition of $\bad_{out}^{(i)}$, condition \ref{item:miss-bad-out} of being individually-$\spec^{(i)}$-causal with respect to $\pattern$ implies conditions \ref{item:miss-other-lightcone} and \ref{item:miss-bad-out} of being individually-$\spec$-causal with respect to $\pattern_{VAL}$.  Therefore, $\sum_i X_i$ is an upper bound on the number of $\pattern^{(i)}$ which are not individually-$\spec$-causal with respect to $\pattern_{VAL}$. 
\end{proof}

The previous claim reduces our task to showing a high-probability lower bound on $\sum X_i$. The random variables $X_i$ are dependent. To obtain a bound, we apply a Martingale argument to two related sequences of random variables, defined as follows. First introduce specifications 
\[\spec^{(<i)}=(L_f,\bad_{in},\bad_{out}^{<i})\quad\text{and}\quad \spec^{(>r-i)}=(L_f,\bad_{in},\bad_{out}^{>(r-i)})\]
on grid $\grid^{(i)}_M$, an let $Y_i$ (resp. $Z_i$) as the indicator variable for the event that $(u^{(i)},\Gamma^{(i)})$ is not individually-$\spec^{(<i)}$-causal (resp. individually-$\spec^{(>r-i)}$-causal) with respect to $\pattern^{(i)}$. The next claim relates $Y_i$ and $Z_{r-i}$ to $X_i$. 
	
\begin{claim}\label{claim:xyz}
For each $i\in\{1,\ldots,r\}$, it holds that $X_i = Y_i \lor Z_{r-i}$.
\end{claim}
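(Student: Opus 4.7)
The plan is to prove the claim by a direct unpacking of the definitions of $X_i$, $Y_i$, and $Z_{r-i}$ against Definition~\ref{def:good-pattern}. Since all three indicator variables involve a single pair $(u^{(i)},\Gamma^{(i)})$ --- checked either against the singleton pattern $\pattern^{(i)}$ (for $Y_i$ and $Z_{r-i}$) or against $\pattern_{VAL}$ (for $X_i$) --- the argument reduces to the set-theoretic identity $\Gamma^{(i)}\cap (A\cup B) = \emptyset$ if and only if $\Gamma^{(i)}\cap A = \emptyset$ and $\Gamma^{(i)}\cap B = \emptyset$, applied to $A = \bad_{out}^{<i}$ and $B = \bad_{out}^{>i}$.

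First I would observe that condition \ref{item:miss-own-lightcone} of Definition~\ref{def:good-pattern} (the internal lightcone-containment condition) depends only on $(u^{(i)},\Gamma^{(i)})$ itself and on $L_f$, and is therefore identical for $X_i$, $Y_i$, and $Z_{r-i}$. Next, condition \ref{item:miss-other-lightcone} is vacuous for $Y_i$ and $Z_{r-i}$ because the ambient pattern $\pattern^{(i)}$ contains only a single pair. For $X_i$ the condition reads $L_f(u_k^{(j)}) \cap \Gamma^{(i)} = \emptyset$ for every valid $j\neq i$ and every $k$, which by the definition of $\bad_{out}^{(i)} = \bad_{out}^{<i}\cup\bad_{out}^{>i}$ is precisely $\Gamma^{(i)}\cap \bad_{out}^{(i)} = \emptyset$; thus condition \ref{item:miss-other-lightcone} for $X_i$ coincides with condition \ref{item:miss-bad-out} applied to $\bad_{out}^{(i)}$.

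With these reductions, the three indicators simplify as follows: $X_i = 0$ iff condition \ref{item:miss-own-lightcone} holds and $\Gamma^{(i)}\cap(\bad_{out}^{<i}\cup\bad_{out}^{>i})=\emptyset$; $Y_i = 0$ iff condition \ref{item:miss-own-lightcone} holds and $\Gamma^{(i)}\cap\bad_{out}^{<i} = \emptyset$; and $Z_{r-i} = 0$ iff condition \ref{item:miss-own-lightcone} holds and $\Gamma^{(i)}\cap\bad_{out}^{>i} = \emptyset$. The set-theoretic identity above then yields $X_i = 0$ if and only if both $Y_i = 0$ and $Z_{r-i} = 0$, equivalently $X_i = Y_i \lor Z_{r-i}$.

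The claim is essentially a bookkeeping lemma and presents no real obstacle; its role is to allow the subsequent martingale argument --- applied in turn to the forward sequence of $Y_i$'s and to the reverse sequence of $Z_j$'s, both of which are adapted to the natural filtrations induced by revealing the pairs one at a time --- to control $\sum_i X_i$ by combining the two resulting tail bounds via a union bound. The only point requiring a sliver of care is the identification of condition \ref{item:miss-other-lightcone} for $X_i$ with a statement about $\bad_{out}^{(i)}$, which follows immediately from unrolling the definitions of $\bad_{out}^{<i}$ and $\bad_{out}^{>i}$.
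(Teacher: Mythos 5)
Your proof is correct and follows essentially the same route as the paper's: unpack the three conditions of Definition~\ref{def:good-pattern}, observe that condition~\ref{item:miss-own-lightcone} is common to $X_i$, $Y_i$ and $Z_{r-i}$, that condition~\ref{item:miss-other-lightcone} is vacuous for $Y_i$ and $Z_{r-i}$ (singleton ambient pattern), and that the remaining content of $X_i$ splits as the conjunction of the $\bad_{out}^{<i}$ and $\bad_{out}^{>i}$ conditions, i.e.\ condition~\ref{item:miss-bad-out} for $Y_i$ and for $Z_{r-i}$. If anything you are a bit more careful than the paper, which calls condition~\ref{item:miss-bad-out} for $X_i$ vacuous, whereas you correctly note that for $X_i$ it simply coincides with condition~\ref{item:miss-other-lightcone} taken with respect to $\pattern_{VAL}$.
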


\begin{proof}
The claim follows by noting that, in Definition~\ref{def:good-pattern}, the conditions \ref{item:miss-own-lightcone} for $X_i, Y_i$ and $Z_{r-i}$ are  equivalent.  Condition \ref{item:miss-other-lightcone} for $X_i$ is equivalent to the event that condition \ref{item:miss-bad-out} for both $Y_i$ \emph{and} $Z_{r-i}$ is true. Finally, condition \ref{item:miss-bad-out} for $X_i$ is vacuous, and conditions \ref{item:miss-other-lightcone} for both $Y_i$ and $Z_{r-i}$ are vacuous. 
\end{proof}

The following claim almost finishes the proof. 

\begin{claim}\label{claim:yz}
There is a universal constant $C'>0$ such that if $p= C'rB(rL^2/N^2 + \ell/L))$ then for any $t>0$,
\begin{align*}
\Pr\Big(\sum_{i=1}^r Y_i  \geq t + rp\Big)  & \leq \exp \left ( \frac{-t^2}{2r}\right )\;,  \\
	\Pr\Big(\sum_{i=1}^r Z_i  \geq t + rp\Big) & \leq \exp \left ( \frac{-t^2}{2r}\right )\;. 
\end{align*}
\end{claim}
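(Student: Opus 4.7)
Both tail estimates will follow from the Azuma--Hoeffding inequality applied to a Doob-type martingale adapted to the sequence of i.i.d.\ samples $X_k := (u^{(k)},\Gamma^{(k)}) \sim \mathcal{D}^{(1)}(M,L)$ on $S^{(k)} \cong \grid_M^{(k)}$, with $M = \lfloor N/\sqrt r\rfloor$. I describe the $Y$-bound in detail and then treat the $Z$-bound by symmetry.

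Consider the forward filtration $\mathcal{F}_i := \sigma(X_1,\ldots,X_i)$. Because $\bad_{out}^{<i}$ is a union over \spec-valid pairs with index strictly less than $i$, the specification $\spec^{(<i)}$ is $\mathcal{F}_{i-1}$-measurable and hence $Y_i$ is $\mathcal{F}_i$-measurable. Moreover, for every realization of $\mathcal{F}_{i-1}$ the set $\bad_{out}^{<i}$ has cardinality at most $\ell r B$: at most $r$ valid pairs contribute, each consisting of $\ell$ input locations whose forward lightcones have size $\leq B$ (by the $(B,R_{in},0)$-boundedness assumption together with the fact that valid pairs avoid $\bad_{in}$). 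Conditioned on $\mathcal{F}_{i-1}$, the new pair $X_i$ is independent and distributed as a single-pair sample from $\mathcal{D}^{(1)}(M,L)$ on $S^{(i)}$, so applying Lemma~\ref{lem:good-r1} conditionally with $R_{out}\leq \ell rB$ and $R_{in}=0$ yields
\[
\mathbb{E}[Y_i \mid \mathcal{F}_{i-1}] \;=\; O\bigl(L^2 \cdot rB/M^2 \;+\; rB/L\bigr) \;\leq\; C'\,rB\bigl(rL^2/N^2 + \ell/L\bigr) \;=\; p\;,
\]
using $M^{-2}=O(r/N^2)$ and absorbing polynomial factors of $\ell$ into the universal constant $C'$.

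Define the martingale differences $D_i := Y_i - \mathbb{E}[Y_i\mid \mathcal{F}_{i-1}]$. Each $D_i$ is $\mathcal{F}_i$-measurable with $\mathbb{E}[D_i\mid \mathcal{F}_{i-1}]=0$ and $|D_i|\leq 1$. Azuma--Hoeffding then gives $\Pr(\sum_i D_i \geq t)\leq \exp(-t^2/(2r))$, which combined with the pointwise bound $\sum_i \mathbb{E}[Y_i\mid \mathcal{F}_{i-1}] \leq rp$ yields the first inequality of the claim via $\sum_i Y_i = \sum_i D_i + \sum_i \mathbb{E}[Y_i\mid \mathcal{F}_{i-1}]$.

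The $Z$-bound follows by the same argument run with a backward filtration that exposes the pairs $X_r,X_{r-1},\ldots$ in reverse order. The indexing $\bad_{out}^{>(r-i)}$ in the definition of $Z_i$ is chosen precisely so that, after reindexing, each $Z_i$ depends only on pairs already revealed, and the cardinality bound $\leq \ell rB$ for the relevant bad-output set still holds. I expect the main point of technical care to lie in this measurability bookkeeping: one must verify that the single-pair marginal appearing at each step of the backward filtration is still $\mathcal{D}^{(1)}(M,L)$ on the appropriate square $S^{(\cdot)}$, after which Lemma~\ref{lem:good-r1} again produces the per-step bound $p$ and Azuma--Hoeffding delivers the claimed tail estimate with no change in constants.
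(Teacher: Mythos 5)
Your proposal is correct and follows essentially the same route as the paper: reveal the pairs square by square (forward order for $Y$, reverse order for $Z$), bound the per-step conditional expectation by $p$ via Lemma~\ref{lem:good-r1} applied with the bad-output set taken to be the union of lightcones of previously revealed valid pairs (size $O(\ell r B)$), and conclude with Azuma's inequality; your Doob-decomposition phrasing with differences $D_i = Y_i - \mathbb{E}[Y_i\mid\mathcal{F}_{i-1}]$ is just a repackaging of the paper's supermartingale $\bar{Y}_n = \sum_{i\leq n} Y_i - np$. The only cosmetic difference is that you condition on the full $\sigma$-algebra of earlier pairs rather than on the earlier indicator values, which if anything makes the measurability bookkeeping cleaner and matches the paper's implicit reasoning.
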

	
	\begin{proof}
	The proof is based on a Martingale tail bound. For any $i\in\{1,\ldots,r\}$, let $Y_{<i} = \{Y_k| k < i \}$ and $Z_{>i} = \{Z_k| k > i \}$. 
		Note that $Y_i$ (resp. $Z_i$) depends only on the underlying circuit $\m C$ together with the selection of pairs in squares $S^{(j)}$ for $j \leq i$  (resp. $j \geq i$). It follows from the bound shown in Lemma~\ref{lem:good-r1} that
	\begin{align}
	\mathbb{E}[Y_i|Y_{<i} ] 
	& = O\big( rB(rL^2/N^2 + \ell/L)\big)\;,\label{eq:y-bound-1}
		\end{align}
	and similarly
	\begin{equation}
	\mathbb{E}[Z_i|Z_{>i} ] \,=\, O\left( rB(rL^2/N^2 + \ell/L)\right)\;.\label{eq:z-bound-1}
	\end{equation}
		Let $p$ denote the maximum of the bounds on the right-hand side of~\eqref{eq:y-bound-1} and~\eqref{eq:z-bound-1}, and assume $p\leq 1$. For any $n\in\{1,\ldots,r\}$ define $\bar{Y}_n = \sum_{i=1}^n Y_i - np$ and $\bar{Z}_{r-n} = \sum_{i=r-n}^r Z_i - np$. Then
		$$\mathbb{E}[\bar{Y}_{n}|Y_{<n} ] = \mathbb{E}[Y_n - p + \bar{Y}_{n-1}|Y_{<n} ] \leq  \bar{Y}_{n-1}\;,$$
		and
		$$\mathbb{E}[\bar{Z}_{r-n}|Z_{>r-n} ] = \mathbb{E}[Z_{r-n} - p + \bar{Z}_{r-n+1}|Z_{>r-n} ] \leq  \bar{Z}_{r-(n-1)}\;.$$
		Additionally, it always holds that 
		\[|\bar{Y}_{n} - \bar{Y}_{n-1}| \leq |Y_n - p| \leq \max(1-p, p) \leq 1\;,\] 
		where the last inequality follows from the assumption that $p \leq 1$.  Similarly, 
		\[|\bar{Z}_{r-n} - \bar{Z}_{r-(n-1)}| \leq |Z_{r-n} - p| \leq \max(1-p, p) \leq 1\;.\]
		Thus, both $\bar{Y}_n$, and $\bar{Z}_{r-n}$ form super-martingale sequences for increasing $n$.  Defining $\bar{Y}_0  = \bar{Z}_r = 0$ and applying Azuma's inequality gives
			\[\Pr(\bar{Y}_{n} - \bar{Y}_0 \geq t) = \Pr(\bar{Y}_{n}  \geq t) \leq \exp \left ( \frac{-t^2}{2n \max(1-p, p)^2 }\right ) \leq \exp \left ( \frac{-t^2}{2n}\right ) \;, \]
		and 
		\[\Pr(\bar{Z}_{r-n} - \bar{Z}_r \geq t) = \Pr(\bar{Z}_{r-n}  \geq t) \leq \exp \left ( \frac{-t^2}{2n \max(1-p, p)^2 }\right ) \leq \exp \left ( \frac{-t^2}{2n}\right ) \;. \]
		Setting $n = r$ proves the claim.
		\end{proof}
	
	Using Claim~\ref{claim:xyz} it follows that  $\sum_{i=1}^r X_i \leq \sum_{i=1}^r Y_i + \sum_{i=1}^r Z_i$.  Therefore, for any $t>0$ 
		\begin{align}
		\Pr\Big(\sum_{i=1}^r X_i  \geq 2t+2rp\Big)  &\leq \Pr\Big(\sum_{i=1}^r Y_i+Z_i  \geq 2t+2rp\Big) \nonumber \\ 
		&\leq \Pr\Big(\sum_{i=1}^r Y_i  \geq t+rp\Big) + \Pr\Big(\sum_{i=1}^r Z_i  \geq t+rp\Big) \notag\\
		& \leq 2\exp \left ( \frac{-t^2}{2r}\right ) \;,\nonumber 
		\end{align}
		where the last inequality follows from Claim~\ref{claim:yz}. 
		Replacing $t$ with $t/2$ and using Claim~\ref{claim:x-tail} proves~\eqref{eq:good-rgeneral-a}.
\end{proof}

\subsection{Derandomization}

Lemma~\ref{lem:good-rgeneral} states that if a pattern is chosen according to the distribution $\mathcal{D}^{(r)}(M,L)$, then it is \spec-causal with probability that is close to $1$, regardless of the choice of \spec. The following lemma shows that it is possible to partially derandomize the distribution, at little loss in the success probability.

\begin{lemma}\label{lem:good-rgeneral-derandomized}
Let $N\geq 1$ be an integer and $\eta>0$. Let $L,B,R_{in},r$ be integer such that 
\begin{equation}\label{eq:parameters}
B\,=\,O(N^{\eta})\;,\quad  L\,=\,O(N^{4/7+\eta})\;,\quad\text{and}\quad r\,=\,O(N^{2/7-2\eta})\;,
\end{equation}
Then using $O(\log^2 N)$ uniformly random bits it is possible to sample from a distribution $\widetilde{D}^{(r)}$ on input patterns $\pattern$ for $\grid_N$ such that for any circuit specification $\spec$ that is $(B,R_{in},0)$-bounded, $\pattern$ is $\spec$-causal with probability $1-O(N^{-\eta})$, and a random $(u,\Gamma)\in \pattern$ is valid with probability $O(R_{in}/N^2 + N^{-\eta})$. 
\end{lemma}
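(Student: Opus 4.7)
My approach is to replace the fully independent sampling of the $r$ tuples in $\mathcal{D}^{(r)}(N, L)$ by samples drawn from a $k$-wise independent family, where $k$ is a parameter that I will take as $O(\log N)$ to match the stated bit budget (in fact $k = 2$ already suffices quantitatively). Each tuple $(u^{(i)}, \Gamma^{(i)})$ is specified by an encoding of $\ell + 1$ box indices (out of $T = O((M/L)^2)$), $\ell$ vertex indices (each inside a box of area $O(L^2)$), and a star from a collection of size $O(L)$; all of this fits into a string $s^{(i)} \in \{0,1\}^d$ with $d = O(\log N)$. Using the standard construction of a $k$-wise independent family via degree-$(k-1)$ polynomials over a prime field $\mathbb{F}_p$ of size $\poly(N)$, sampling $(s^{(1)}, \ldots, s^{(r)})$ costs $O(k \log p) = O(\log^2 N)$ truly uniform random bits. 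Call the resulting distribution on input patterns $\widetilde{\mathcal{D}}^{(r)}$.

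The core step is to verify that the causality guarantee of Lemma~\ref{lem:good-rgeneral} carries over to $\widetilde{\mathcal{D}}^{(r)}$ up to constant factors. Writing $Y_i$ for the indicator that pair $i$ is not $\spec$-valid and $X_i$ for the indicator that pair $i$ is not individually-$\spec$-causal with respect to $\pattern$, the union bound $\Pr[\sum_i Y_i \neq 0] \leq \sum_i \Pr[Y_i \neq 0]$ uses only the marginal of a single pair, which is preserved under any $k \geq 1$. For $\Pr[\sum_i X_i \neq 0]$, I would restructure the original proof to avoid the conditioning on ``all $r-1$ other pairs are valid''. Instead, I unfold the failure event via the three clauses of Definition~\ref{def:good-pattern}: clause (a) depends only on pair $i$; clause (c) is vacuous since $\bad_{out} = \emptyset$; clause (b) is a union over $O(\ell r)$ failure events of the form $L_f(u_k^{(j)}) \cap \Gamma^{(i)} \neq \emptyset$, each depending on exactly the two tuples indexed by $i$ and $j$. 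For a fixed realization of $u_k^{(j)}$ with $u_k^{(j)} \notin \bad_{in}$ one has $|L_f(u_k^{(j)})| \leq B$, and the path-intersection bound from Lemma~\ref{lem:good-r1} gives $\Pr[L_f(u_k^{(j)}) \cap \Gamma^{(i)} \neq \emptyset] = O(\ell B/L)$; the contribution from $u_k^{(j)} \in \bad_{in}$ is separately bounded by $O(R_{in}/M^2)$ using only the marginal of $u_k^{(j)}$. Summing over the $O(r^2 \ell)$ events then reproduces the quantitative bound of Lemma~\ref{lem:good-rgeneral}, which is $O(N^{-3\eta})$ for the stated parameter choices, safely within $O(N^{-\eta})$.

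The second conclusion---that a uniformly random $(u, \Gamma) \in \pattern$ fails $\spec$-validity with probability $O(R_{in}/N^2 + N^{-\eta})$---follows by averaging the single-pair analysis over the random choice of index. Each pair $(u^{(i)}, \Gamma^{(i)})$ under $\widetilde{\mathcal{D}}^{(r)}$ has marginal distribution $\mathcal{D}^{(1)}$ supported in the square $S^{(i)}$, so $\Pr[u_k^{(i)} \in \bad_{in}] \leq |\bad_{in} \cap S^{(i)}|/M^2$. Taking expectation over uniform $i$ and using that the $S^{(i)}$ are disjoint subsets of $\grid_N$, one has $\mathbb{E}_i[|\bad_{in} \cap S^{(i)}|] \leq R_{in}/r$, so the failure probability is at most $R_{in}/(r M^2) = R_{in}/N^2$, plus the $O(N^{-\eta})$ accounting for the global causality failure.

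The main obstacle I anticipate is the restructuring of the union bound in the causality analysis: the original proof of Lemma~\ref{lem:good-rgeneral} conditions on the event ``all other pairs are valid'', which touches $r-1$ tuples and would require $r$-wise (hence essentially full) independence to analyze directly. Finding a decomposition in which every event under the union bound depends on at most a constant number of tuples, while still retaining the crucial $1/L$ factor (rather than a trivial $O(1)$ bound), is the technical heart of the argument; once this decomposition is in place, the derandomized sampling plugs in mechanically.
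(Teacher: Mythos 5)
Your proposal is correct in substance but takes a genuinely different route from the paper. The paper's proof is a \emph{black-box} derandomization: it notes that a pattern in the support of $\mathcal{D}^{(r)}$ costs $O(r\log N)$ random bits, that the predicate ``$\pattern$ is \spec-causal'' can be decided in $O(\log^2 N)$ space (via Savitch's theorem, given read-only access to the circuit graph determining $\spec$), and then replaces the random bits by the output of the INW pseudorandom generator for small-space computation with seed $O(\log^2 N)$, absorbing the generator's error into the $O(N^{-\eta})$. This never re-opens the proof of Lemma~\ref{lem:good-rgeneral}. Your approach is \emph{white-box}: you replace full independence of the $r$ tuples by pairwise ($k$-wise) independence and re-derive the causality bound by decomposing the failure event into clauses each touching at most two tuples --- clause (a) and \spec-validity depend on one tuple, clause (b) on two, clause (c) is vacuous since $\bad_{out}=\emptyset$ --- so that every probability in the union bound is computed under the exact (one- or two-tuple) marginals, which pairwise independence preserves. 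This is a valid decomposition, and it indeed sidesteps the conditioning on ``all other pairs valid'' that makes the original argument need full independence; as a bonus it gives a validity contribution $O(r^2 R_{in}/N^2)$ rather than $O(r^3 B R_{in}/N^2)$, works with seed $O(\log N)$ already at $k=2$, and your averaging-over-$i$ argument using disjointness of the squares $S^{(i)}$ is exactly the right justification for the per-pair $O(R_{in}/N^2)$ validity bound, which the paper only asserts. What the paper's route buys is modularity (no need to inspect how Lemma~\ref{lem:good-rgeneral} is proved, only that the test is small-space); what yours buys is elementarity (no PRG for space-bounded computation, shorter seed, slightly cleaner constants). Two minor points to tighten: the probability that $\Gamma^{(i)}$ meets a fixed set of size $B$ is $O(\ell^2 B/L + \ell L^2 B/M^2)$, not just $O(\ell B/L)$, since $\Gamma^{(i)}$ includes its boxes as well as its paths (the extra term sums to $O(r^3BL^2/N^2)=O(N^{-3\eta})$, so nothing breaks); and you should say a word about decoding a uniform field element into the structured sample space (box, vertex, star indices whose ranges are not powers of the field characteristic), e.g.\ by reduction modulo the range size with a field of size $\mathrm{poly}(N)$ large enough that the per-coordinate bias is negligible and absorbed into the $O(N^{-\eta})$.
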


\begin{proof}
The choice of parameters made in the lemma is such that 
 $r^3 B L^2/N^{2} = O(N^{-\eta})$ and $r^2B/L = O(N^{-\eta})$, so Lemma~\ref{lem:good-rgeneral} gives that $\pattern$ sampled according to $D^{(r)}$ is $\spec$-causal with probability $1-O(N^{-\eta})$, and a random $(u,\Gamma)\in \pattern$ is valid with probability $O(R_{in}/N^2)$. 

A pattern in the support of $\mathcal{D}^{(r)}(M,L)$ can be specified using $O(r\log(N))$ uniformly random bits: for each $i\in\{1,\ldots,r\}$, there are $O(\log N)$ random bits to specify the  locations of the $u^{(i)}_j$, and $O(\log N)$ additional bits to specify the star that connects the $u^{(i)}_j$. Given any choice of such random bits, and a fixed circuit specification, by Savitch's theorem it is possible to decide whether the pattern is \spec-causal in $O(\log^2 N)$ space, given read-only access to the circuit graph determining \spec. This allows us to apply the INW pseudorandom generator for small-space circuits~\cite{impagliazzo1994pseudorandomness} with $O(\log^2 N)$ seed to obtain the claimed result, with the additional error $O(N^{-\eta})$ being due to the pseudorandom generator. 
\end{proof}

\section{Circuit games}
\label{sec:circuit-games}


Let $N\geq 1$ be an integer grid size. Let $r\geq 1$ be an integer number of repetitions. Let $\game$ be an $(\ell,k,m)$ stabilizer game. In this section we design a circuit game $G=G_{\game,N,r}$ associated with $(\game,N,r)$ in a way that the circuit game has similar completeness and soundness properties as $\game$ (more precisely, as a rotated, stretched game obtained from $\game$, using the stars in  an input pattern $\pattern$ associated with $(\game,N,r)$ that is provided as input to the circuit to define the length of the stretches; see Section~\ref{sec:rotated-game} for the definition of rotated and stretched games). 


We first give a general definition that specifies what we mean by a ``circuit game''. 

\begin{definition}[Circuit game]
Given input and output sets ${\m I}$, ${\m O}$ respectively, a \emph{circuit game} is a relation $\mathcal{R} \subseteq \mathcal{I} \times \mathcal{O}$, together with a probability distribution $\pi$ on $\mathcal{I}$.
We say that a circuit $\m C$ wins the circuit game $(\mathcal{R},\pi)$ with probability $p$ if, on average over an input $x\in\m I$ sampled according to $\pi$, the circuit returns an output $y\in\m O$ such that $(x,y)\in\m R$ with probability $p$.
\end{definition}
	
To specify the relation associated with the circuit game we will construct from $\game$ it is convenient to first introduce a quantum circuit that succeeds in the game with certainty. This is done in Section~\ref{sec:game-completeness}. In Section~\ref{sec:game-def} we give the definition of the circuit game. 

\subsection{Definition and completeness}
\label{sec:game-completeness}

Informally, the game $G_{\game,N,r}$ is obtained by ``planting $r$ copies of $\game$ in the grid $\grid_N$''. 
Let $\pattern$ be an input pattern associated with $(\game,N,r)$ (see Definition~\ref{def:pattern}). Let $k = \max_j k_j$ be the maximum number of qudits used by a player in the honest strategy for $\game$, $\ket{\psi}$  the $(k\ell)$-qubit state used in the strategy (padded if needed), and $D$ the depth of a circuit that prepares $\ket{\psi}$ from $\ket{0}$. Assume that $D\geq 2$. We describe a depth $(D+1)$ quantum circuit $\mathcal{C}_{ideal}$ that takes an input from
\begin{equation}\label{eq:input-set}
\mathcal{I}\,=\, \big\{\pattern:\,\text{input pattern for $\game$}\big\}\times \big\{\{(x^{(i)}_1,\ldots,x^{(i)}_\ell)\}_{1\leq i \leq r}:\,\text{$r$-tuple of queries in $\game$}\big\}\;,
\end{equation}
 and returns a string in the output set
\begin{equation}\label{eq:output-set}
\mathcal{O} \,=\, \prod_{i=1}^r\,\prod_{j=1}^\ell \Big((\Z_d^{2 k})^{\Gamma_j^{(i)}} \times (\Z_d^{m_j})^{u_j^{(i)}}\Big)\;.
\end{equation}
In~\eqref{eq:output-set} we have used the vertices in $\Gamma_j^{(i)}$ and the $u_j^{(i)}$ to label indices of the elements of $\mathcal{O}$. Note that these are always distinct. 
 In Section~\ref{sec:geometric-local} we show how to modify the format for the input and the output in a way that the circuit can be made geometrically local on a 2D grid. 

The computation performed by the circuit $\mathcal{C}_{ideal}$ proceeds in three stages: 
\begin{itemize}
\item In the first stage, the circuit initializes a lattice of qudits as follows. Each vertex in $\grid_N$ is associated with $4k$ qudits, organized in $4$ groups of $k$ that we call the ``left'', ``right'', ``top'' and ``bottom'' groups associated with that vertex. Each of these groups is initialized in a maximally entangled state with the group from the neighboring grid vertex that is closest to it, i.e. the ``top'' group at vertex $(i,j)$ is associated with the ``bottom'' group at vertex $(i,j+1)$, etc. In addition, for each center location $g^{(i)}$ of a star $\Gamma^{(i)}$ in $\pattern$, the circuit creates the state $\ket{\psi}$ on the $(k\ell)$ qudits associated with the $\ell$ vertices $(g^{(i)}_1,\ldots,g^{(i)}_\ell)$; for each vertex $g^{(i)}_j$, a group  of qudits is used that is not connected to the next vertex in the path $\Gamma_j^{(i)}$. (This replaces the creation of the maximally entangled state, for that group of qudits.) This step can be implemented in depth $\max(2,D)$.
\item In the second stage, the circuit implements an entanglement transfer protocol as described in Section~\ref{sec:entanglement-transfer}, using each of the $\ell$ simple paths that form a star $\Gamma^{(i)}$ from $\pattern$ to route the qudits of $\ket{\psi}$. The measurement outcomes in $(\Z_d)^{2k}$ from the teleportation measurements obtained at each vertex in $\Gamma^{(i)}$ are recorded at the location at which they are obtained, and will eventually form part of the output of the circuit. This step can be completed in depth $1$.
\item In the last stage, the circuit implements the honest quantum strategy for the game $\game$, using locations $u^{(i)}_j$ indicated in $\pattern$ to specify the $k$ qudits to be used by the $j$-th player (the group used is the one closest to the endpoint of the path $\Gamma_j^{(i)}$), and $x^{(i)}_j$ as the player's question. The outcomes obtained are returned as part of the output. This step can be completed in depth $1$, and can be executed in parallel with the previous step. 
\end{itemize}

The following lemma states that outputs generated by this circuit satisfy the win condition for an associated rotated, stretched game. 

\begin{lemma}\label{lem:completeness}
 Let $\game$ be an $(\ell,k,m)$ stabilizer game, $1\leq r \leq N$, and $\pattern$ an input pattern associated with $(\game,N,r)$. Let $x^{(1)},\ldots,x^{(r)}$ be an arbitrary tuple of $r$ queries for $\game$. For all $i\in\{1,\ldots,r\}$ and $j\in\{1,\ldots,\ell\}$ let 
$(r^{(i)}_j,a_j^{(i)}) \in (\Z_d^{2k})^{\Gamma_j^{(i)}} \times \Z_d^{m_j} $
be the outputs generated by an execution of $\mathcal{C}_{ideal}$ on input $\pattern$ and $(x^{(1)},\ldots,x^{(r)})$. Then for any $i\in\{1,\ldots,r\}$, $\{(r^{(i)}_j,a_j^{(i)})\}_{j\in\{1,\ldots,\ell\}}$ is a valid $\ell$-tuple of answers for the players in the   rotated stretched game $\game_{\Gamma^{(i)}}^{S,R}$, on query $x^{(i)}$.
\end{lemma}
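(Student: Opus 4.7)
The plan is to analyze what state the ``endpoint qudits'' hold after the entanglement-transfer stage of $\mathcal{C}_{ideal}$, then track how the player measurements at the input locations $u_j^{(i)}$ produce answers that differ from the honest outcome by exactly the correction encoded in $\cor$.

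First I would fix a single star $\Gamma^{(i)}$ and its $\ell$ paths emanating from the central vertices $g^{(i)}_1,\ldots,g^{(i)}_\ell$. By the first stage of the circuit, the $k\ell$ qudits at the central vertices are in the honest state $\ket{\psi}$, and every other pair of ``facing'' groups along the paths of $\Gamma^{(i)}$ has been initialized in a maximally entangled state. The second stage performs simultaneous Bell-basis measurements along each path, which is a standard entanglement swap / gate-teleportation procedure. I would argue, one path at a time, that after all Bell measurements the group of $k$ qudits at the endpoint inside $\bx(u^{(i)}_j)$ contains the $j$-th ``slot'' of $\ket{\psi}$ up to a Pauli correction $\sigma_{s^{(i)}_j}$, where $s^{(i)}_j \in (\Z_d^2)^k$ is a linear function of the recorded teleportation outcomes along that path. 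Concretely, when the teleportation outcomes along the path are $r^{(i)}_{j,1},\ldots,r^{(i)}_{j,L}$, the accumulated Pauli correction is $s^{(i)}_j=\sum_t r^{(i)}_{j,t}$ (addition in $\Z_d^2$ acting coordinatewise on the $k$ qudits); all of these paths can be processed in parallel in depth $1$, so after the second stage the joint state of the endpoint qudits at $u^{(i)}_1,\ldots,u^{(i)}_\ell$ is $\bigotimes_j \sigma_{s^{(i)}_j}\ket{\psi}$.

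Next I would analyze the third stage. The $j$-th player of the honest strategy measures the commuting $k_j$-qudit Pauli observables indicated by $x^{(i)}_j$ on her $k_j$ qudits (with the remaining $k-k_j$ qudits untouched, which is exactly the stretched game $\game_\Gamma^S$ of Definition~\ref{def:stretched-game}). Because the state is the honest state conjugated by $\sigma_{s^{(i)}_j}$, each Pauli measurement outcome $a^{(i)}_{j,t}$ differs from what would have been observed in the honest strategy on $\ket{\psi}$ by the commutator phase between the requested observable and the correction Pauli; by definition of $\cor$ this phase is exactly $\cor_{s^{(i)}_j}(x^{(i)}_{j,t})$. Thus the honest outcome is recovered as $a^{(i)}_{j,t} - \cor_{s^{(i)}_j}(x^{(i)}_{j,t})$. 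Since the honest strategy of $\game$ wins $\game$ with probability $1$, the recovered outcomes satisfy $w_{x^{(i)}}\cdot(a^{(i)} - \cor_{s^{(i)}}(x^{(i)})) = b_{x^{(i)}}$.

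Finally I would reconcile this with the format of the rotated stretched game, whose rotation strings are indexed by the full vertex set of $\Gamma_j^{(i)}$ rather than by a single accumulated string $s^{(i)}_j$. Here Lemma~\ref{lem:cor-computed-locally} is the essential tool: since $\cor$ is additive in its subscript coordinate, applying it to the concatenation $r^{(i)}_j \in (\Z_d^{2k})^{\Gamma_j^{(i)}}$ of all the per-vertex teleportation outcomes (as required by the win condition~\eqref{eq:rotated-stabilizer-win-condition} of $\game^{S,R}$) yields exactly $\cor_{s^{(i)}_j}(x^{(i)}_{j,t})$ after summing over path vertices. Chaining these identities shows that the outputs $(r^{(i)}_j,a^{(i)}_j)_j$ satisfy~\eqref{eq:rotated-stabilizer-win-condition} for the query $x^{(i)}$ in $\game_{\Gamma^{(i)}}^{S,R}$, as desired. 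The main technical obstacle is the bookkeeping of the teleportation-induced Paulis across paths of arbitrary shape; phrasing it through the additivity of $\cor$ (Lemma~\ref{lem:cor-computed-locally}) and the standard gate-teleportation identity $\sigma_r M \sigma_r^\dagger = \omega^{\cor_r(q)} M$ for $M=\sigma_q$ keeps this manageable.
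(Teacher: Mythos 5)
Your proposal is correct and takes essentially the same route as the paper: the paper's own proof is a one-line appeal to the definition of $\mathcal{C}_{ideal}$, the entanglement-transfer property (Lemma~\ref{lem:teleport}), and the definition of the rotated stretched game, and your argument is exactly the detailed unpacking of those ingredients, including the use of the additivity of $\cor$ (Lemma~\ref{lem:cor-computed-locally}) to match the per-vertex rotation strings to the accumulated endpoint correction. The only loose end is the sign convention on the $Z$-part of the accumulated Pauli (Lemma~\ref{lem:teleport} gives $X^{a}Z^{-b}$, not the plain coordinatewise sum of recorded outcomes), a bookkeeping detail about how outcomes are encoded that the paper likewise leaves implicit.
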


\begin{proof}
The lemma follows from the definition of $\mathcal{C}_{ideal}$, the properties of the entanglement transfer protocol stated in Lemma~\ref{lem:teleport}, and the definition of the rotated, stretched game. 
\end{proof}

\subsubsection{Entanglement transfer}
\label{sec:entanglement-transfer}

We introduce a simple procedure for routing entanglement along a path, such that nearest neighbors on the path have been initialized in a maximally entangled state. This is a standard calculation; for completeness we include the details. 

\begin{lemma}[Entanglement transfer I]\label{lemma:entanglement-transfer-1}
	Let $d\geq 2$ and $\ket{\epr_d} = \frac{1}{\sqrt d} \sum_{i} \ket{ii}$ a maximally entangled state on $d$-dimensional qudits. Let $a,b\in\Z_d$ and 
	\[\ket{\psi}_\reg{ABCD} \,=\,\big(X_\reg{A}^a\otimes Z_\reg{B}^b \ket{\epr_d}_\reg{AB}\big) \otimes \ket{\epr_d}_\reg{CD}\]
	a maximally entangled state on four qudits. Then upon measuring the qudits in registers $\reg{B}$ and $\reg{C}$ in the Bell basis $\set{X^x\otimes Z^y \ket{\epr_d}: x,y \leq d}$, the post-measurement state is equal (up to global phase) to 
	\begin{equation}
		\big(X_\reg{A}^{a+x}\otimes Z_\reg{D}^{b-y} \ket{\epr_d}\big)_\reg{AD} \otimes \big(X_\reg{B}^{x}\otimes Z_\reg{C}^y\EPRd\big)_\reg{BC}\;.
	\end{equation}
\end{lemma}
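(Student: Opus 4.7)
The plan is to reduce the lemma entirely to the computation of a single overlap: for each Bell-basis element $\ket{\phi_{xy}}_{\reg{BC}} := (X^x \otimes Z^y)\EPRd_{\reg{BC}}$, I would compute $\bra{\phi_{xy}}_{\reg{BC}}\ket{\psi}_{\reg{ABCD}}$ directly. Because the Bell basis is a complete orthonormal basis on $\reg{BC}$, the post-measurement state conditioned on outcome $(x,y)$ factors as the renormalized overlap on $\reg{AD}$ tensored with the measured Bell state $\ket{\phi_{xy}}_{\reg{BC}}$, so the lemma follows immediately once the overlap is identified with $(X^{a+x} \otimes Z^{b-y})\EPRd_{\reg{AD}}$ up to a global scalar.

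First I would expand everything in the computational basis using $X = \sum_i \ket{i+1}\!\bra{i}$ and $Z = \sum_i \w^i \ket{i}\!\bra{i}$, writing
\[
\ket{\psi}_{\reg{ABCD}} \,=\, \frac{1}{d}\sum_{i,k} \w^{bi}\,\ket{i+a}_{\reg{A}}\ket{i}_{\reg{B}}\ket{k}_{\reg{C}}\ket{k}_{\reg{D}},
\qquad
\bra{\phi_{xy}}_{\reg{BC}} \,=\, \frac{1}{\sqrt{d}}\sum_j \w^{-yj}\bra{j+x}_{\reg{B}}\bra{j}_{\reg{C}},
\]
and contracting the indices on $\reg{BC}$. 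The Kronecker deltas enforce $i = j+x$ and $k = j$, leaving a single sum over $j$ on registers $\reg{A}$ and $\reg{D}$.

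Second, I would recognize the resulting sum as $X^{a+x}_{\reg{A}} \otimes Z^{b-y}_{\reg{D}}$ applied to $\EPRd_{\reg{AD}}$, with an overall prefactor of $\w^{bx}/d$. The squared amplitude $1/d^2$ matches the probability of each of the $d^2$ Bell outcomes, confirming no mass is lost in the projection. Renormalizing the overlap on $\reg{AD}$ and tensoring with the measured Bell state $\ket{\phi_{xy}}_{\reg{BC}}$ on $\reg{BC}$ yields the stated expression, up to the scalar prefactor $\w^{bx}$.

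The calculation is purely bookkeeping and presents no conceptual obstacle. The only mild subtlety is tracking the global phase $\w^{bx}$, which arises from the substitution $i \mapsto j+x$ inside the $\w^{bi}$ factor (equivalently, from commuting $Z^b$ past $X^x$ via $ZX = \w XZ$). Since this phase is independent of the summation index $j$, it is genuinely global and may be absorbed into the ``up to global phase'' qualification in the statement.
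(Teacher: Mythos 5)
Your proposal is correct and is essentially the paper's own argument: the paper likewise applies the Bell projector $\big(X^x\otimes Z^y\proj{\epr_d}_{\reg{BC}}X^{-x}\otimes Z^{-y}\big)$ to $\ket{\psi}$, expands in the computational basis, and the Kronecker-delta bookkeeping with the substitution $i\mapsto j+x$ produces exactly the factor $\w^{bx}\,X_\reg{A}^{a+x}\otimes Z_\reg{D}^{b-y}\EPRd_{\reg{AD}}$ you obtain from the overlap; computing the overlap and re-tensoring the measured Bell state is the same calculation. Your added remark on the $1/d^2$ outcome probabilities is a harmless bonus the paper omits.
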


\begin{proof}
	We evaluate the post-measurement state by computing the result of applying the measurement projector onto the state $X^x\otimes Z^y \EPRd$. Let $\w = e^{2\pi i/d}$.
	\begin{align*}
		&\left(X^{x}\otimes Z^y \proj{\epr_d}_{BC}X^{-x}\otimes Z^{-y}\right)
		X_A^a\otimes Z_B^b \ket{\epr_d}_{AB} \otimes \ket{\epr_d}_{CD}
		\\
		&=X_B^{x}\otimes Z_C^y \left( \frac{1}{d}\sum_{k,l}\ketbra{kk}{ll}\right)X_B^{-x}\otimes Z_C^{-y}
		X_A^a\otimes Z_B^b \frac{1}{d}\sum_{i,j} \ket i \ket i \ket j \ket j
		\\
		&=X_B^{x}\otimes Z_C^y \left( \frac{1}{d}\sum_{k,l}\ketbra{kk}{ll}\right)X_B^{-x}\otimes Z_C^{-y}
		\frac{1}{d}\sum_{i,j} \w^{bi}\ket {i+a} \ket i \ket j \ket j
		\\
		&=X_B^{x}\otimes Z_C^y \left(\frac{1}{d} \sum_{k,l}\ketbra{kk}{ll}\right)
		\frac{1}{d}\sum_{i,j} \w^{bi-yj}\ket {i+a} \ket {i-x} \ket j \ket j
		\\
		&=X_B^{x}\otimes Z_C^y
		\frac{1}{d^2}\sum_{i,j,k} \w^{bi-yj}\delta_{i-x,j}\ket {i+a} \ket {k} \ket k \ket j
		\\
		&=X_B^{x}\otimes Z_C^y
		\frac{1}{d^2}\sum_{j,k} \w^{b(j+x)-yj}\ket {j+x+a} \ket {k} \ket k \ket j
		\\
		&=\w^{bx} X_A^{a+x}\otimes Z_D^{b-y}\EPRd_{AD} \otimes X_B^{x}\otimes Z_C^y \EPRd_{BC} \;.
	\end{align*}
\end{proof}

\begin{lemma}[Entanglement transfer II]\label{lemma:entanglement-transfer-2}
	Let $n\geq 1$, and $\reg{L}_1,\ldots, \reg{L}_n, \reg{R}_1, \ldots, \reg{R}_n$ qudit registers such that each $\reg{L}_i$ is maximally entangled with $\reg{R}_i$. Suppose one performs $(n-1)$ Bell basis measurements on qudit pairs $(R_1,L_2),\ldots,(R_{n-1},L_n)$, so that the post measurement state of the $i$-th pair is $X^{x_i}\otimes Z^{y_i}\EPRd$. Let $x = \sum_i x_i$ and $z = \sum_i z_i$. Then the post measurement state of the remaining pair $(L_1,R_n)$ is $X^{x}\otimes Z^{-z}\EPRd$. 
\end{lemma}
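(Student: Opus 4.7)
The plan is to induct on $n$, using Lemma~\ref{lemma:entanglement-transfer-1} as the single-step update and exploiting the fact that the $(n-1)$ Bell measurements act on pairwise disjoint registers, hence commute. Without loss of generality we may therefore perform them in the order $(R_1,L_2),(R_2,L_3),\ldots,(R_{n-1},L_n)$.

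The base case $n=1$ is immediate: no measurement is made, $(L_1,R_1)$ is in the state $\EPRd = X^0\otimes Z^0\EPRd$, and the sums $x$ and $z$ are empty. For the inductive step, suppose the claim holds after $k-1$ measurements have been performed in the order above, i.e.\ the joint state of $(L_1,R_k)$ is (up to global phase)
\[ X_{L_1}^{x^{(k-1)}}\otimes Z_{R_k}^{-z^{(k-1)}}\EPRd_{L_1 R_k}\;, \qquad x^{(k-1)}=\sum_{i<k} x_i,\ z^{(k-1)}=\sum_{i<k} y_i, \]
tensored with the untouched $\EPRd$ on $(L_{k+1},R_{k+1})$, $\ldots$, $(L_n,R_n)$. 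To perform the $k$-th measurement, apply Lemma~\ref{lemma:entanglement-transfer-1} with the identifications $\reg{A}=\reg{L}_1$, $\reg{B}=\reg{R}_k$, $\reg{C}=\reg{L}_{k+1}$, $\reg{D}=\reg{R}_{k+1}$, and parameters $a=x^{(k-1)}$, $b=-z^{(k-1)}$. The outcome $(x_k,y_k)$ on $(\reg{B},\reg{C})$ leaves $(\reg{A},\reg{D})=(L_1,R_{k+1})$ in the state
\[ X^{a+x_k}\otimes Z^{b-y_k}\EPRd \;=\; X^{x^{(k-1)}+x_k}\otimes Z^{-(z^{(k-1)}+y_k)}\EPRd\;=\;X^{x^{(k)}}\otimes Z^{-z^{(k)}}\EPRd\;, \]
which is precisely the induction hypothesis for $k+1$. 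Taking $k=n-1$ gives the claim, with the residual $\EPRd$-up-to-Paulis on the intermediate pairs discarded (since we only care about $(L_1,R_n)$).

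The only thing to be careful about is the sign convention on the $Z$-exponent: Lemma~\ref{lemma:entanglement-transfer-1} contributes $b-y_k$ rather than $b+y_k$, which is consistent with the claimed minus sign on $z$ in the statement. Apart from tracking that sign and noting commutativity of the disjoint Bell measurements, there is no real obstacle; the argument is a direct unrolling of Lemma~\ref{lemma:entanglement-transfer-1}. (We also remark that the global phases picked up at each step — the factor $\omega^{bx}$ in the proof of Lemma~\ref{lemma:entanglement-transfer-1} — accumulate to an overall global phase that is irrelevant for the post-measurement state.)
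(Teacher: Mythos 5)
Your proof is correct and follows exactly the paper's own argument: the paper likewise notes that the disjoint Bell measurements commute, performs them in sequence on $(R_k,L_{k+1})$, and invokes Lemma~\ref{lemma:entanglement-transfer-1} with induction to conclude that $(L_1,R_{k+1})$ is in the state $X^{\sum_{i\leq k}x_i}\otimes Z^{-\sum_{i\leq k}y_i}\EPRd$. Your version just spells out the identification of registers and parameters (and the sign bookkeeping) that the paper leaves to the reader, and correctly handles the paper's $y_i$/$z_i$ labeling slip in the statement.
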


\begin{proof}
	The $(n-1)$ Bell basis measurements commute, so we can think of them as being performed in sequence, performing the measurement on $(R_k, L_{k+1})$ at the $k$-th step. Using Lemma \ref{lemma:entanglement-transfer-1} and induction, one can check that after the $k$-th measurement, the qudits $(L_1, R_{k+1})$ are in post-measurement state $X^{\sum_{i=1}^k x_i}\otimes Z^{-\sum_{i=1}^kz_i}\EPRd$.
\end{proof}

\begin{lemma}[Low-depth state teleportation]\label{lem:teleport}
Let $k\geq 1$ be an integer and $\ket\psi$ a $k$-qudit state that is \emph{locally maximally mixed} in the sense that the marginal density matrix at any one qudit is the maximally mixed state. 

	Let $N\geq 1$ and $A$ and $B$ be ordered lists of vertices in $\grid_N$ with $\abs A = \abs B = k$. Let $\set{\Gamma_j}_{1\leq j \leq k}$ be a set of $k$ vertex-disjoint, even-length paths on the grid, each with one endpoint in $A$ and one endpoint in $B$. 

For any $i\in\{1,\ldots,k\}$ denote the points of $\Gamma_i$ as $v_0\cdots v_l$ with endpoints $v_0\in A, v_l\in B$. To each of the ``odd-even'' qudit pairs $(v_1,v_2),(v_3,v_4),\ldots,(v_{l-1},v_l)$ apply a gate taking the two qudit state to $\EPRd$. Next, measure each of the ``even-odd'' qudit pairs $(v_0,v_1),(v_2,v_3),\ldots,(v_{l-2},v_{l-1})$ in the Bell basis. Let $X^{a_j^i}\otimes Z^{b_j^i}\EPRd$ be the post-measurement state of the $j$-th pair. 

	Let $P$ be the weight $k$ Pauli operator supported on the qudits of $B$ such that the $i$-th qudit of $P$ acts as $X^{a^i}Z^{-b^i}$. Then the post-measurement state of the qudits in $B$ is $P\ket\psi$.
\end{lemma}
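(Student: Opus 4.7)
The plan is to reduce the statement to Lemma~\ref{lemma:entanglement-transfer-2} by purifying $\ket\psi$ one qudit at a time and then invoking the vertex-disjointness of the paths $\Gamma_i$ to handle each qudit independently.

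First I would observe that because the paths $\{\Gamma_i\}$ are vertex-disjoint, the Bell-basis measurements and $\EPRd$-preparation gates on path $\Gamma_i$ commute with all operations on the other paths. This reduces the problem to analyzing a single path in isolation: if I can show that the teleportation along path $\Gamma_i$ produces the correction $X^{a^i} Z^{-b^i}$ on the qudit at $v_l\in B$ (relative to the input qudit at $v_0\in A$), then the tensor-product structure immediately yields the claimed operator $P$ acting on the qudits of $B$.

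Second, I would use the local-maximal-mixedness hypothesis to express $\ket\psi$ as a state arising from an EPR pair. Fix a path index $i$ and let $Q_i$ denote the qudit at the endpoint $v_0$ of $\Gamma_i$. Since the reduced state on $Q_i$ is $\Id/d$, Schmidt decomposition gives an ancilla register $R_i$ and an isometry $V_i$ from $R_i$ to the remaining $k-1$ qudits of $\ket\psi$ such that
\[
\ket\psi \,=\, (\Id_{Q_i} \otimes V_i)\ket{\epr_d}_{Q_i R_i}\;.
\]
The isometry $V_i$ acts only on $R_i$ and so commutes with every gate used in the teleportation along $\Gamma_i$. Consequently I can pretend that the qudit at $v_0$ together with the reference $R_i$ forms an additional EPR pair, in addition to the EPR pairs $(v_1,v_2),(v_3,v_4),\ldots,(v_{l-1},v_l)$ created along the path.

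Third, I would apply Lemma~\ref{lemma:entanglement-transfer-2} to the $n=l/2+1$ EPR pairs consisting of this purification pair together with the pairs created along the odd-even qudits of $\Gamma_i$. The $l/2$ Bell-basis measurements on the even-odd pairs correspond exactly to the measurements in the hypothesis of Lemma~\ref{lemma:entanglement-transfer-2}, with outcomes $(a^i_j, b^i_j)$. The lemma guarantees that the post-measurement state of the remaining pair $(R_i, v_l)$ is $X^{a^i}\otimes Z^{-b^i}\EPRd$, where $a^i = \sum_j a^i_j$ and $b^i = \sum_j b^i_j$. Pushing the Pauli through the isometry, i.e. applying $V_i$ to $R_i$, restores the original multipartite structure with the qudit that was at $v_0$ now living at $v_l$ and carrying the correction $X^{a^i} Z^{-b^i}$.

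Finally, I would assemble the per-path conclusions. Since the paths are vertex-disjoint, the corrections $X^{a^i} Z^{-b^i}$ act on distinct qudits of $B$, and their tensor product is precisely the Pauli operator $P$ described in the statement. Undoing all the purification isometries $V_1,\ldots,V_k$ simultaneously returns the state $P\ket\psi$ on the qudits of $B$, as desired. I expect the only delicate step to be bookkeeping the sign/conjugation convention in the $Z^{-b^i}$ correction, which must be matched to the convention used inside Lemma~\ref{lemma:entanglement-transfer-1}; this is a direct consequence of the explicit computation performed there and introduces no new ideas.
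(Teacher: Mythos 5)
Your proposal is correct and follows essentially the same route as the paper: purify the locally maximally mixed state so that the endpoint qudit of each path sits in an $\EPRd$ pair with a reference (via an isometry/unitary that commutes with the path operations), invoke Lemma~\ref{lemma:entanglement-transfer-2} to get the $X^{a^i}Z^{-b^i}$ correction, undo the purification, and combine paths using vertex-disjointness. The one point to phrase carefully is the final assembly: rather than ``undoing all purification isometries simultaneously'' (the $V_i$ act on overlapping registers, so they are not simultaneously defined), one should process the paths sequentially as in the paper, noting that each teleportation only applies a local Pauli and hence preserves local maximal mixedness, so the single-path argument can be repeated.
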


\begin{proof}
	We analyze the circuit one path at a time. 
	Fix $i\in\{1,\ldots,k\}$. Let $x_i\in A$ be one endpoint of $P_i$ and $y_i$ the other. Suppose that $\ket\phi$ is any locally maximally mixed state on a set $C$ of qubits with $x_i \in C$. Let $C' = C\setminus \set{x_i}$. Then there is a unitary $V: C' \to C_1'\otimes C_2'$ such that 
	\begin{equation}
		I \otimes V \ket{\phi}_{x_iC} = \EPRd_{x_iC_1'}\otimes \ket{\phi'}_{C_2'}\;.
	\end{equation}
	Now suppose we apply $V$ and then apply the entangling gates and measurements along $\Gamma_i$. By Lemma \ref{lemma:entanglement-transfer-2}, the state of qudits $y_i$ and $C_1'$ is 
	\begin{equation}\label{eq:16}
		X^{a^i}\otimes Z^{-b^i}\EPRd_{y_iC_1'}
		=X^{a^i}Z^{-b^i}\otimes I\EPRd_{y_iC_1'}\;.
	\end{equation}
	The equality in~\eqref{eq:16} can be verified by noticing that $Z\otimes Z\dagg$ stabilizes $\EPRd$.
	Applying $V\dagg$ gives
	\begin{equation}
		(I\otimes V\dagg)(X^{a^i}Z^{-b^i}\otimes I) \EPRd_{y_iC_1'} \otimes \ket{\phi'}_{C_2'}
		= (X^{a^i}Z^{-b^i}\otimes I) \ket{\phi}_{y_iC'}\;.
	\end{equation}
	Notice that the operation along the path commutes with $V$. Therefore, applying $V$, applying that operation, and then applying $V\dagg$ is equivalent to just applying that operation.

	Notice that the resulting state continues to be locally maximally mixed. Therefore, we can apply the above repeatedly until all of the path circuits have been applied. Then the final state is as desired.
\end{proof}

\subsubsection{Geometric locality}
\label{sec:geometric-local}

We explain how to modify the input and outputs sets $\mathcal{I}$ and $\mathcal{O}$ specified in~\eqref{eq:input-set} and~\eqref{eq:output-set} respectively in a way that both input and output  are Boolean strings of the same length that can be organized in a 2-dimensional pattern and such that the circuit described in Section~\ref{sec:game-completeness}
 can be implemented in the same depth $(D+1)$ using only geometrically local gates. 

Recall that the input to the circuit consists of an  input pattern $\pattern = \{(\Gamma^{(i)},u^{(i)})\}_{1\leq i \leq r}$ together with an $r$-tuple of queries $\{(x^{(i)}_1,\ldots,x^{(i)}_\ell)\}_{1\leq i \leq r}$ in $\game$. Recall also that we think of the circuit as being organized on an $N\times N$ grid of vertices, such that each vertex contains $4$ groups of $k$ qudits, each group facing one of the vertex' nearest neighbors on the grid. We index the input and output sets by grid vertices, with each vertex associated with an element taken from a constant-size alphabet $\Sigma$ that is defined in~\eqref{eq:def-sigma} below. 

Each star $\Gamma^{(i)}$ specifies $\ell$ paths $\Gamma^{(i)}_j$ from the central box to the noncentral boxes. Assign to one point in each noncentral box the label $u_j^{(i)}$. Also assign to $\ell$ points inside the central box the labels $g_j^{(i)}$. Extend the paths $\Gamma^{(i)}_j$ so that their endpoints are $u_j^{(i)}$ and $g_j^{(i)}$. 
We naturally distribute each question $x^{(i)}_j$ at the grid vertex indicated by $u^{(i)}_j$. 

 For each edge $(v,w)$ in the path, at vertex $v$ (resp. $w$) we include a symbol that indicates that a teleportation measurement is to be performed between the group of $k$ qudits nearest to vertex $w$ (resp. $v$). 
For any grid vertex $v$, the output of the circuit at vertex $v$ is either an answer in $\game$, $a^{(i)}_j \in (\Z_d)^{m_j}$, or a teleportation measurement outcome, which is an element of $(\Z_d)^{2k}$. A question $x^{(i)}_j$ is an element of $(\Z_d^k)^{m_j}$. 
This leads us to a circuit specification that considers the input and output sets 
\begin{equation}\label{eq:def-sigma}
\mathcal{I} \,=\, \mathcal{O} \,= \,\Sigma^{\grid_N}\;,\quad\text{where}\quad \Sigma = \{0,1\}^{2mk\lceil\log d \rceil}\;,
\end{equation} 
where $m=\max_j m_j$ and we fixed an arbitrary embedding of the natural input and output alphabets in $\Sigma$. Note that not all input strings are used; since we consider the parameters $d,m,k$ to be constants (depending only on the type of stabilizer game chosen), the cardinality of the alphabet $\Sigma$ is constant.

\subsection{Circuit game definition}
\label{sec:game-def}

Having specified the ideal (or, ``honest'') quantum circuit that we have in mind, we are ready to give a formal definition of the circuit game associated with $r$ copies of a stabilizer game $\game$. Recall the definition of the distribution $\mathcal{D}^{(r)}$ on input patterns given in Definition~\ref{def:sample-dist}. (For clarity, we omit the arguments $N,L$, for which we will eventually make an appropriate choice.)

\begin{definition}\label{def:circuit-game}
Let $\game$ be an $(\ell,k,m)$ stabilizer game and $1\leq r \leq N$ integer. The circuit game $G_{\game,N,r}$ is a game on the input and output sets defined in~\eqref{eq:def-sigma}.
The input distribution $\pi$ is obtained by independently sampling an input pattern $\pattern$ according to $\mathcal{D}^{(r)}$ and a tuple of $r$ independent queries  $(x^{(1)},\ldots,x^{(r)})$ for $\game$, and encoding them as an element of $\mathcal{I}$ as described in Section~\ref{sec:geometric-local}.  
The relation ${\m R}\subseteq \mathcal{I}\times\mathcal{O}$ is defined as the support of the output distribution of the circuit described in Section~\ref{sec:game-completeness}, when it is provided an input in the support of $\pi$. 
\end{definition} 
	
Skipping ahead, we note that in Section~\ref{sec:randomness-from-circuit} we consider a slight variation of the circuit game $G_{\game,N,r}$ from Definition~\ref{def:circuit-game}, where the $r$ query tuples to $\game$ are no longer independent and the win condition is relaxed to allow failure in some of the game instances. These modifications allow us to obtain a circuit game whose inputs can be sampled using few random bits (polylogarithmic in $N$), and that can be won with high probability by a circuit whose gates are subject to a limited amount of noise. 

\subsection{Soundness}
\label{sec:game-soundness}

We describe a reduction from circuit strategies (i.e. circuits with constant fan-in and bounded depth) in the circuit game introduced in Definition~\ref{def:circuit-game} to strategies for the players in the game $\game$. The next lemma refers to the notions of lightcone, input pattern, and circuit specification introduced in Section~\ref{sec:lightcones} and Section~\ref{sec:patterns}, and of rotated, stretched and repeated game introduced in Section~\ref{sec:rotated-game} and Section~\ref{sec:repeated-game}.

\begin{lemma}[Circuit locality implies local simulation]\label{lem:circuit-soundness}
Let $G = G_{\game,N,r}$ be a circuit game as in Definition~\ref{def:circuit-game}. Let $\pattern= \{(u^{(i)},\Gamma^{(i)})\}_{1\leq i \leq r}$ be an input pattern in the support of the input distribution for $G$. Let ${\m C}$ be a circuit with fan-in $K$ and depth $D$ that wins with probability $p$ in $G_{\game,N,r}$, for some $0\leq p \leq 1$, conditioned on the input pattern being $\pattern$. 

Let $\eta>0$. Let $L_f$ be the lightcone function obtained from the circuit graph, and $\spec = (L_f,\bad_{in},\emptyset)$ an associated circuit specification. Assume that $\pattern$ is \spec-causal and that a fraction at least $1-\delta$ of all input pairs $(u,\Gamma)\in\pattern$ are \spec-valid, for some $\delta\in[0,1]$. 

Then there exists an $(r\ell)$-player strategy in the $r$-repeated  rotated stretched game $\game' = ((\game_r)_{\Gamma}^{S})^R$, for some $\Gamma$ depending on ${\m C}$ that is defined in the proof, such that with probability at least $p$ the strategy succeeds in a fraction at least $1-\delta$ of the game instances.
\end{lemma}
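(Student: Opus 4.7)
The plan is to simulate the circuit $\m C$ by $r\ell$ spatially-separated players, exploiting \spec-causality to ensure each player's local simulation depends only on their own question. I would take $\Gamma = (\Gamma^{(i)})_{1\leq i \leq r}$ to be the collection of stars from \pattern, so that $\game' = ((\game_r)_\Gamma^S)^R$ is well-defined, and assign to player $(i,j)$ a responsibility region $S^{(i)}_j \seq \grid_N$ consisting of the output vertices along their portion of the path $\Gamma^{(i)}_j$, their central-box vertex $g^{(i)}_j$, and the vertices of $\bx(u^{(i)}_j)$. By Lemma~\ref{lem:completeness} the values at the vertices of $S^{(i)}_j$ in the ideal output are exactly the answer and rotation string that player $(i,j)$ is asked to report in $\game'$.

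The key locality claim I would then establish is that for each pair $(u^{(i)},\Gamma^{(i)})\in \pattern$ that is individually-\spec-causal and \spec-valid, the marginal distribution of $\m C$'s outputs on $S^{(i)}_j$ depends only on \pattern\ and on $x^{(i)}_j$. This uses the conditions of Definition~\ref{def:good-pattern}: condition (d) ensures that $|L_f(u^{(i')}_k)|\leq B$ for every query location, condition (b) forces $L_f(u^{(i')}_k)\cap S^{(i)}_j=\emptyset$ for $i'\neq i$, and condition (a) together with the disjointness of the boxes of a star forces $L_f(u^{(i)}_k)\cap S^{(i)}_j=\emptyset$ for $k\neq j$. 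Hence $\bigcup_{v\in S^{(i)}_j} L_b(v)$ contains no query location other than $u^{(i)}_j$, so the circuit's action on $S^{(i)}_j$ is a function of \pattern\ and $x^{(i)}_j$ alone.

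With that in hand I would define the strategy as follows. The shared resource is the initial state of $\m C$ (classical pattern bits plus ancilla qubits), distributed so that each qubit whose forward lightcone meets $S^{(i)}_j$ is held by player $(i,j)$; classical bits or $\ket 0$ ancillas lying in several responsibility regions are replicated, and any nontrivial quantum ancilla shared between regions is pre-distributed by appending an entanglement-transfer preparation as in Lemma~\ref{lem:teleport}, which only changes the local description of the resource and not the joint output distribution. On input $x^{(i)}_j$, player $(i,j)$ runs the subcircuit of $\m C$ consisting of gates in the backward lightcone of $S^{(i)}_j$, measures the wires in $S^{(i)}_j$, and reports the outcomes in the format of $\game'$. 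Correctness is then by construction: the joint distribution of the simulated outputs on the spec-valid subgames equals the marginal of $\m C$'s output distribution on the corresponding wires, and Lemma~\ref{lem:completeness} identifies the win condition of subgame $i$ in $\game'$ with a constraint on exactly those wires; so whenever $\m C$ satisfies $\m R$ (probability at least $p$) the strategy wins every spec-valid subgame, which by hypothesis are at least a $(1-\delta)$ fraction of the $r$ copies.

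The main obstacle is making the quantum portion of the reduction rigorous when backward lightcones of distinct responsibility regions share qubits carrying nontrivial quantum information, since such qubits cannot literally be split between two players. My plan is to resolve this by absorbing any such sharing into an augmented preparation phase of the resource state (using the entanglement-transfer primitive of Lemma~\ref{lem:teleport} to pre-distribute joint entanglement across the affected players) so that each player's local measurement ultimately acts on a disjoint tensor factor. Because \spec-causality forbids the shared qubits from influencing outputs outside their intended responsibility regions, this re-distribution does not perturb the joint distribution of outputs on $\bigcup_{(i,j)} S^{(i)}_j$, and the reduction goes through.
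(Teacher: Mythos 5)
There is a genuine gap, and it sits exactly where you flagged your ``main obstacle'': the treatment of the part of the circuit that lies outside the forward lightcones of the input locations but inside the backward lightcones of several responsibility regions. Your plan has each player $(i,j)$ re-run ``the subcircuit of $\m C$ consisting of gates in the backward lightcone of $S^{(i)}_j$'', with shared ancillas replicated or ``pre-distributed'' via Lemma~\ref{lem:teleport}. In the quantum case this does not reproduce the joint output distribution: gates in the common backward lightcone of two regions act, in the real circuit, \emph{once} on a single quantum system that becomes entangled with both regions, whereas in your simulation they are either applied twice to independent replicas (destroying the correlations across players that the game's win condition tests --- e.g.\ the joint GHZ correlations and the rotation strings along a path, which must jointly satisfy the relation) or must somehow be shared, which is the very problem to be solved. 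Your justification that ``\spec-causality forbids the shared qubits from influencing outputs outside their intended responsibility regions'' misreads Definition~\ref{def:good-pattern}: the causality conditions constrain the forward lightcones of the \emph{input locations} $u^{(i)}_j$ only. Non-input qubits are expected to have forward lightcones meeting many regions at once --- in the honest circuit this is precisely how the resource state is spread along the stars --- so nothing forbids the sharing you need to eliminate, and the appeal to the entanglement-transfer lemma (a tool for the completeness construction, not for splitting an arbitrary circuit) does not substantiate the claim that the redistribution leaves the joint distribution unchanged. Relatedly, your ``key locality claim'' only controls single-region \emph{marginals} as functions of $x^{(i)}_j$; the lemma requires an operational joint simulation by separated parties, which marginal statements alone do not give.

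The missing idea, which is how the paper proceeds, is to never duplicate gates: hard-wire $\ket{0}$ (and the fixed pattern) at every non-input vertex, execute \emph{all} gates that lie outside the forward lightcones of the input locations once, globally, and declare the resulting intermediate state to be the shared entangled state $\sigma$ of the strategy (no teleportation is needed --- the shared state of a nonlocal strategy is arbitrary). The frontier wires are then handed out: a wire in the forward lightcone of a single input location $u^{(i)}_j$ goes to player $(i,j)$; a wire whose backward lightcone contains two distinct input locations is traced out, which is harmless because \spec-causality guarantees no such wire can reach any star vertex (for quantum circuits bounded fan-in also bounds fan-out, so each frontier wire has a single destination). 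Each player, upon receiving $x^{(i)}_j$, completes only the gates in the forward lightcone of its own input location on the wires it holds, measures the wires on its portion $\widetilde{\Gamma}^{(i)}_j$ of the star, and the argument you give for why those gates depend on no other question (via conditions \ref{item:miss-own-lightcone} and \ref{item:miss-other-lightcone}) then applies verbatim. Your handling of the non-\spec-valid pairs and the accounting giving success on a $(1-\delta)$ fraction with probability at least $p$ matches the paper and is fine; it is the construction of the shared resource and the local operations that must be repaired as above.
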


\begin{figure}[htb!]
\centering%
\includegraphics[scale=0.6, angle = 0]{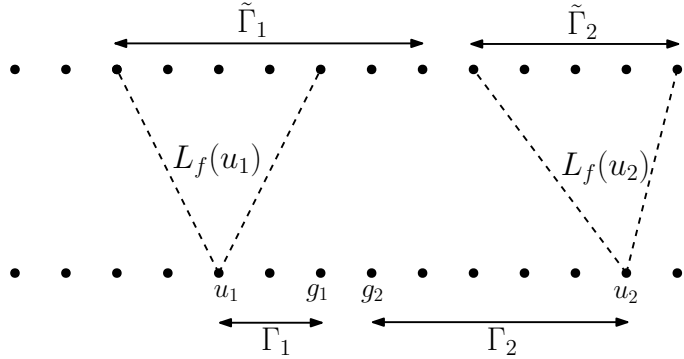}
\caption{Illustration of the construction in the proof of Lemma~\ref{lem:circuit-soundness}. A circuit with two input locations $u_1,u_2$ (case $r=1$ and $\ell=2$). The star $\Gamma$ has two simple paths linking the center vertices $g_i$ with $u_i$, for $i=1,2$. The forward lightcones of $u_1$ and $u_2$ do not intersect at any vertex in $\Gamma$. The output vertices along $\Gamma$ are partitioned into $\tilde{\Gamma}_1$ and $\tilde{\Gamma}_2$ in a way that all vertices within a lightcone of an input location $u_i$ are associated with the index $i$.}
\label{fig:lightcone}
\end{figure}

\begin{proof}
An input to the circuit $\m C$ consist of two parts: the pattern $\pattern = \{(u^{(i)},\Gamma^{(i)})\}_{1\leq i \leq r}$, and the queries $\{(x^{(i)}_1,\ldots,x^{(i)}_\ell)\}_{1\leq i \leq r}$, that are embedded in the input to the circuit as described in Section~\ref{sec:geometric-local}. By assumption there is a fraction at most $\delta$ of $(u^{(i)},\Gamma^{(i)})$ that are $\spec$-valid. For the remainder of the argument, ignore those vertices (equivalently, relabel $r$ to $(1-\delta)r$). When designing a strategy for the players in the game, the players associated to ignored vertices ignore their question and return a random answer.

The assumption that $\pattern$ is $\spec$-causal implies that for each $i\in\{1,\ldots,r\}$ each of the vertices $v\in\Gamma^{(i)}$ has a backwards lightcone that includes at most one of the input locations $u_j^{(i)}$. Moreover, all vertices in $\Gamma^{(i)} \cap \bx(u^{(i)}_j)$ have a backwards lightcone that includes no other input location than $u_j^{(i)}$. 

We define an $(r\ell)$-player strategy in the rotated, stretched game $\game'$. For each $i\in\{1,\ldots,r\}$, partition $\Gamma^{(i)}$ into sets $\widetilde{\Gamma}^{(i)}_j$, $j\in\{1,\ldots,\ell\}$, such that the only input location in the backwards lightcone of any vertex in $\widetilde{\Gamma}^{(i)}_j$  is $u_j^{(i)}$. Note that the lightcones may intersect at other grid vertices. 

Recall that by definition each wire of $\m C$ is associated with the space $\C^{d'}$, where $d' = |\Sigma|$ with $\Sigma$ the input alphabet for the circuit game. 
We now describe an unambiguous way to generate a density matrix $\sigma$ on $(\C^{d'})^{\otimes r\ell}$, together with an assignment of each qudit of $\sigma$ to a player $(i,j)$, using the circuit ${\m C}$ and the fixed input pattern $\pattern$. 

We define $\sigma$ as the output of the circuit ${\m C}$, when certain inputs have been fixed, and certain wires have been traced out. 
For all input grid vertices that are not an input location $u_j^{(i)}$, hard-wire the input to $\ket{0}$. Execute the circuit until a vertex  $v$ of the circuit graph that is in the forward lightcone of an input vertex associated with location $u_j^{(i)}$ has to be considered. Since no input has been hard-wired for that vertex, the circuit cannot proceed. There are two cases:
\begin{itemize}
\item If the forward lightcone of $v$ intersects the forward lightcone of two different input locations $u_{j}^{(i)}$, then no vertex in the forward lightcone of that location can be on any of the stars $\Gamma^{(i')}$ for $i'\neq i$ (as otherwise the vertex would be a vertex of $\Gamma$ whose backwards lightcone contains two distinct input locations). In that case, trace out the vertex. 
\item In all other cases,  vertex $v$ is in the forward lightcone of a single input location $u_j^{(i)}$. In this case, give the circuit wire associated with that vertex (in the state that it currently is) to player $(i,j)$. 
\end{itemize}
Finally, split the unassigned vertices on any path $\Gamma^{(i)}_j$ in an arbitrary way among the players; the set $\tilde{\Gamma}_j^{(i)}$ is defined as the set of vertices from the star $\Gamma^{(i)}$ assigned to the $j$-th player. All remaining unassigned vertices are traced out. 

This procedures specifies the state $\sigma$ shared by the players (see Figure~\ref{fig:lightcone} for an illustration). It remains to define their observables. Once the game starts, each player uses its input $x_j^{(i)}$ in location $u_j^{(i)}$ (encoded as an input to the circuit game, as specified in Section~\ref{sec:geometric-local}), and proceeds to complete the execution of the circuit on the qudits that it holds. If a gate has an output wire that points to a qudit that is not in the player's possession, the player measures the qudit and ignores the outcome. Finally, the player measures all qudits in the locations $\Gamma^{(i)}_j$ in the computational basis, and returns them as its answer (decoded as an answer in $\game$, as specified in Section~\ref{sec:geometric-local}). 

The fact that this strategy for the players has the same success probability in $\game'$ as the circuit $\m C$ in the circuit game $G_{\game,N,r}$ follows from the fact that the success criterion in $G_{\game,N,r}$ only involves output vertices that are along the stars $\Gamma^{(i)}$, and it can be verified that the joint operations performed by the players in the above-defined strategy correctly compute the reduced density matrix computed by the circuit on all those output vertices. Finally, using Lemma~\ref{lem:teleport} it can be verified that the success criterion in the circuit game matches the win condition for the rotated stretched  game. 
\end{proof}

\begin{remark}\label{rk:derandomized-soundness}
The proof of Lemma~\ref{lem:circuit-soundness} establishes a stronger statement than claimed in the lemma, that will be useful later. Specifically, the reduction from a circuit to a strategy for the players in $\game'$ constructed in the proof applies whenever the input pattern $\pattern$ chosen in the circuit game is \spec-causal for \spec\ the circuit specification derived from $\m C$. Moreover, whenever this is the case the reduction yields a strategy for the players that \emph{exactly} reproduces the (suitably decoded) output distribution of the circuit, on \emph{any} choice of queries $x^{(i)}$. 
\end{remark}

We end this section with the proof of Theorem~\ref{cor:exp-bgk-expsoundess}, that specifies a circuit game for which there is a very large separation between the optimal winning probabilities of classical low-depth and quantum circuits. 


\begin{reptheorem}{cor:exp-bgk-expsoundess}[Exponential Soundness, restated]
	Let $N\geq 1$ be an integer. There exists a circuit game $G_N$ such that the input and output sets are $\grid_{N'}$ for some $N'=O(N)$, and such that the following hold:
	\begin{itemize}
		\item (Completeness) There exists a depth-$4$ quantum circuit that succeeds at $G_N$ with probability $1$. 
		\item (Soundness) For any $\eta < 1/7$ there is a constant $c>0$ depending on $\eta$ only such that any classical circuit of depth $D\leq c\log N$ has success probability at most $\exp(-\Omega(N^{\eta}))$ in the game.
	\end{itemize}
\end{reptheorem}

\begin{proof}
	Let $G_N = G_{\game,N,r}$ be the circuit game introduced in Definition~\ref{def:circuit-game}, where the stabilizer game $\game$ is instantiated as the GHZ game from Definition~\ref{def:ghz}, and let $\pi$ and $\pattern$ be the input distribution and input pattern introduced in Definition~\ref{def:circuit-game} respectively. 
	
	\emph{Completeness:}  By Lemma \ref{lem:completeness} the circuit $\mathcal{C}_{ideal}$, as defined from the game $G_N$ at the beginning of Section \ref{sec:game-completeness}, wins the  game $G_N$ with probability at least $(\omega^*(\game))^r$, where $\omega^*(\game)$ is the optimal entangled winning probability for the stabilizer game $\game$.  Since $\omega^*(\game) = 1$ it follows that $\mathcal{C}_{ideal}$ succeeds at $G_N$ with probability 1.  
		As noted after Definition~\ref{def:ghz}, the shared state $\ket \psi_{GHZ}$ in the optimal strategy for the GHZ game can be prepared starting from $\ket{000}$ by a circuit of depth 3.  It follows by definition that $\mathcal{C}_{ideal}$ has depth at most 4.  This establishes the completeness part of the corollary. 
\newcommand{\pwin}{p_\text{win}}

	\emph{Soundness:} Fix an $\eta\in(0,\frac17)$. Let ${\m C}$ be a classical circuit with fan-in $K$ and depth $D\leq c\log N$ (for $c$ a sufficiently small constant depending on $\eta$) that wins with probability $\pwin$ in $G_{\game,N,r}$. We show that if $\pwin$ is too large, then there is a good classical strategy for the players in the game $\GHZk{\floor {r/2}}$, the $\floor{r/2}$-repeated $\GHZ$ game defined in Definition~\ref{def:n-rotated}.
	For an input pattern \pattern\ in the support of $\m D^{(r)}$, say that \pattern \emph{ has a large \spec-causal subpattern} if there exists a subpattern $\pattern' \subseteq \pattern$ such that $\pattern'$ is \spec-causal and $\abs{\pattern'} \geq \floor{r/2}$. Define $q$ as
	\begin{equation}\label{def:q-cor-bgk-exp}
	q = 1-\Prob{\pattern\text{ has a large \spec-causal subpattern}}.
	\end{equation}

    \newcommand\patterncore{\ensuremath{\pattern_\text{core}}}
	We first show an upper bound on $q$. For any pattern $\pattern$ let $\patterncore = \pattern_{VAL} \cap \pattern_{CAUS}$, where as in Lemma~\ref{lem:good-rgeneral2},
\begin{align*}
	\pattern_{VAL} &= \big\{(u, \Gamma) \in \pattern | (u, \Gamma) \text{ is \spec-valid} \big\}\;,
	\\
	\pattern_{CAUS} &= \big\{(u, \Gamma) \in \pattern | (u, \Gamma) \text{ is individually-\spec-causal with respect to $\pattern_{VAL}$} \big\}\;.
\end{align*}
By construction	\patterncore\ is \spec-causal.
	To prove an upper bound on $q$ it suffices to place a lower bound on the probability that $\abs{\patterncore} \geq \floor{r/2}$.  
    Recall that the classical circuit ${\m C}$ has fan-in $K$, depth $D\leq c\log N$, and circuit specification $\spec = (L_f,\bad_{in},\emptyset)$.  By Lemma~\ref{lem:lightcones}, we may choose $c$ as a function of $\eta$ such that $\spec$ is $(B, R_{in}, 0)$-bounded, where $B = O(N^{\eta})$ and $R_{in} = O(N^{2 - \eta})$.  
	By~\eqref{eq:bound-on-patterncore} we get
	\begin{equation}\label{eq:pc-1}
	\Prob{\abs{\patterncore} \geq r (1 - CrR_{in}/N^2 - 2p - 3 t/r)} \geq 1 - 4\exp \left ( -t^2/8r\right),
	\end{equation}
 	where $p = C'rB(rL^2/N^2 + \ell/L)$. Here $\ell = 3$, and the other parameters depend on $N$. 
	To set parameters, first recall that $B=O(N^{\eta})$ and $R_{in} = O(N^{2-\eta})$. Set $t = r/10$, $r=\Theta( N^\eta)$, and $L$ such that $L=O(N^{1-3\eta})$ and $L = \Omega(N^{4\eta})$, which is possible as long as $\eta<1/7$. Then $p = O(1)$ and $rR_{in}/N^2 = O(1)$. By choosing the constants appropriately, we can ensure that
	\begin{equation*}
		1 - CrR_{in}/N^2 - 2p - 3 t/r > 1/2.
	\end{equation*}
	With this choice of parameters,~\eqref{eq:pc-1} implies that $\abs{\patterncore} \geq r/2$ with probability at least
	$1 - 4\exp \left ( -C''r\right )$, for some constant $C''>0$.
To conclude, note that whenever $\pattern$ contains a \spec-causal subpattern $\pattern' \subseteq \pattern$ such that $|\pattern'|\geq r/2$ it follows from Lemma \ref{lem:circuit-soundness} and Remark~\ref{rk:derandomized-soundness} that the circuit $\mathcal{C}$ implies a strategy for $(r\ell/2)$ classical players in the repeated 
game $\GHZk {\floor{r/2}}$. Using that the maximum success probability of classical players in $\GHZ$ is $3/4$ and that the classical value multiplies under repetition (since the players are distinct) it follows that the implied strategy has success probability at most $(3/4)^{\floor{r/2}}$. 
\end{proof}

\section{Randomness generation}
\label{sec:randomness-generation}


In this section we give the construction of a circuit game  such that any low-depth circuit that succeeds in the game with non-negligible probability must generate outputs that have large min-entropy, even conditioned on the inputs and side information that may be correlated with the initial state of the circuit. 

The main idea for the construction is to embed a large number of copies of a simple stabilizer game $\game$ in a circuit game, as described in Section~\ref{sec:circuit-games}. (We use the Mermin $3$-player GHZ game~\cite{mermin1990extreme}, though a similar reduction could be performed starting from any stabilizer game whose quantum value is larger than its classical value.) Using the reduction from Lemma~\ref{lem:circuit-soundness}, it follows from Remark~\ref{rk:derandomized-soundness} that the output distribution of any circuit that wins with non-negligible probability in the circuit game can be deterministically mapped to a (stretched, rotated) variant of the $r$-fold parallel repetition, with $r$ independent groups of players, of $\game$. This reduction implies that, to bound the output entropy of the circuit, it suffices to place a lower bound on the output entropy of any strategy in the parallel repeated game. 

To accomplish this last step we employ the framework based on the Entropy Accumulation Theorem (EAT)~\cite{dupuis2016entropy} introduced in~\cite{arnon2018practical}, including the improvements from~\cite{dupuis2018entropy}. This framework allows to place a linear (in the number of repetitions) lower bound on the amount of min-entropy generated in the sequential repetition of a nonlocal game, using a lower bound on the function that measures the von Neumann entropy generated in a single instantiation of the game as a function of the success probability. Our setting of parallel repetition  is more constrained (thus in principle easier to analyze) than the sequential setting, but the results from~\cite{arnon2018practical,dupuis2018entropy} still give the best rates for both settings.

In Section~\ref{sec:randomness-from-stabilizer} we start by establishing a bound on the single-round randomness for the three-player GHZ game that takes the form required to apply the framework from~\cite{arnon2018practical}. In Section~\ref{sec:randomness-from-circuit} we combine this bound with the reduction from circuit games to nonlocal games shown in Section~\ref{sec:circuit-games} to deduce a family of randomness-generating circuit games.

\subsection{Randomness generation from the GHZ game}
\label{sec:randomness-from-stabilizer}


We briefly recall the formalism from~\cite{arnon2018practical}, when tailored to our setting (in particular, we focus on processes specified by quantum strategies in a nonlocal game, instead of arbitrary quantum channels in~\cite{arnon2018practical}). The main definition that is needed is that of a \emph{min-tradeoff function}, which specifies a lower bound on the amount of randomness generated in a single execution of a nonlocal game, as a function of the players' probability of winning in the game. We give the definition for stabilizer games, as introduced in Definition~\ref{def:stab-game}. 

\begin{definition}[min-tradeoff function]
Let $\game$ be an $(\ell,k,m)$ stabilizer game. Fix a set of measurements $\{M_{x_j}\}$ for the $\ell$ players in the game. 
For any $\omega\in [0,1]$, let $\Sigma(\omega)$ denote the set of states $\rho_{\reg{P}_1\cdots \reg{P}_k \reg{R}}$ such that when the players' state is initialized to $\rho$ (with player $i$ being given register $\reg{P}_i$), the players' strategy wins the game with probability $\omega$.\footnote{We may without loss of generality assume that the dimension of $\reg{R}$ is no more than the sum of the dimensions of the players' private registers $\reg{P}_j$, themselves fixed by the measurements $\{M_{x_j}\}$. Therefore, the set $\Sigma(\omega)$ can be taken to be a compact set.}

Then a real affine function $f$ on $[0,1]$ is called an (affine) \emph{min-tradeoff function} for $\game$ and $\{M_{x_j}\}$ if it satisfies
\[ f(\omega) \,\leq\, \min_{\rho\in \Sigma(\omega)} H(A|QR)_{\mathcal{M}(\rho)}\;,\]
where the entropy is evaluated on the post-measurement state $\mathcal{M}(\rho)$ obtained after application of the players' measurements, and $Q$ and $A$ are random variables that represent inputs (distributed according to $\pi$) and outputs for the players in $\game$. If $f$ is a min-tradeoff function for a game $\game$ and every possible set of measurements $\{M_{x_j}\}$ for the players, then we simply say that $f$ is a min-tradeoff function for $\game$. 
\end{definition}

To illustrate the definition we apply results from~\cite{woodhead2018randomness} to give a min-tradeoff function for the Mermin GHZ game introduced in Definition~\ref{def:ghz}.
It is well-known (and easily verified) that the best classical strategy for this game succeeds with probability $\frac{3}{4}$. This in particular implies that any strategy that wins with probability strictly larger than $\frac{3}{4}$ cannot be deterministic, hence generates randomness.  The following bound shown in~\cite{woodhead2018randomness} provides a tight lower bound on the conditional entropy present in the outputs of any strategy that succeeds with sufficiently large probability.\footnote{See~(18) in~\cite{woodhead2018randomness}. The authors prove a stronger bound, that applies to the min-entropy and extends to all success probabilities in the range $[3/4,1]$. We only state the weaker bound that will be sufficient for us. To see how the bound stated in Lemma~\ref{lem:ghz-randomness} is obtained from (18) in~\cite{woodhead2018randomness}, use the replacement $\omega = \frac{1}{2}+\frac{M}{8}$.}

\begin{lemma}[\cite{woodhead2018randomness}]\label{lem:ghz-randomness}
Let $\tau = (\rho,\{M_{x_j}\})$ be a strategy with success probability $\omega \geq \frac{7}{8}$ in the game $\GHZ$, where $\rho$ is a density matrix on the provers' registers $\reg{P}_1\cdots\reg{P}_\ell$ and an arbitrary auxiliary register $\reg{R}$. Then 
\begin{equation}\label{eq:ghz-vn}
 H\big(A_1 A_2|RQ\big) \,\geq\, f_\GHZ(\omega)\,=\, -\log\Big(\frac{5}{4}-\omega + \sqrt{3}\sqrt{\big(\omega-\frac{1}{2}\big)\big(1-\omega\big)}\Big)\;,
\end{equation}
where $Q$ is a random variable that denotes the query to the players, and $A_1$, $A_2$ are random variables that denote the answers $a_1,a_2\in \Z_2$ from the first two players.
\end{lemma}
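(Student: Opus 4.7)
The plan is to prove a device-independent randomness bound for the three-player GHZ game using the standard toolkit of Jordan's lemma plus explicit single-qubit optimization, following the paradigm that has become standard for binary-output, two-input Bell scenarios (and which underlies the argument in~\cite{woodhead2018randomness}). First, I would purify: without loss of generality replace $\rho$ by a pure state $\ket{\psi}_{\reg{P}_1 \reg{P}_2 \reg{P}_3 \reg{R}}$ and assume each player's measurements are projective, which costs no entropy (the auxiliary register $\reg R$ can absorb any purifying system). Since the random variable $Q$ is classical and public, it suffices to bound $H(A_1 A_2 | R, Q=q)$ for each query $q$ and then average.

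The next step, which is the core dimension-reduction, is to invoke Jordan's lemma on each player's pair of binary observables (the observables for inputs $x_j = 0$ and $x_j = 1$). This decomposes each $\reg{P}_j$ as a direct sum of two-dimensional (qubit) invariant subspaces, plus trivial one-dimensional commuting pieces. Inserting the corresponding block-diagonal projectors into $\ket{\psi}$ yields a classical-quantum convex decomposition of the strategy into qubit strategies, each indexed by a triple of angles parameterizing the Pauli-like observables. In each block one can write down both the local winning probability $\omega^{(b)}$ and the conditional entropy $H(A_1 A_2 | R)^{(b)}$ as explicit trigonometric functions of these angles.

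The main technical step is to solve, within each qubit block, the single-round semidefinite/trigonometric optimization: minimize $H(A_1 A_2 | R)$ subject to the block winning with probability at least $\omega^{(b)}$. Using that the GHZ game is symmetric under relabelings of the three players and of the $X/Y$ questions, one can restrict to a one-parameter family of qubit strategies; the extremal strategies are those that saturate a Tsirelson-type bound for the GHZ functional, which is where the algebraic form $\sqrt{3}\sqrt{(\omega-1/2)(1-\omega)}$ naturally appears (in the maximal violation regime the largest eigenvalue of the post-measurement state turns out to be $5/4 - \omega + \sqrt{3}\sqrt{(\omega-1/2)(1-\omega)}$, whose negative logarithm is $f_\GHZ(\omega)$). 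To finish, I would verify that the function $\omega \mapsto f_\GHZ(\omega)$ is concave on $[7/8, 1]$, so that the convex combination over Jordan blocks only decreases the $\omega$ at which the bound is evaluated, and then apply Jensen's inequality in the concavity direction to conclude $H(A_1 A_2 | R Q) \geq f_\GHZ(\omega)$ for the overall strategy.

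The hard part will be the per-block optimization: extracting the exact closed form, rather than a loose bound, requires careful use of symmetry to collapse the multi-angle optimization to one variable and then a direct eigenvalue computation of a $2 \times 2$ post-measurement state. The restriction $\omega \geq 7/8$ in the statement is precisely the threshold above which the minimizing strategy has a nondegenerate structure, so $f_\GHZ$ is concave on this range; outside it one would need to take the concave envelope instead, as is done in~\cite{woodhead2018randomness} to obtain the full statement there.
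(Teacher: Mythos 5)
You should first note that the paper does not prove this lemma at all: it is imported directly from \cite{woodhead2018randomness} (equation (18) of that paper, after the substitution $\omega = \frac{1}{2}+\frac{M}{8}$, and weakened from a min-entropy statement to the von Neumann entropy via $H \geq \Hmin$), so there is no internal argument to compare against and your Jordan-lemma reconstruction is an independent reproof attempt. As an outline, the purification step and the reduction to qubit blocks are in the right direction (enlarging $\reg{R}$ and handing the block index to the adversary only decreases the conditional entropy), but the two substantive ingredients --- the symmetry reduction to a one-parameter family and the per-block closed form whose largest eigenvalue is $\frac{5}{4}-\omega+\sqrt{3}\sqrt{(\omega-\frac{1}{2})(1-\omega)}$ --- are asserted rather than derived, and they are essentially the entire content of the cited result. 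Note also that $H(A_1A_2|RQ)$ singles out players $1$ and $2$, so you cannot symmetrize over arbitrary player permutations without further argument.

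There is moreover a concrete error in your final step: the convexity/concavity direction is reversed. To combine the Jordan blocks you need the bounding function to be \emph{convex}: if each block satisfies $H_b \geq f_\GHZ(\omega_b)$ and occurs with weight $p_b$, then Jensen's inequality for convex $f_\GHZ$ gives $\sum_b p_b f_\GHZ(\omega_b) \geq f_\GHZ\big(\sum_b p_b\,\omega_b\big)$; for a concave function the inequality runs the other way and the conclusion $H(A_1A_2|RQ)\geq f_\GHZ(\omega)$ does not follow. In fact $f_\GHZ$ \emph{is} convex on $[7/8,1]$ (for instance $f_\GHZ(7/8)=\log\frac{4}{3}\approx 0.415$, $f_\GHZ(15/16)\approx 0.74$, $f_\GHZ(1)=2$, so the graph lies below its chords), and the paper relies on exactly this convexity in Corollary~\ref{cor:min-ghz}, where $f_\GHZ$ is lower-bounded by its tangent to produce an affine min-tradeoff function; your claim of concavity contradicts that step. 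Relatedly, your reading of the threshold $\omega\geq 7/8$ is off: it is not where concavity sets in, but (part of) the range on which the closed-form expression is a valid lower bound at all. For example at $\omega=3/4$ the formula evaluates to $-\log\big(\frac{1}{2}+\frac{\sqrt{3}}{4}\big)\approx 0.1>0$, yet a deterministic classical strategy wins with probability $3/4$ and has zero conditional entropy; below the threshold the expression must be replaced by a suitable convex envelope, which is what the full-range statement of \cite{woodhead2018randomness} effectively does.
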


Note that for $\omega=1$, the bound~\eqref{eq:ghz-vn} gives $2$ bits of entropy, which is clearly optimal. If $\omega = 1-\eps$ for small $\eps$, the bound degrades as $2-O(\sqrt{\eps})$. 

Following the framework from~\cite{arnon2018practical}, the bound provided in Lemma~\ref{lem:ghz-randomness} already implies a linear lower bound on the entropy generated by the sequential repetition of the GHZ game. For our purposes it will be convenient to have a bound on the entropy generated by the stretched, rotated variant of the GHZ game, as introduced in Section~\ref{sec:rotated-game}. 

\begin{corollary}[min-tradeoff function for $\GHZ$]\label{cor:min-ghz}
Let $\frac{7}{8} < p_s < 1$. Let $\Gamma$ be a tuple of sets as in Definition~\ref{def:stretched-game}. Then the function $g_{p_s}:[0,1]\to\R$ defined by 
\[ g_{p_s}(q) \,=\, f_\GHZ(p_s) + (q-p_s)\frac{df_\GHZ}{d\omega}(p_s)\]
is a min-tradeoff function for the rotated stretched game $\GHZ_\Gamma^{S,R}$. 
\end{corollary}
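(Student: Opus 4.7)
The plan is to reduce the corollary to Lemma~\ref{lem:ghz-randomness} by two standard moves: a deterministic local post-processing that converts any $\GHZ_\Gamma^{S,R}$ strategy into a $\GHZ$ strategy with the same winning probability, followed by a convexity argument that replaces the resulting pointwise bound $f_\GHZ$ by its affine underestimator $g_{p_s}$.

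For the first step, given a strategy $\tau'$ for $\GHZ_\Gamma^{S,R}$ with winning probability $q$, I would have each player locally replace its answer $(a_j, r_j)$ by the corrected value $a_j - \cor_{r_j}(q_j)$, which is computable from the player's own data by Lemma~\ref{lem:cor-computed-locally}, and discard the remaining outputs introduced by the stretch. By Lemma~\ref{lem:rotation-preserves-rigidty} and the rotated win condition~\eqref{eq:rotated-stabilizer-win-condition}, the resulting $\GHZ$ strategy $\tau$ has the same winning probability $q$. Writing $A$ for the full outputs of $\tau'$, $Q$ for the queries, and $R$ for any side information, the coarse-grained outputs $A'$ are a deterministic function of $(A,Q)$; combining $H(A,A'|QR) = H(A|QR)$ (since $A'$ is a function of $(A,Q)$) with $H(A|A'QR) \geq 0$ yields the data-processing bound $H(A|QR)_{\tau'} \geq H(A'|QR)_\tau$. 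Whenever $q \geq 7/8$, Lemma~\ref{lem:ghz-randomness} applied to $\tau$ then gives $H(A|QR)_{\tau'} \geq f_\GHZ(q)$.

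For the second step, I would verify by direct differentiation of $f_\GHZ(\omega) = -\log h(\omega)$, where $h(\omega) = 5/4 - \omega + \sqrt 3 \sqrt{(\omega-1/2)(1-\omega)}$, that $f_\GHZ$ is convex on $[7/8, 1]$. Since $h$ is smooth and bounded strictly away from zero on this interval, this reduces to checking the sign of a concrete expression in $\omega$ (a routine, if slightly tedious, calculus exercise; a sanity check at the endpoints and midpoint already confirms the expected behaviour). Convexity of $f_\GHZ$ on $[7/8, 1]$ implies that its tangent $g_{p_s}$ at any $p_s \in (7/8, 1)$ lies pointwise below $f_\GHZ$ on this interval, so that $g_{p_s}(q) \leq f_\GHZ(q) \leq H(A|QR)_{\tau'}$ for every strategy with $q \geq 7/8$.

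The only nontrivial accounting, and the main ``obstacle'' of sorts, concerns the regime $q < 7/8$, where Lemma~\ref{lem:ghz-randomness} is vacuous yet $g_{p_s}(q)$ may still be positive. Here one appeals to the convention inherited from the entropy accumulation framework of~\cite{arnon2018practical}: the min-tradeoff bound need only be enforced on distributions consistent with the protocol accepting (i.e.\ with empirical win probability at least $p_s$), so the trivial bound $H(A|QR) \geq 0$ suffices outside this regime. No step beyond the convexity computation requires more than direct citation of the results stated earlier in the paper.
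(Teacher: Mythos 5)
Your first two steps are essentially the paper's own proof. The reduction is exactly the one used there: each player locally applies the correction of Lemma~\ref{lem:cor-computed-locally} to pass from $\GHZ_\Gamma^{S,R}$ to $\GHZ$ with the same winning probability (Lemma~\ref{lem:rotation-preserves-rigidty}), the coarse-grained answers are a deterministic function of the full answers so that, the outputs being classical, $H(A|QR)$ in the rotated stretched game is at least the corresponding entropy in $\GHZ$, and then one invokes Lemma~\ref{lem:ghz-randomness} and replaces $f_\GHZ$ by its tangent. For the convexity itself you do not need the tedious differentiation: the argument of the logarithm in~\eqref{eq:ghz-vn} is a concave function of $\omega$ on $[1/2,1]$ (affine plus the square root of a concave quadratic), and $-\log$ is convex and decreasing, so $f_\GHZ$ is convex by composition.

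The place where the proposal goes wrong is precisely the regime $q<7/8$ that you flag. There is no such ``convention'' in the framework of~\cite{arnon2018practical,dupuis2018entropy}: a min-tradeoff function, both as defined in this paper and as required for the entropy accumulation theorem, must lower bound the conditional entropy for \emph{every} achievable winning probability $\omega$; the acceptance threshold enters only later, when the EAT is applied and $f$ is evaluated at the accepting statistics (and its error term involves the variation of $f$ over all distributions). Hence the trivial bound $H(A|QR)\geq 0$ only helps where $g_{p_s}(q)\leq 0$, and for $p_s$ close to $7/8$ one has $g_{p_s}(7/8)\approx f_\GHZ(7/8)=\log(4/3)>0$, so $g_{p_s}$ stays strictly positive on an interval just below $7/8$ where Lemma~\ref{lem:ghz-randomness}, as stated, gives nothing; note that the formula $f_\GHZ$ itself is not a valid entropy bound there, since it is positive at $\omega=3/4$ while deterministic strategies winning with probability $3/4$ generate no entropy. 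Closing this regime requires either an entropy bound on $(3/4,7/8)$ that dominates the tangent (for instance the stronger bound of~\cite{woodhead2018randomness} mentioned in the paper's footnote, which extends down to $3/4$), or restricting $p_s$ so that $g_{p_s}(7/8)\leq 0$. In fairness, the paper's own proof is silent on this point as well---``convex, hence at least its tangent'' only covers $\omega\geq 7/8$---so you correctly identified the one delicate spot, but the justification you give for it misstates what a min-tradeoff function must satisfy and does not close the gap.
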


\begin{proof}
First we observe that the bound~\eqref{eq:ghz-vn} from Lemma~\ref{lem:ghz-randomness} applies equally to the  the rotated stretched game $\GHZ_\Gamma^{S,R}$, for any fixed $\Gamma$. Indeed, fix a strategy $(\rho,\{M_{x_j}\})$ in $\GHZ_\Gamma^{S,R}$.  Using Lemma~\ref{lem:rotation-preserves-rigidty} we obtain a strategy $\tau'=(\rho,\{M'_{x_j}\})$ for $\GHZ$ that has the same success probability. Furthermore, in the coarse-graining of the strategy the answers $A_1,A_2,A_3 \in \Z_2$ in $\tau'$ are a deterministic function of the answers $A_1,A_2,A_3 \in \Z_2 \times (\Z_2^2)^{|\Gamma_1|}$ in $\GHZ_\Gamma^{S,R}$, where the second factor is for the stretched rotation string. Therefore the same bound on randomness generation that applies to $\GHZ$ applies to $\GHZ_\Gamma^{S,R}$ (as long as all outputs in the game are included, which is the case for the definition of a min-tradeoff function). 

To conclude, note that the right-hand side of~\eqref{eq:ghz-vn} is a convex function of $\omega$, hence it is at least its tangent at any point.
\end{proof}

Recall that our goal is to generate a large amount of randomness by requiring a circuit to play multiple copies of the game $\GHZ$ in parallel. Towards this, we introduce a partially derandomized variant of the repeated game, where the inputs are chosen according to a very biased distribution in order to save on the randomness required to generate them. The resulting game is a direct analogue of the protocol for randomness expansion from the CHSH game given in~\cite{dupuis2018entropy}.

\begin{definition}\label{def:drand-repeated-game}
Let  $r\geq 1$ be an  integer. Let $p,\gamma \in[0,1]$. Let $\Gamma$ be a tuple of sets as in Definition~\ref{def:stretched-game}. Let $\GHZ_{r,\Gamma,p,\gamma}^{S,R}$ denote the $r$-fold repetition, as in Definition~\ref{def:n-rotated}, of the rotated, stretched game $\GHZ$ with the following two modifications:
\begin{itemize}
\item The $r$ queries, instead of being sampled independently, are chosen according to the following distribution: first, select a subset $S\subseteq \{1,\ldots,r\}$ by including each element independently with probability $p$; second, select queries $x^{(1)},\ldots,x^{(r)}\in X$ such that $x^{(i)}$ is sampled as in $\GHZ$ when $i\in S$, and $x^{(i)} \leftarrow \ol{x}$ for $i\notin S$, where $\ol{x}$ is an arbitrary, but fixed, query in $\GHZ$;
\item It is only required that the win condition is satisfied for a fraction at least $(1-\gamma)$ of the tuples of answers $a^{(i)}$ such that $i\in S$ (there is no requirement for $i\notin S$).
\end{itemize}
\end{definition}

Corollary~\ref{cor:min-ghz} shows that any sufficiently successful strategy in the (rotated, stretched) game $\GHZ$ must generate outputs that contain a constant amount of entropy. It is therefore natural to expect that a strategy for the repeated game from Definition~\ref{def:drand-repeated-game} should generate a linear (in the number of repetitions) amount of entropy. The difficulty is to obtain a bound on the entropy generated, conditioned on having produced outputs that satisfy the win condition of the game, but without placing an implicit assumption on the intrinsic winning probability of the strategy (which would be difficult to estimate). Moreover, the fact that the strategy involves all players simultaneously measuring parts of the same entangled state introduces correlations that prevent a direct treatment using techniques appropriate for the simpler case of i.i.d.\ (identically and independently distributed) outputs. 

The Entropy Accumulation Theorem, as applied in~\cite{arnon2018practical}, is designed specifically to address these difficulties, and indeed guarantees that the repeated game introduced in Definition~\ref{def:drand-repeated-game} generates a linear amount of min-entropy. Here we use the improved results from~\cite{dupuis2018entropy},\footnote{The results in~\cite{arnon2018practical,dupuis2018entropy} apply to a much more general scenario, and in particular allow one to prove bounds on the entropy generated by an arbitrary sequential process, as long as it satisfies a certain Markov condition. Our setting, which considers parallel repetition, is easier, and trivially satisfies the required Markov condition.} that give good bounds even when the ``test probability'' $p$ from Definition~\ref{def:drand-repeated-game} can be very small. 

\begin{lemma}\label{lem:rand-stab}
Let $r\geq 1$, $p,\gamma\in[0,1]$, and $\eps>0$. Let $(\rho,\{M_{x_j}\})$, where $\rho$ is a density matrix on the player's private registers  together with an ancilla register $\reg{E}$, be a strategy for the $(3r)$ players in $\GHZ_{r,\Gamma,p,\gamma}^{S,R}$ that succeeds with probability at least $\eps$. Let $Q=(Q^{(1)},\ldots,Q^{(r)})$ (resp. $A=(A^{(1)},\ldots,A^{(r)})$) be random variables associated with the players' answers in each copy of $\GHZ$; note that each $Q^{(i)}$ (resp. $A^{(i)}$) is itself a $3$-tuple. Let $\rho_{\reg{AQE}}^s$ denote the state of $\reg{AQE}$ conditioned on the players succeeding in the game (the players' private registers are traced out). Then 
\begin{equation}\label{eq:entropy-r-round}
\Hmin^\eps(A|QE)_{\rho^s} \geq f_{\GHZ}(1-\gamma) r - O\Big(\frac{r}{\sqrt{p}} \Big(\ln\frac{1}{\eps \Tr(\rho^s) }\Big)^{1/2}\Big)\;.
\end{equation}
\end{lemma}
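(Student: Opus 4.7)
The plan is to invoke the Entropy Accumulation Theorem (EAT) in the improved form of~\cite{dupuis2018entropy}, whose second-order term is tuned precisely to the small-test-probability regime we are in. The overall template is the one developed in~\cite{arnon2018practical,dupuis2018entropy} for the sequential CHSH-based randomness protocol; our contribution is to verify that the parallel repeated GHZ game of Definition~\ref{def:drand-repeated-game} fits their abstract machinery with Corollary~\ref{cor:min-ghz} as the single-round entropy bound.

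First, I would recast the strategy as a sequential process in $r$ rounds. Although the $3r$ players measure a single shared state, one can equivalently fix the state and the adversary's register $\reg{E}$ upfront and then expose the queries $Q^{(i)}$, apply the associated measurements, and record the outcomes $A^{(i)}$ one round at a time, for $i=1,\ldots,r$. Because the full joint state is preset and the queries across rounds are independent, the Markov chain condition $A^{(<i)} \leftrightarrow Q^{(<i)}\reg{E} \leftrightarrow Q^{(i)}$ required by the EAT holds automatically. I would also introduce a classical test register $C^{(i)}\in\{0,1,\perp\}$ that records whether round $i$ was included in the test subset $S$ and, if so, whether it was won.

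Second, I would promote Corollary~\ref{cor:min-ghz} to an affine min-tradeoff function $f$ on distributions over $\{0,1,\perp\}$ suited to the biased test distribution. Following the construction of~\cite{dupuis2018entropy}, one takes the single-round tangent $g_{p_s}$ of Corollary~\ref{cor:min-ghz} at $p_s = 1-\gamma$, reweights it by $p$ on the two tested outcomes, and extends it linearly across the no-test outcome. This produces an affine function on the probability simplex whose value on any distribution consistent with observing a fraction $(1-\gamma)$ of won test rounds is at least $f_{\GHZ}(1-\gamma)$, and whose Lipschitz constant with respect to the infinity norm is $O(1/p)$; this Lipschitz constant is precisely what generates the $1/\sqrt{p}$ factor in the second-order correction.

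Third, I would apply the EAT to the acceptance event $\Omega$ (at most a $\gamma$-fraction of test rounds fail), noting that $\Tr(\rho^s) \geq \eps$. The improved bound of~\cite{dupuis2018entropy} then yields
\begin{equation*}
\Hmin^{\eps}(A|QCE)_{\rho^s} \;\geq\; f_{\GHZ}(1-\gamma)\, r \;-\; \frac{c}{\sqrt{p}}\sqrt{r\,\ln\tfrac{1}{\eps\,\Tr(\rho^s)}}
\end{equation*}
for a universal constant $c$. Dropping $C$ from the conditioning (it is a deterministic function of $Q$ together with the public win verdict) yields exactly~\eqref{eq:entropy-r-round}. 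The main technical obstacle is verifying that the derandomized EAT of~\cite{dupuis2018entropy}, phrased there for CHSH, transfers verbatim to the rotated, stretched GHZ setting: all dependencies on the game enter only through the min-tradeoff function, so once Corollary~\ref{cor:min-ghz} is in hand the rest is careful but routine bookkeeping of the Markov condition and the Lipschitz scaling.
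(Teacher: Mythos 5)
Your proposal is correct and follows essentially the same route as the paper, which simply invokes~\cite[Theorem 6.1]{dupuis2018entropy} verbatim with the CHSH min-tradeoff function replaced by the one from Corollary~\ref{cor:min-ghz} (noting that the slope at $1-\gamma$ is bounded by a universal constant). Your write-up merely makes explicit the steps the paper delegates to that citation — the sequential recasting and Markov condition, the $p$-reweighted crossover min-tradeoff function whose scaling yields the $1/\sqrt{p}$ second-order term, and the removal of the test register from the conditioning — so there is nothing to correct.
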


Note that the lower bound provided in~\eqref{eq:entropy-r-round} is non-trivial as soon as $p=\Omega(\log N/r)$ and $\eps$ is at least inverse-polynomially large (smaller $\eps$ is also possible, but requires a larger $p$).  

\begin{proof}
The proof is identical to the proof of~\cite[Theorem 6.1]{dupuis2018entropy}, that applies to the CHSH game. The only change needed is to use the min-tradeoff function from Corollary~\ref{cor:min-ghz} instead of the min-tradeoff function $g^*$ for the CHSH game used in~\cite{dupuis2018entropy}. The bound~\eqref{eq:entropy-r-round} follows from the bound stated in~\cite[Theorem 6.1]{dupuis2018entropy} by noting that $\frac{df}{d\omega}(1-\gamma)$ is bounded by a universal constant.  
\end{proof}

\subsection{Randomness generation from low-depth circuits}
\label{sec:randomness-from-circuit}

Using the technique to embed a stabilizer game into a circuit game described in Section~\ref{sec:circuit-games} we can leverage the randomness generation results from the previous section to obtain a family of circuit games for certified randomness expansion. First we define the circuit games that we consider by introducing a randomness-efficient, noise-tolerant modification of the game from Definition~\ref{def:circuit-game}. Even though we will eventually instantiate the definition with $\game = \GHZ$, we give the definition for a general stabilizer game. 

\begin{definition}\label{def:randomness-game}
Let $\game$ be an $(\ell,k,m)$ stabilizer game and $N\geq 1$ an integer.  Let $L,B,R_{in},r$ be parameters that satisfy the constraints~\eqref{eq:parameters} for some constant $\eta>0$. Let $p,\gamma\in [0,1]$. The circuit game $\widetilde{G}_{\game,N,r,p,\gamma}$ is defined as the game $G_{\game,N,r}$ with the following modifications:
\begin{enumerate}
\item The input pattern $\pattern$ is sampled according to the distribution $\widetilde{D}^{(r)}$ from Lemma~\ref{lem:good-rgeneral-derandomized};
\item The tuple of queries  $(x^{(1)},\ldots,x^{(r)})$ for $\game$ is sampled by first, selecting a subset $S\subseteq \{1,\ldots,r\}$ by including each element independently with probability $p$; second, selecting queries $x^{(1)},\ldots,x^{(r)}\in X$ such that $x^{(i)}$ is sampled as in $\game$ when $i\in S$, and $x^{(i)} \leftarrow \ol{x}$ for $i\notin S$, where $\ol{x}$ is an arbitrary, but fixed, query in $\game$;
\item It is only required that the win condition is satisfied for a fraction at least $(1-\gamma)$ of the tuples of answers $a^{(i)}$ such that $i\in S$ (there is no requirement for $i\notin S$).
\end{enumerate}
\end{definition}

As shown in Lemma~\ref{lem:good-rgeneral-derandomized}, it is possible to sample an input pattern from $\widetilde{D}^{(r)}$ using $O(\log^2 N)$ random bits. In addition, it is possible to sample inputs as in Definition~\ref{def:randomness-game} using $O(H(p)r)$ random bits. So, if $p = \Theta(\log N/r)$, then it is possible to sample inputs in $\widetilde{G}_{\game,N,r,p,\gamma}$ using $O(\log^2 N)$ random bits. (To this count, one may add $O(\log^3 N)$ random bits, sufficient to extract near-uniform random bits from the output of the circuit by using a seed-efficient randomness extractor~\cite{de2012trevisan}.)

The following theorem places a lower bound on the amount of randomness generated by a circuit that succeeds with non-negligible probability in the circuit game from Definition~\ref{def:randomness-game}, when the stabilizer game $\game$ is instantiated as the $\GHZ$ game from Definition~\ref{def:ghz}.

\begin{theorem}\label{thm:main-randomness}
Let $r,p,\gamma,N$ be as in Definition~\ref{def:randomness-game}, for some $\eta>0$. Let $\eta'>0$. Let $D$ be such that $D \leq c \log N$, for some sufficiently small constant $c>0$, depending on $\eta,\eta'$.
Let $\m C$ be a (classical or quantum) circuit with gates of constant fan-in and depth at most $D$. Assume that ${\m C}$ succeeds in the game $\widetilde{G}_{\GHZ,N,r,p,\gamma}$ with probability at least $\delta$, for some $\delta = \Omega(N^{-\eta'})$. Suppose the circuit is executed on its input, described by random variable $I$, as well as a an auxiliary state $\ket{\psi}_\reg{CE}$ such that the circuit acts on register $\reg{C}$, and the register $\reg{E}$ is available to an adversary. Let $O$ be a random variable that represents the circuit output, $O=\mathcal{C}(I)$. Let $\rho^s_{\reg{OIE}}$ denote the state of the inputs, outputs, and side information, conditioned on the circuit winning in the circuit game. Then there exists an $\eta'''>0$ such that for any $\eps = \Omega(N^{-\eta'''})$, it holds that
\begin{equation}\label{eq:min-lb}
\Hmin^\eps(O|IE)_{\rho^s} \geq \big(\kappa- f(\gamma)\big) r - O\Big(\frac{r}{\sqrt{p}}\Big( \ln\frac{1}{\eps \Tr(\rho^s) }\Big)^{1/2}\Big)\;,
\end{equation}
where the implicit constant depends on $\eta,\eta',\eta''$.
\end{theorem}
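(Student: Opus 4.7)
The strategy is to reduce the circuit game to the repeated rotated stretched GHZ game of Definition~\ref{def:drand-repeated-game} using the causality-respecting-pattern machinery of Section~\ref{sec:game-soundness}, and then invoke the entropy accumulation bound of Lemma~\ref{lem:rand-stab}.

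First I would extract from $\m C$ its circuit specification $\spec = (L_f, \bad_{in}, \emptyset)$. By Lemma~\ref{lem:lightcones}, choosing $c$ small enough (as a function of $\eta, \eta'$) guarantees that $\spec$ is $(B, R_{in}, 0)$-bounded with $B = O(N^\eta)$ and $R_{in}$ chosen so that $R_{in}/N^2$ is a small inverse polynomial in $N$. Therefore Lemma~\ref{lem:good-rgeneral-derandomized} applies: an input pattern $\pattern$ drawn from $\widetilde D^{(r)}$ is $\spec$-causal with probability $1 - O(N^{-\eta})$. Let $C$ denote this event. Since by assumption the circuit wins with probability at least $\delta = \Omega(N^{-\eta'})$, by choosing $\eta > \eta'$ (which is why we need $c$ to depend on both exponents) we have $\Pr[W \cap C] \geq \delta/2$, where $W$ is the event that the circuit wins the circuit game.

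Next I would fix any pattern $\pattern$ in the support of $\widetilde D^{(r)}$ for which $C$ holds. By Lemma~\ref{lem:circuit-soundness} together with Remark~\ref{rk:derandomized-soundness}, $\m C$ induces an $(r\ell)$-player strategy $\tau_\pattern = (\rho_\pattern, \{M_{x_j}\}_\pattern)$ in $\GHZ^{S,R}_{r,\Gamma,p,\gamma}$, where $\Gamma$ is determined by the stars of $\pattern$. Crucially, this reduction is a deterministic local coarse-graining of the circuit computation restricted to wires along $\Gamma$: the players' post-measurement classical outputs exactly reproduce the circuit outputs at those wires, on any query tuple $(x^{(1)}, \ldots, x^{(r)})$ from the circuit-game's input distribution. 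In particular, conditioning on $C$ and on $\pattern$, the joint distribution of inputs, outputs along $\Gamma$, and adversarial side information $\reg{E}$ in the circuit game coincides with that generated by $\tau_\pattern$ playing $\GHZ^{S,R}_{r,\Gamma,p,\gamma}$ and simultaneously, the circuit winning event becomes exactly the nonlocal game winning event.

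Now I would invoke Lemma~\ref{lem:rand-stab} on the strategy $\tau_\pattern$, with auxiliary register consisting of $\reg{E}$ and the classical register storing $\pattern$ itself. Let $A_\Gamma$ denote the outputs in the star locations and let $Q$ denote the game inputs; Lemma~\ref{lem:rand-stab} yields, for every $\pattern$ satisfying $C$,
\begin{equation*}
\Hmin^{\eps'}(A_\Gamma | Q E \pattern)_{\rho^s_\pattern} \;\geq\; f_\GHZ(1-\gamma)\, r - O\Big(\tfrac{r}{\sqrt{p}} \big(\ln\tfrac{1}{\eps' \Tr(\rho^s_\pattern)}\big)^{1/2}\Big),
\end{equation*}
where $\rho^s_\pattern$ is the subnormalized conditional state given that $\m C$ wins. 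One then averages over $\pattern \mid C$ using the standard convex-combination property of smooth min-entropy (for classical side information this is simply a weighted max), and absorbs the loss $\Pr[\neg C] = O(N^{-\eta})$ into a slight increase of the smoothing parameter. Finally, since the full circuit output $O$ contains $A_\Gamma$ as a deterministic subset of its entries, $\Hmin^\eps(O | I E) \geq \Hmin^\eps(A_\Gamma | Q E \pattern)$ (here $I = (\pattern, Q)$), yielding the claimed bound with $\kappa = f_\GHZ(1)$ and an appropriate $\eta'''$ encoding the error terms.

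The main technical obstacle is the bookkeeping around the two conditioning events $W$ and $C$: one must ensure that conditioning on $C$ changes $\Tr(\rho^s)$ by at most a constant factor (ensured by the gap $\delta \gg N^{-\eta}$), and that the resulting sub-normalized state still admits the EAT bound of Lemma~\ref{lem:rand-stab}. A secondary subtlety is that the reduction from circuit to strategy depends on $\pattern$, so Lemma~\ref{lem:rand-stab} must be applied per-pattern and then aggregated; this works cleanly because $\pattern$ is classical and independent of $\reg{E}$, so it can be absorbed into the conditioning register without changing the min-entropy analysis.
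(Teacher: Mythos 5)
Your overall route is the same as the paper's: extract the circuit specification $\spec$ from $\m C$, use Lemma~\ref{lem:lightcones} to make $\spec$ $(B,R_{in},0)$-bounded, use Lemma~\ref{lem:good-rgeneral-derandomized} to argue the sampled pattern is $\spec$-causal with high probability, reduce via Lemma~\ref{lem:circuit-soundness} and Remark~\ref{rk:derandomized-soundness} to a strategy in $\GHZ^{S,R}_{r,\Gamma,p,\gamma}$, and finish with Lemma~\ref{lem:rand-stab}. However, there is a genuine gap in your aggregation step. You invoke Lemma~\ref{lem:rand-stab} ``for every $\pattern$ satisfying $C$,'' but that lemma carries the hypothesis that the strategy succeeds with probability at least $\eps$, and its error term scales as $\bigl(\ln\frac{1}{\eps\,\Tr(\rho^s_\pattern)}\bigr)^{1/2}$. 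For a fixed $\spec$-causal pattern, the circuit's conditional success probability (and hence $\Tr(\rho^s_\pattern)$) can be arbitrarily small --- even zero --- despite the global success probability being $\delta$; the overall $\delta=\Omega(N^{-\eta'})$ does not control the per-pattern conditional probability. So the per-pattern bound you write down is not available for all causal patterns, and the subsequent ``average over $\pattern\mid C$'' step cannot yield the stated uniform bound: the patterns with tiny conditional success probability would contribute uncontrolled (or undefined) error terms. Your parenthetical that ``conditioning on $C$ changes $\Tr(\rho^s)$ by at most a constant factor'' addresses only the global event $C$, not this per-pattern issue.

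The missing idea, which is precisely what the paper's proof supplies, is a threshold argument on the conditional success probability: fix $\delta'>0$ and split causal patterns into those on which $\m C$ succeeds with conditional probability at least $\delta'$ (where Lemma~\ref{lem:rand-stab} applies with a controlled error term) and the rest. By Bayes' rule, the total weight, conditioned on winning, of patterns in the second class is at most $\delta'/\delta$, and this small weight is absorbed into the smoothing parameter $\eps$ of the smooth min-entropy (together with the $O(N^{-c'}\delta^{-1})$ weight of non-causal patterns); choosing $\delta'$ as an appropriate power of $\delta$ (the paper takes $\delta'=\sqrt{\delta}$) then fixes the exponent $\eta'''$ in the statement. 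Without this step your argument does not close; with it, your plan coincides with the paper's proof.
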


\begin{proof}
Let \spec\ be the circuit specification obtained from the circuit $\m C$. It follows from  
Lemma~\ref{lem:lightcones} that by  
choosing the constant $c$  small enough, 
we force \spec\ to be $(B,R_{in},0)$-bounded where 
$R_{in}=O(N^{2-\eta})$ and $B= O(N^{\eta})$. 
With this choice of parameters, it follows from Lemma~\ref{lem:good-rgeneral-derandomized} and Lemma~\ref{lem:good-rgeneral} that a pattern \pattern\ sampled from $\widetilde{D}^{(r)}$ is \spec-causal with probability at least $1-O(N^{-c'})$, for some $c'> 0$. 
Since $\m C$ succeeds in $\widetilde{G} = \widetilde{G}_{\GHZ,N,r,p,\gamma}$ with probability $\delta$, it follows that conditioned on success of $\m C$, the pattern $\pattern$ chosen as part of the input is \spec-causal with probability at least $1-O(N^{-c'}\delta^{-1})$.


Let $\delta'>0$ and let $\pattern$ be a pattern which is \spec-causal on which ${\m C}$ succeeds with probability at least $\delta'$ when the pattern \pattern\ is fixed. For any such pattern, Lemma~\ref{lem:circuit-soundness} together with Remark~\ref{rk:derandomized-soundness} imply that the circuit's behavior can be simulated by a strategy for the players in the stretched rotated game $\GHZ_{r,\Gamma,p,\gamma}^{S,R}$, for some collection of sets $\Gamma$ that is determined from $\spec$. Using Lemma~\ref{lem:rand-stab} it follows that, conditioned on the input $I$ to the circuit being of the form $(\pattern,x)$ for some query $x\in  X$, the lower bound~\eqref{eq:entropy-r-round} holds. 

If $\pattern$ is such that $\m C$ succeeds with probability less than $\delta'$, then there is no bound on the min-entropy. However, the probability that this happens, conditioned on winning, is at most $\delta'/\delta$. (To see this, apply Bayes' rule directly.) Choosing $\delta' = \sqrt{\delta}$ we get an $\eta'''$, depending on $\eta'$, such that~\eqref{eq:min-lb} holds.
\end{proof}


\bibliography{commute}

\end{document}